

\documentclass{article}

\usepackage[a4paper,
            bindingoffset=0.2in,
            left=1in,
            right=1in,
            top=1in,
            bottom=1in,
            footskip=.25in]{geometry}



\usepackage{url} 
\usepackage{hyperref}

\usepackage[T1]{fontenc}
\usepackage[utf8]{inputenc} 

\usepackage{amsmath}
\usepackage{amssymb}
\usepackage{amsthm}
\usepackage{mathtools}

\usepackage[capitalise, noabbrev]{cleveref}

\usepackage{natbib}
\bibliographystyle{abbrvnat}
\usepackage{graphicx}

\usepackage{float}
\usepackage{bm}
\usepackage{accents} 
\usepackage{pdfpages}
\usepackage{esint}
\usepackage{multirow}
\usepackage{xcolor}
\usepackage{enumitem}
\usepackage{caption}
\usepackage{pdfsync}
\usepackage{subcaption}


\usepackage{stmaryrd}

\usepackage[colorinlistoftodos]{todonotes}
\usepackage{dsfont}
\usepackage{algpseudocode}

\theoremstyle{plain}
\newtheorem{thm}{Theorem}
\newtheorem{prop}{Proposition}
\newtheorem{corollary}{Corollary}
\newtheorem{lemma}{Lemma}

\theoremstyle{definition}

\newtheorem{definition}{Definition}
\newtheorem{remark}{Remark}
\newtheorem{example}{Example}
\newtheorem*{properties}{Properties}



\floatstyle{ruled}
\newfloat{algorithm}{tbp}{loa}
\providecommand{\algorithmname}{Algorithm}
\floatname{algorithm}{\protect\algorithmname}



\newcommand{\R}{\mathbb{R}}



\newcommand{\E}{\mathbb{E}}
\newcommand{\var}{\mathrm{Var}}




\newcommand{\ind}{\mathds{1}}
 \newcommand{\smallo}{{\scriptscriptstyle\mathcal{O}}} 

\DeclareMathOperator*{\argmin}{argmin}

\newcommand{\crl}[1]{\left\{#1 \right\}}
\newcommand{\round}[1]{\left (#1 \right)}
\newcommand{\sqrd}[1]{\left[#1 \right]}

\newcommand\inner[2]{\langle #1, #2 \rangle}

\newcommand{\Liminf}{\mathrm{Liminf}}


\begin{document}
	
    \title{Gradient-free optimisation via integration}

\author{\thanks{c.andrieu@bristol.ac.uk} Christophe Andrieu, \thanks{nicolas.chopin@ensae.fr} Nicolas Chopin, \thanks{ettore.fincato@bristol.ac.uk} Ettore Fincato, \thanks{mathieu.gerber@bristol.ac.uk} Mathieu Gerber\\
        \small $^{*,,\ddagger,\S}$ School of Mathematics, University of Bristol, UK\\
        \small $^{\dagger}$ ENSAE, Institut Polytechnique de Paris, France\\
}

	
\maketitle
	\begin{abstract}

		We develop and analyse an approach to optimize functions $l\colon \mathbb{R}^d
			\rightarrow \mathbb{R}$ not assumed to be convex,  differentiable or even
		continuous. The algorithm belongs to the class of model-based search methods. The idea is to fit recursively $l$ to a parametric family of distributions,  using a Bayesian update followed by a reprojection back
		onto the chosen family. Remarkably, reprojection in our scenario boils down to computing
		expectations, which can be simply approximated through Monte Carlo. We
		show that when the family of distributions is appropriately chosen this approach can be interpreted as an implicit time-inhomogeneous
		gradient descent algorithm on a sequence of smoothed approximations of $l$, providing a  route to establishing convergence. We
		establish new results for generic inhomogeneous gradient descent algorithms, which we specialise to the model-based search algorithm in the Gaussian scenario. We illustrate the performance of the algorithm on a
		challenging classification task in machine learning.

	\end{abstract}

	\begin{keywords}  
Gradient-Free Optimisation, Bayesian Updating, Variational methods, Smoothing
	\end{keywords}


\vspace{0.5cm}

All notation can be found in \cref{sec:notation}.

\section{Introduction} \label{sec:introduction}

\subsection{Motivation}

Let $l\colon\mathsf{X} \coloneq \mathbb{R}^{d}\rightarrow\mathbb{R}$ be a
lower semi-continuous,  potentially non-differentiable function such that
$\inf_{x\in\mathbb{R}^{d}}l(x)>-\infty$ and hence $\arg\min_{x\in
		K}l(x)\neq\emptyset$ for any compact set $K \subset \mathsf{X}$. This
paper is concerned with gradient-free algorithms to minimize such a function $l$,
provided it can be evaluated pointwise.

The algorithm we study is based on the following central idea. Let $\phi$ be the density of the
standard normal distribution $\mathcal{N}(0,\mathbf{I}_d)$, and let
\begin{equation}
	\label{eq:gaussian_family}
	\pi_{\theta,\gamma}(x)
	\coloneq \frac{1}{\gamma^{d/2}}\phi\left(\frac{1}{\sqrt{\gamma}}(x-\theta)\right)
\end{equation}
for $\theta\in\Theta\coloneq\mathbb{R}^{d}$, $\gamma>0$; that is, the density
of $\mathcal{N}(\theta, \gamma \mathbf{I}_d)$ distribution.
Then, for a sequence
$\{\gamma_{n}\geq0,n\in\mathbb{N}\}$ such that $\gamma_{n}\downarrow0$, define
sequentially the families of distributions $\{\pi_{n},n\in\mathbb{N}\}$ and
$\{\tilde{\pi}_{n},n\in\mathbb{N}\}$ as in Algorithm~\ref{alg:main_det_algo}.

\begin{algorithm}[h]

	\caption{Gradient-Free Ideal Algorithm}
	\label{alg:main_det_algo}

	\begin{algorithmic}
		\Require objective function $l$, initial parameter $\theta_0$, stepsizes
		$(\gamma_n)_{n\geq 0}$.
		\State $\pi_0 \gets \pi_{\theta_0, \gamma_0}$ \Comment{initial distribution}
		\While{$n \geq 0$}
		\State $\tilde{\pi}_{n+1}(x)\propto \pi_{n}(x)\exp\left\{-l(x)\right\}$
		\Comment{generalised Bayesian rule}
		\State $\theta_{n+1}\in \argmin_{\theta\in \Theta} \mathrm{KL}(\tilde{\pi}_{n+1}, \pi_{\theta,\gamma_{n}})$
		\Comment{project with Kullback-Leibler divergence}
		\State $\pi_{n+1} \gets\pi_{\theta_{n+1},\gamma_{n+1}}$ \Comment{distribution shrinking}

		\EndWhile
		\Ensure sequence of distributions $\tilde{\pi}_n$ and parameters $\theta_n$.
	\end{algorithmic}

\end{algorithm}

An iteration of Algorithm~\ref{alg:main_det_algo} therefore consists of the
application of Bayes' rule, where $l$ plays the role of a negative
log-likelihood and $\pi_{n}$ that of the prior distribution, followed by a
``projection'' onto the normal family $\pi_{\theta, \gamma_{n}}$, using the
Kullback-Leibler divergence as a criterion.
As illustrated in Figure~\ref{fig:varying_gaussians}, Bayes' rule tilts
$\pi_n=\mathcal{N}(\theta_n,\gamma_n\mathbf{I}_d)$ towards regions where $l$ is
small; the resulting ``posterior'' distribution $\tilde{\pi}_{n+1}$ is then
approximated by a Gaussian $\mathcal{N}(\theta_{n+1},\gamma_n\mathbf{I}_d)$.
Combined with the reduction of variance,
$\pi_{n+1}=\mathcal{N}(\theta_{n+1},\gamma_{n+1}\mathbf{I}_d)$,  
the sequence of means $\{\theta_{n},n\in\mathbb{N}\}$ is 
expected to converge to a
local minimum. (This algorithm can be extended to the scenario where unbiased noisy measurements of $l$ are available, see  Section~\ref{sub:accounting_noise}.)

\begin{figure}[H]
	\centering
	\includegraphics[scale=0.30]{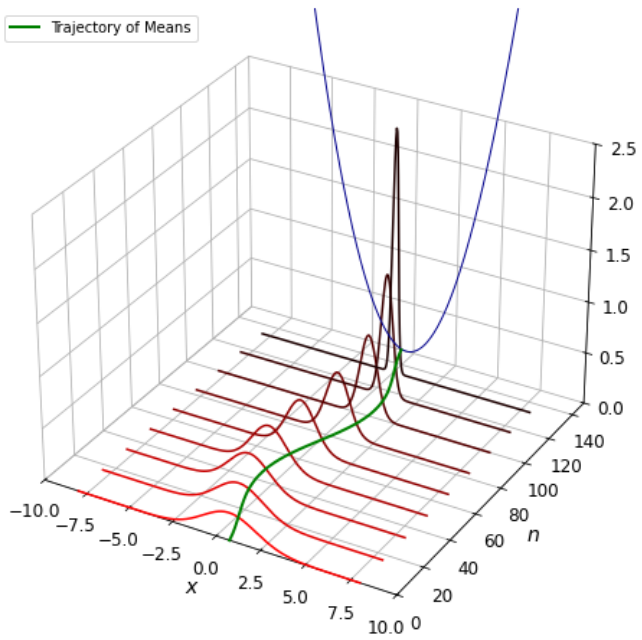}
	\caption{Illustration of the  algorithm in the Gaussian case, when
		$l(x)=x^2$ (blue line in the background). The light red curve is the
		initial Gaussian density ($n=0$). The red to black curves are the Gaussian
		densities $\pi_{n}$, which are recursively obtained by KL minimisation.
		(The mean of these Gaussian densities is plotted in green.)
		We observe that these Gaussian densities are progressively attracted to low
		values of $l$.
		\label{fig:varying_gaussians}}
\end{figure}

In practice, one may approximate Algorithm~\ref{alg:main_det_algo} with a cloud of $N\in\mathbb{N}$ random samples $\{X_{n}^{i},i\in\llbracket N\rrbracket\}$
propagated along the iterations; 
see \cref{sec:experiments} for details.
We focus on establishing convergence of the ideal Algorithm~\ref{alg:main_det_algo}, corresponding to the scenario
$N\rightarrow\infty$.
We see the study of such ideal algorithms
as a prerequisite to the study of their
implementable versions, seen as perturbations of the ideal algorithms.

\subsection{Connection with gradient descent}

In a standard statistical context, repeated application of Bayes' rule is
known to lead to a concentration phenomenon around particular maximum points or
the posteriors, under general conditions \citep{vanderVaart2012}.

The update considered here differs from standard Bayesian updating in that it
involves a reprojection step, therefore necessitating a new approach to
establishing ability of the algorithm to find minima of $l$. This
reprojection step is motivated by practical considerations: it
circumvents the need to propagate the sequence of distributions obtained by
repeated use of Bayes' update and greatly facilitates implementation (as we
elaborate below). Notably, the reprojection step also offers an entry point for the theoretical analysis of the algorithm. The
crucial observation allowing us to prove convergence of
Algorithm~\ref{alg:main_det_algo}, in the sense that $\pi_{n}$ concentrates on local
minima of $l$, is that it implicitly implements a steepest descent
algorithm tracking the minima of a sequence of differentiable approximations
$\{l_n\colon \mathbb{R}^d \rightarrow \mathbb{R},n\geq1\}$ of $l$.
When such approximations converge to $l$, validity of the procedure
should ensue.

More precisely, the reprojection step can be shown to correspond to so-called
moment matching, a fact extensively used in variational inference
\citep{wainwrightjordan2008}. Taking into account that in the present setup
$\theta_{n+1}$ is the first order moment of $\pi_{n+1}$, or mean, moment
matching takes the form

\begin{equation}
	\label{eq:intro-moment-match}
	\theta_{n+1}
	= \int x\tilde{\pi}_{n+1}(x){\mathrm d}x
	=\theta_{n}+\frac{\int(x-\theta_{n})\exp\big\{-l(x)\big\}\phi\big(\frac{x-\theta_{n}}{\sqrt{\gamma_{n}}}\big){\mathrm d}x}{\int\exp\big\{-l(x)\big\}\phi\big(\frac{x-\theta_{n}}{\sqrt{\gamma_{n}}}\big){\mathrm d}x}\,.
\end{equation}


It is the evaluation of these expectations which in practice requires a Monte
Carlo approximation with weighted samples. For $(\theta,\gamma)\in\Theta\times\mathbb{R}_{+}$, let
\begin{equation}
	\label{eq:Laplace_gaussian}
	l_\gamma(\theta) \coloneq -\log\left(\int 
    \frac{\exp\left\{-l(x)\right\}}{\gamma^{d/2}}
    \phi\Bigl(\frac{x-\theta}{\sqrt{\gamma}}\Bigr)\mathrm{d} x\right),
\end{equation}
and for $n\in\mathbb{N}$ and $\theta\in\Theta$ let $l_n(\theta):=l_{\gamma_{n}}(\theta).$
Then one can write \eqref{eq:intro-moment-match} in the familiar
form
\begin{equation}
	\label{eq:non-homogeneous-descent-generic}
	\theta_{n+1}=\theta_{n}-\gamma_{n}\nabla l_n(\theta_{n})\,,
\end{equation}
and recognize a time inhomogeneous steepest-descent algorithm tracking the
sequence of stationary points of the sequence of functions
$\{l_n,n\in\mathbb{N}\}$, again smoothed versions of $l$. It is
remarkable that while this interpretation provides us with an additional
rational for Algorithm~\ref{alg:main_det_algo} and a route to establishing its
convergence for a large class of non-differentiable functions $l$ (the
subject of next sections), implementation does
not require differentiation but instead integration.

The arguments we provide in \cref{sec:Convergence:-deterministic-scena,sec:Laplace_functionals,sec:convergence_of_algorithm} lead
to the following convergence result on the inhomogeneous gradient descent
in~\eqref{eq:non-homogeneous-descent-generic}.


\begin{thm}
	\label{thm: general conv gaussian}
	Let $l:\R^d\to \R$ be lower-bounded, strongly lower semi-continuous and
	assume there exists $C_l\in(0,\infty)$ such that
	$|l(\theta')-l(\theta)|\leq C_l+C_l\|\theta'-\theta\|^2$ for all
	$\theta,\theta'\in\R^d$. Let $\{\gamma_n,\,n\in\mathbb{N}\}$ be a sequence
	with $\gamma_n=n^{-\beta}$ for all $n\geq 1$, for some $\beta\in (0,1)$.
	Let $\crl{\theta_n,\, n\in \mathbb{N}}$ be the output of recursion \eqref{eq:non-homogeneous-descent-generic}, corresponding to Algorithm~\ref{alg:main_det_algo}. Then, there exists a subsequence $\crl{\theta_{n_k},\, k\in \mathbb{N}}$ of $\{\theta_n,\, n\in \mathbb{N}\}$ such that $\lim_{k\rightarrow\infty}\|\nabla l_{n_k}(\theta_{n_k})\|=0$.
\end{thm}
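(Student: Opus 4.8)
The plan is to argue by contradiction, treating (\ref{eq:non-homogeneous-descent-generic}) as a steepest-descent recursion on the slowly-varying family $\{\bar h_n\}$ and showing that if $\|\nabla h_n(\theta_n)\|$ stayed bounded away from $0$, a Lyapunov functional built from $\bar h_n(\theta_n)$ would have to decrease faster than the drift caused by shrinking $\gamma_n$ allows. First I would recast the recursion probabilistically. Writing $p_\gamma(\cdot\mid\theta)\propto\exp(-l)\,\mathcal N(\theta,\gamma I_d)$ and $m_\gamma(\theta):=\int x\,p_\gamma(\mathrm dx\mid\theta)$, differentiation under the integral sign gives $\nabla h_\gamma(\theta)=-\gamma^{-1}(m_\gamma(\theta)-\theta)$, so that (\ref{eq:non-homogeneous-descent-generic}) is exactly the moment-matching map $\theta_{n+1}=m_{\gamma_n}(\theta_n)$ of (\ref{eq:intro-moment-match}). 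I would then work with the renormalised objective
\[
\bar h_\gamma(\theta):=-\log\E\big[\exp(-l(\theta+\sqrt\gamma Z))\big],\qquad Z\sim\mathcal N(0,I_d),
\]
which has the same gradient as $h_\gamma$ (they differ by a $\theta$-independent constant), is bounded below by $\inf_x l(x)=:l_\star>-\infty$, and, being the logarithm of the Gaussian heat semigroup applied to $\exp(-l)$, satisfies the pointwise identity $\partial_\gamma\bar h_\gamma=\tfrac12\Delta\bar h_\gamma-\tfrac12\|\nabla\bar h_\gamma\|^2$. This identity is what quantifies the cost of changing the smoothing level.

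The key quantitative input is a smoothness estimate. A second differentiation gives $\nabla^2\bar h_\gamma(\theta)=\gamma^{-1}I_d-\gamma^{-2}\cov_{p_\gamma(\cdot\mid\theta)}(x)$, so the bound $\nabla^2\bar h_\gamma\preceq\gamma^{-1}I_d$ is automatic. The substitution $x=\theta+\sqrt\gamma z$ turns the posterior covariance into $\gamma\,\cov_q(z)$ with $q(z)\propto\exp(-l(\theta+\sqrt\gamma z))\phi(z)$; the quadratic-modulus assumption $|l(\theta')-l(\theta)|\le C_l+C_l\|\theta-\theta'\|^2$ makes the tilting factor sub-Gaussian and, as $\gamma\downarrow0$, asymptotically flat, from which I would deduce $\cov_{p_\gamma}(x)\preceq C_0(\gamma)\,\gamma I_d$ with $\limsup_{\gamma\downarrow0}C_0(\gamma)=1$ (strong lower-semicontinuity ensuring $q\Rightarrow\phi$). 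Hence $\|\nabla^2\bar h_\gamma\|\le\gamma^{-1}$ for all small $\gamma$, and the descent lemma applied to the fixed function $\bar h_n$ along the step $\theta_{n+1}-\theta_n=-\gamma_n\nabla h_n(\theta_n)$ yields $\bar h_n(\theta_{n+1})-\bar h_n(\theta_n)\le-\tfrac{\gamma_n}{2}\|\nabla h_n(\theta_n)\|^2$.

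The main step is the telescoping of $\bar h_n(\theta_n)$, where each increment splits as $\bar h_{n+1}(\theta_{n+1})-\bar h_n(\theta_n)=[\bar h_n(\theta_{n+1})-\bar h_n(\theta_n)]+\bar D_n$ with drift $\bar D_n:=\bar h_{n+1}(\theta_{n+1})-\bar h_n(\theta_{n+1})$. By the heat identity, $\bar D_n=\int_{\gamma_{n+1}}^{\gamma_n}\big(\tfrac12\|\nabla\bar h_s(\theta_{n+1})\|^2-\tfrac12\Delta\bar h_s(\theta_{n+1})\big)\,\mathrm ds$. The Laplacian part is controlled by the covariance bound through $\tfrac{(1-C_0)d}{s}\le\Delta\bar h_s\le\tfrac{d}{s}$, so it contributes at most $C\log M$ after summation to $M$, since $\sum_{n\le M}\log(\gamma_n/\gamma_{n+1})=\log(\gamma_{N_0}/\gamma_{M+1})=O(\log M)$. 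The nonnegative gradient part is at most $\tfrac12(\gamma_n-\gamma_{n+1})\sup_{s\in[\gamma_{n+1},\gamma_n]}\|\nabla\bar h_s(\theta_{n+1})\|^2$; a Lipschitz-in-$s$ estimate for $\nabla\bar h_s$ (again from the heat identity and the smoothness bound) replaces the supremum by $(1+o(1))\|\nabla h_{n+1}(\theta_{n+1})\|^2$, so this term equals $\tfrac{\gamma_n-\gamma_{n+1}}{\gamma_{n+1}}(1+o(1))$ times the descent quantity $a_{n+1}:=\tfrac{\gamma_{n+1}}{2}\|\nabla h_{n+1}(\theta_{n+1})\|^2$. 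Since $(\gamma_n-\gamma_{n+1})/\gamma_{n+1}\to0$, these contributions are eventually dominated by $\tfrac12 a_{n+1}$ and are absorbed, up to lower-order boundary terms, into half of the descent sum.

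Combining these, and using $\bar h_{M+1}(\theta_{M+1})\ge l_\star$, I obtain $\sum_{n=N_0}^M\gamma_n\|\nabla h_n(\theta_n)\|^2\le C+C'\log(M+1)$ for all large $N_0$. If the conclusion failed there would be $\epsilon>0$ and $N_1$ with $\|\nabla h_n(\theta_n)\|\ge\epsilon$ for $n\ge N_1$, whence the left-hand side exceeds $\epsilon^2\sum_{n=N_0}^M\gamma_n\asymp\epsilon^2 M^{1-\beta}$; as $\beta\in(0,1)$, $M^{1-\beta}$ dominates $\log M$ and the inequality is violated for large $M$. Hence $\liminf_n\|\nabla h_n(\theta_n)\|=0$, giving the desired subsequence. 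The step I expect to be genuinely hard is the uniform posterior-covariance bound $\cov_{p_\gamma}(x)\preceq C_0(\gamma)\gamma I_d$ with $C_0(\gamma)\to1$: since $l$ is neither convex nor continuous and only the weak quadratic-modulus condition is available, one cannot invoke Brascamp--Lieb, and this single estimate must simultaneously supply the smoothness constant $1/\gamma$ (hence the clean descent coefficient $\gamma_n/2$) and the $O(1/s)$ control of $\Delta\bar h_s$ that keeps the Laplacian part of the drift at the logarithmic scale. The secondary technical point is the $s$-Lipschitz control of $\nabla\bar h_s$ needed to absorb the gradient part of the drift.
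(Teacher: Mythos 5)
Your proposal follows the same skeleton as the paper's argument: the paper proves Theorem~\ref{thm: general conv gaussian} by checking that the Laplace functionals satisfy the hypotheses of a general result on time-inhomogeneous gradient descent (Theorem~\ref{thm:cv-descent-inho}), whose proof is exactly your contradiction-plus-telescoping scheme --- a descent inequality $h_n(\theta_{n+1})\le h_n(\theta_n)-\tfrac{\gamma_n}{2}\|\nabla h_n(\theta_n)\|^2$, a control of the drift $h_{n+1}(\theta_{n+1})-h_n(\theta_{n+1})$, and the divergence of $\sum_n\gamma_n$ for $\beta\in(0,1)$. Your derivation of the descent inequality from $\nabla^2 h_\gamma(\theta)=\gamma^{-1}I_d-\gamma^{-2}\mathrm{cov}_{p_\gamma}(x)\preceq\gamma^{-1}I_d$ is correct and is just another face of the paper's route (Corollary~\ref{cor: descent lemma gaussian}), which obtains the same constant from convexity of the tilted log-partition function via the Bregman divergence. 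Where you genuinely diverge is the drift: the paper (Lemma~\ref{lemma: time inhom condition}) proves the \emph{uniform-in-$\theta$, gradient-free} bound $\sup_\theta\big(h_{n+1}(\theta)-h_n(\theta)\big)\le\bar C\delta_n$ with $\delta_n=\smallo(\gamma_n)$, by comparing the two mollifiers directly ($\psi_{n+1}/\psi_n\le(\gamma_n/\gamma_{n+1})^{d/2}$) and evaluating the Gaussian integral $\int e^{C_l\|x-\theta\|^2}\psi_n(x-\theta)\,\mathrm{d}x=(1-2C_l\gamma_n)^{-d/2}$; this makes the drift summable against $\sum_n\gamma_n$ with no absorption step. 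Your heat-equation identity is an elegant alternative, but it forces you to control $\|\nabla\bar h_s\|^2$ and $\Delta\bar h_s$ over the whole interval $[\gamma_{n+1},\gamma_n]$, and that is where your gaps sit.

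Two steps need attention. First, the absorption of the gradient part of the drift into half of the descent sum rests on an unproven ``Lipschitz-in-$s$'' estimate for $\nabla\bar h_s$; as stated it would require the oscillation of $\|\nabla\bar h_s(\theta_{n+1})\|$ over $[\gamma_{n+1},\gamma_n]$ to be uniformly $\smallo(1)$, which you do not establish and which involves third derivatives of $\bar h_s$. Fortunately you do not need it: after the substitution $x=\theta+\sqrt{s}\,z$, the quadratic-modulus assumption gives the uniform moment bound $\|\E_q[z]\|\le C$ for all $\theta$ and all $s\le 1/(4C_l)$, hence $\|\nabla\bar h_s(\theta)\|=s^{-1/2}\|\E_q[z]\|\le Cs^{-1/2}$, so the gradient part of the drift integrates to at most $\tfrac{C^2}{2}\log(\gamma_n/\gamma_{n+1})$ and telescopes to $O(\log M)$ exactly like your Laplacian term, with no absorption required. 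Second, your claim that $C_0(\gamma)\to1$ because ``strong lower-semicontinuity ensures $q\Rightarrow\phi$'' is false in general: at a discontinuity point of $l$ the tilted density need not converge to $\phi$. But only boundedness of $C_0(\gamma)$ is needed (for both the covariance bound and the $O(1/s)$ control of $\Delta\bar h_s$), and that follows from the same sub-Gaussian computation the paper performs in Lemma~\ref{lemma: time inhom condition}; the step you flag as ``genuinely hard'' becomes routine once you stop asking for the sharp constant. With these two repairs your argument closes: the total drift is $O(\log M)$ rather than the paper's $\smallo\big(\sum_{n\le M}\gamma_n\big)$, and both are dominated by $\epsilon_2^2\sum_{n\le M}\gamma_n\asymp M^{1-\beta}$, yielding the contradiction.
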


The most stringent assumptions on $l$ is simply that its `jumps' are
bounded and that its variations are at most quadratic for large increments of
$\theta$. This condition is always satisfied for $l$ bounded. When
combined with the characterisation of local minima
(\cref{thm:ermoliev-charact-local-min}) in our framework,
\cref{thm: general conv gaussian} constitutes a tool to identify local minima candidates. In
particular, as will shall see, this theorem implies that if the sequence
$\crl{\theta_n,\, n\in \mathbb{N}}$ defined in Algorithm~\ref{alg:main_det_algo}
converges to some $\theta$, then $\theta$ is a candidate local minimum of
$l$. \Cref{thm: general conv gaussian} therefore provides a convergence
result for Algorithm~\ref{alg:main_det_algo} under mild assumptions on $l$, but
it also leads to a number of consequences and stronger results, when more is
known on the objective function. For instance, when $\theta \mapsto l(\theta)$ is convex,
the functions $\theta\mapsto l_n(\theta)$ are also convex for $n\in
	\mathbb{N}$: in this case, one can easily show from Theorem~\ref{thm: general conv gaussian} that Algorithm~\ref{alg:main_det_algo}
converges to the minimiser of $l$.

\subsection{Beyond Gaussian families: mirror descent}\label{sub:beyond_gauss}

Explicitly introducing the Kullback-Leibler minimisation in \Cref{alg:main_det_algo} suggests natural extensions beyond Gaussian families,
and leads to a connection with mirror descent algorithms. As we establish in the Appendix, any
regular EDM (exponential dispersion models) family
\citep{Jorgensen:1987,jorgensen1997theory} could be used, i.e., probability densities of the form:
\[ \pi_{\theta,\gamma}(x) =
	\exp\left\{ \frac{1}{\gamma}\left[\langle\theta,T(x)\rangle-A(\theta)\right]\right\}
	\upsilon_\gamma(x)\,,
\]
where $T\colon\mathsf{X}\rightarrow\mathsf{T}$ and $\theta\in\Theta$. One may
use for instance a Wishart family to perform optimisation when  $\mathsf{X}$ is
a space of symmetric positive definite matrices, or a Bernoulli product
family when  $\mathsf{X}=\{0, 1\}^d$.
Simple derivations (see \cref{app: exponential families,sub:edm_details}) show that, for a
generic EDM family, \eqref{eq:non-homogeneous-descent-generic} becomes
\begin{equation} \label{eq:mirror-descent}
	\nabla_{\theta}A(\theta_{n+1}) =\nabla_{\theta}A(\theta_{n})
	+\gamma_{n}\nabla_{\theta}\log\int\exp\left\{-l(x)\right\}\pi_{(\nabla A)^{-1}(\mu(\theta_{n})),\gamma_{n}}\big({\mathrm d}x\big)\,,
\end{equation}
which is a mirror descent recursion. In the Gaussian case, $T(x)=x$,
$\nabla A(\theta)=\theta$,
and one recovers the   gradient descent recursion~\eqref{eq:intro-moment-match}.
\Cref{sub:wishart-example} provides some details on the Wishart case. 

The motivation and interest behind EDMs in the present context is the following. According to \cref{sub:edm_details} 
one has $\mathbb{E}_{\pi_{\theta,\gamma}}\big(T(X)\big)=\nabla_{\theta}A(\theta)=:\mu(\theta)$ and 
$\mathrm{ var}_{\pi_{\theta,\gamma}}\big(T(X)\big)=\gamma\nabla_{\theta}^{2}A(\theta)$.
Then, for any $\theta\in\Theta$, letting $\gamma\downarrow0$ ensures that
the distribution of $T(X)$ under $\pi_{\theta,\gamma}$ concentrates
on $\mu(\theta)$. In the most common scenario where $T(x)=x$ (or
a component of $T(x)$ is $x$), which is the case of the normal example
we started with, this means that whenever $\theta\mapsto\mu(\theta)$
spans $\mathsf{X}$ then we can aim to adjust $\theta$ to ensure
$\mu(\theta)\in\arg\min_{x\in\mathsf{X}}l(x)$. The use of the symbol $\theta$ instead of $x$ earlier should now
be clear, since their nature is very different in the general scenario, but confounded
in the normal scenario where the mean is the sole parameter used.

\begin{example}
As an illustration of the flexibility of the framework we develop, consider the problem of optimising a function defined on the set of positive symmetric matrices (PSM). Here, rather than using a Gaussian kernel not adapted to the PSM space, we consider a domain-adapted Wishart kernel.
The Wishart kernel builds on the Wishart probability density which, after appropriate rescaling, leads to a member of the EDM family (see Appendix~\ref{sub:wishart-example} for details) $\pi_{\theta,\gamma}$ with potential function $A(\theta)=-1/2\log|-\theta|$, where $|\cdot|$ denotes the determinant of the $d\times d$ positive symmetric matrix $-\theta$.
Using the notation $\Sigma_n=-\theta_n^{-1}/2$, one has $\nabla A(\theta_n)=\Sigma_n$ and  \eqref{eq:mirror-descent} yields:
\begin{equation}
\label{eq:mirror_wishart_summary}
    \Sigma_{n+1} = \Sigma_n - \gamma_n \nabla l_{\gamma_n}\!\big(-\Sigma_n^{-1}/2\big), 
    \qquad n \geq 0,
\end{equation}
while setting the gradient to zero leads to the moment matching solution $\Sigma_{n+1} = \E_{\tilde{\pi}_n}[X]$, with $\tilde{\pi}_n \propto e^{-l}\pi_{\theta_n,\gamma_n}$.

\end{example}

We are currently investigating extensions of our theoretical framework to non-Gaussian families, with particular attention to the Wishart case, and a more general treatment is left for future work.

\subsection{Links to other optimisation schemes}

Algorithm~\ref{alg:main_det_algo} belongs to the family of so-called model-based search methods, which can be traced back to \cite{dorigo1992ottimizzazione,Bonet1996MIMICFO}, appearing implicitly in the former and explicitly in the latter; see \cite{zlochin2004model} for a detailed survey. These ideas have since been rediscovered on multiple occasions, again implicitly or explicitly, and further developed, e.g. \cite{rubinstein1999cross,
	doi:10.1073/pnas.0603181103,
	Ionides2011_IF, 
	osher2019laplacian,
	osher2023,spokoiny:2023}. Earlier contributions were motivated by problems in discrete optimisation. The cross-entropy (CE) method of \cite{rubinstein1999cross}, is closest to ours. Exponential families and Kullback-Leibler minimisation are also suggested, but concentration is obtained by rescaling the objective, similarly to our discussion in Subsection~\ref{subsec:implementation}, while in the Gaussian case both the mean and covariance are estimated \citep{rubinstein2004cross}. In contrast, we impose a specific form for the sequence of covariance matrices, which in turn determines a sequence of step-sizes in the time-inhomogeneous gradient descent reinterpretation of Algorithm~\ref{alg:main_det_algo}. We also identify Exponential Dispersion Models (EDMs) as a particularly well adapted family of parametric models since their variance can be controlled independently of location. To the best of our knowledge, convergence of the CE method for a general class of objective functions is still lacking while there is empirical evidence that it may not always converge; see, e.g., \cite{CETetris} in the context of reinforcement learning.
 Algorithm 1 in \cite{Ionides2011_IF} closely resembles Algorithm~\ref{alg:main_det_algo} in the particular Gaussian case, but generality of the approach and connection to earlier literature seem to have been missed, and the focus is mainly on inference in state-space models. \cite{Ionides2011_IF} also provides a convergence analysis under strong assumptions, in particular assuming that the objective is differentiable.


In recent work \citep{osher2023, tibshirani2024}, a
recursion similar to ours is proposed, albeit with fixed stepsizes. Perspective
of their work is however significantly different. While our algorithm was motivated by Bayes' rule,  the
recursion in the aforementioned papers is obtained by considering a
Gaussian transformation to a Hamilton Jacobi system of partial differential
equations representing the Moreau envelope of a proximal minimisation problem,
and therefore relates to infinitesimal convolutions. \cite{spokoiny:2023} proposes an update similar to ours  in the Gaussian kernel scenario. However, this is where similarities seem to end as the motivation appears slightly different and the analysis of the properties of the algorithm significantly different, in particular requiring differentiability of $l$ and using concentration properties.

Joe Watson (Applied Intelligence Lab of the Oxford Robotics Institute) has pointed out links to several studies in robotics control that illustrate compelling applications of algorithms of the type studied in this work (see, e.g., \cite{watson23a, belousov2018, abdolmaleki_2015, Deisenroth_et_al_2013}).

\subsection{Organisation of the paper}

As mentioned above, our analysis of \cref{alg:main_det_algo}  is  based on its equivalence to a time-inhomogeneous gradient descent, \eqref{eq:non-homogeneous-descent-generic} on smooth approximations of $l$, \eqref{eq:Laplace_gaussian}, which is a particular type of Laplace functionals. In
\cref{sec:Convergence:-deterministic-scena}, we provide conditions for convergence of a general inhomogeneous gradient descent algorithm based on a general sequence of smooth approximations $\crl{f_n:\R^d\to \R; n\geq 0}$ of an objective function $f:\R^d\to \R$. \cref{sec:Laplace_functionals} focuses on the analysis of Laplace functionals, namely smooth approximating functions of the form
\begin{equation}
	\label{eq:laplace_functionals_general_kernel}
	f_{n}(\theta)\coloneq -\log\round{\int \exp\{-f(x)\}\psi_{n,\theta}(\mathrm{d} x)},\quad n\geq 1
\end{equation}
where $\psi_{n,\theta}(\mathrm{d} x)$ is a smoothing kernel; for instance, the Gaussian kernel with mean $\theta$ and variance $\gamma_n$ as in (\ref{eq:Laplace_gaussian}). The results from \cref{sec:Convergence:-deterministic-scena,sec:Laplace_functionals} are not only directly applicable to the analysis of \cref{alg:main_det_algo}, but also hold independent interest and may be valuable for studying a broad range of algorithms. In \cref{sec:convergence_of_algorithm} the results derived in \cref{sec:Convergence:-deterministic-scena,sec:Laplace_functionals} are used to prove \Cref{thm: general conv gaussian} and some numerical experiments are presented in \cref{sec:experiments}. \cref{discussion} concludes.

\section{Results on gradient descent with smooth approximations \label{sec:Convergence:-deterministic-scena}}

\subsection{Overview}

In this section we review and extend essential notions and tools required to address the
minimisation of a function $f:\R^d\to \R$, assumed lower-bounded, not
necessarily differentiable, but for which there exists a sequence of
differentiable approximations $\crl{f_n:\R^d\to\R, \ f_n\in C^1(\R^d), \  n\in
		\mathbb{N}}$, which converges to $f$ in a sense to  be made precise below. In this
scenario, it is natural to suggest the non-homogeneous gradient descent
algorithm
\begin{equation} \label{eq:non-homogeneous-gradient-descent}
	\theta_{n+1} = \theta_n-\gamma_n \nabla f_n(\theta_n) \, ,
\end{equation}
where $\crl{\gamma_n\in \R_+, \ n\in \mathbb{N}}$ with $\gamma_n \downarrow 0$,
in the hope that tracking the sequence points in $\crl{\argmin f_n, n\in
		\mathbb{N}}$ or $\crl{\mathrm{loc-}\argmin f_n, n\in
		\mathbb{N}}$  will lead us to minima or local minima of $f$. As we shall see it is sufficient to focus exposition on global minima, as local minima are global minima of the objective function constrained to a neighbourhood, which will turn out to be sufficient for our purpose. This is however a subtle matter given the generality, as we illustrate below.

To start with, non-differentiable functions
may not have a minimum; see, e.g., the left panel of \cref{fig:examples}.
A weak condition ensuring the existence of minima is that $f$ is lower
semi-continuous (Subsection~\ref{subsec:lsemicontinuity}).

A second issue is that even when perfect minimisation of $f_n$ for all $n\in
	\mathbb{N}$ is possible, the intuitive set-limit $\mathrm{Lim}_n\crl{\argmin f_n}=\argmin f$, properly defined in \cite[Chatper 4]{RockWets98}, may not hold. This
is illustrated on the right panel of \cref{fig:examples}: this counter-example
shows that we may not have $\mathrm{Lim}_n\crl{\argmin f_n}=\argmin f$ even when $f_n$
converges uniformly to $f$.
An important point in the present paper is that using smoothed approximations for
minimisation may not work in certain scenarios.

\begin{figure}[t]
	\centering
	\includegraphics[scale = 0.30]{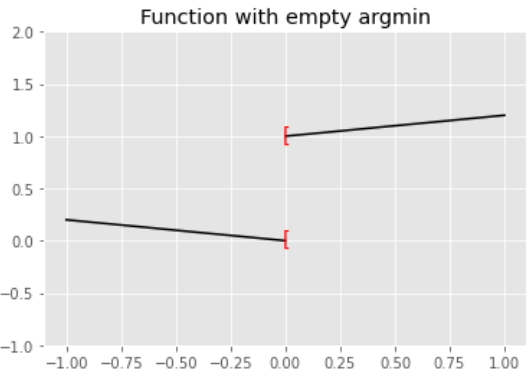}
	\includegraphics[scale = 0.30]{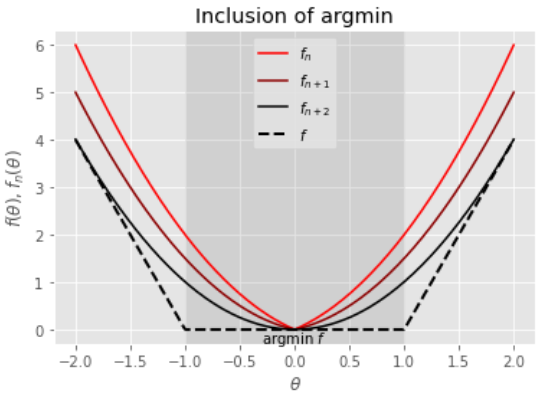}
	\caption{Left: example of a function with empty $\argmin$. Right: A sequence of functions $\crl{f_n, \ n\in \mathbb{N}}$ (red-to-black color palette) that epi-converges to the function $f$. The picture is inspired by \cite[Figure 7.7]{RockWets98}. It is clear that here $\argmin f_n\subset \argmin f$.}
	\label{fig:examples}
\end{figure}

\begin{example}
	Take $d=2$ and consider the
	function
	\[ l(\theta) \coloneq
		\min\{1,|\theta_{1}|\}\ind\{\theta_{1}=\theta_{2}\} + \ind\{\theta_{1}\neq \theta_{2}\}\, ,
	\]
	where here $\theta_1$ and $\theta_2$ are the two components of $\theta$.
	This function is such that  $\mathrm {Leb}\{\theta\in\mathbb{R}^{2}\colon
		l(\theta)<1\}=0$ and, as a consequence, for any  $\theta \in \Theta$ and $\gamma>0$,
	$l_\gamma(\theta)=1$, that is the smoothed functions $\theta\mapsto
		l_\gamma(\theta)$ ``cannot see the minimum'' at zero. In general a requirement
	therefore seems to be that for any $\theta_{*}\in\arg\min_{\theta\in\Theta}l(\theta)$,

	\begin{equation} \label{eq:smooth-gaussian-discriminate}
		\lim_{\gamma\downarrow0}\gamma^{-d/2}
		\int\exp\left\{-l(x)\right\}\phi\Bigl(\frac{x-\theta_*}{\sqrt{\gamma}}\Bigr){\mathrm d}x=
		\exp\left\{-l(\theta_*)\right\}\,.
	\end{equation}
	In this counterexample the left hand side is equal to $\exp(-1)$ while the
	right hand side is equal to $1$.
\end{example}

Epi-convergence of $\crl{f_n:\R^d\to\R,  \  n\in \mathbb{N}}$ to $f$ (properly defined in
\cref{def-epilimits-1}) is a
suitable and flexible form of convergence to establish when smoothing techniques
combined with exact optimisation achieve their goal. Ensuring this property is
a natural prerequisite to the justification of the recursion
\eqref{eq:non-homogeneous-gradient-descent} to optimise $f$; essentials of Epi-convergence are covered in Subsection~\ref{subsec:epi-cv-cv-minimisation} where we also provide a first consequence, Theorem~\ref{thm:argmins-inclusion}, a weaker form of the set limit $\mathrm{Lim}_n\crl{\argmin
		f_n}=\argmin f$.

The following result, due to \citet[Theorem 4.7]{Ermoliev1995}, exemplifies
what one may hope to be able to deduce for the sequence \eqref{eq:non-homogeneous-gradient-descent}. Assume that $\crl{f_n:\R^d\to\R, \
		f_n\in C^1(\R^d), \  n\in \mathbb{N}}$ epi-converges  to $f$, then
\begin{equation} \label{eq:ermoliev-gradient-result}
	\theta_*\in \R^d \ \text{local minimum of }  f \ \Rightarrow \ \exists \crl{\theta_n, \ n\in \mathbb{N}} \ \text{s.t.} \ \theta_n\to \theta_* \ \text{and} \ ||\nabla f_n(\theta_n)||\to 0 \, .
\end{equation}

The practical implication of this result is that if a sequence
$\crl{\theta_n\in \R^d, \  n\in \mathbb{N}}$  admits a subsequence
$\crl{\theta_{n_k}\in\R^d, \ k\in \mathbb{N}}$ convergent to some
$\theta_{*}\in \R^d$ and such that $\lim_k ||\nabla
	f_{n_k}(\theta_{n_k})||=0$, then $\theta_{*}$ is a valid candidate as a
local minimum of $f$, and accumulations points not satisfying the latter
condition must be rejected.
In Subsection~\ref{subsec:epi-cv-cv-minimisation} we establish \Cref{thm:ermoliev-charact-local-min}, a generalization of \citet[Theorem 4.7]{Ermoliev1995} where we relax their assumptions to cover our scenario. The main result of the present section is Theorem~\ref{thm:cv-descent-inho} where we establish  conditions under which the sequence \eqref{eq:non-homogeneous-gradient-descent} admits a subsequence such that  $\|\nabla f_{n_k}(\theta_{n_k})\|\to 0$.

\subsection{Lower semi-continuity} \label{subsec:lsemicontinuity}




To  define semi-continuous functions, we must first define lim inf for
functions.

\begin{definition}
	Let $f:\R^d\to \R$. For $\theta_0\in \R^d$,
	$$\liminf_{\theta\to \theta_0} f(\theta)
		\coloneq \sup_{\epsilon >0}\left[\inf\crl{f(\theta); \ \theta\in \mathbb{B}(\theta_0,\epsilon)}\right]$$
	where $\mathbb{B}(\theta_0, \epsilon)$ denotes a closed metric ball with center $\theta_0$ and radius $\epsilon>0$.
\end{definition}

\begin{definition}[Lower semi-continuity]{\cite[Def. 1.5]{RockWets98}}
	\label{def: lsc}
	A function $f:\R^d\to \R$ is said to be
	\begin{enumerate}
		\item lower semi-continuous ($\mathrm{lsc}$) at $\theta_{0}\in \R^d$ if
		      \begin{equation}
			      \label{eq: lsc}\liminf_{\theta\to \theta_{0}}f(\theta)\geq f(\theta_{0}).
		      \end{equation}
		\item lower semi-continuous if the above holds for any $\theta_{0}\in \R^d$.
	\end{enumerate}
\end{definition}

\begin{remark}
	For $\theta_0\in \R^d$ and $\epsilon>0$, $\inf\crl{f(\theta); \ \theta\in
			\mathbb{B}(\theta_0,\epsilon)}\leq f(\theta_0)$, therefore,
	$\liminf_{\theta\to \theta_0} f(\theta)\leq f(\theta_0)$; hence condition
	\eqref{eq: lsc} is equivalent to $\liminf_{\theta\to \theta_{0}}f(\theta)= f(\theta_{0})$.
\end{remark}

By definition, any lower semi-continuous, lower-bounded function has a minimum on $\R^d$. Upper semi-continuity can be similarly defined, by replacing $\liminf$ with  $\limsup$ and by reversing the inequality.

\begin{definition}[Strong lower semi-continuity]
	\label{def: slsc}
	A function $f:\R^d\to \R$ is said to be
	\begin{enumerate}
		\item strongly lower semi-continuous ($slsc$) at $\theta\in \R^d$
		      if it is lower semi-continuous at $\theta\in \R^d$ and there
		      exists a sequence $\crl{\theta_n, \ n\in \mathbb{N}}$,
		      $\theta_n\to \theta$, with $f$ continuous at every $\theta_n$, and $f(\theta_n)\to f(\theta)$.
		\item strongly lower semi-continuous if the above holds for any $\theta\in \R^d$.
	\end{enumerate}

\end{definition}

In words, strong lower semi-continuity is lower semi-continuity excluding
discontinuities at isolated points. Remark that we do not make any assumption
on smoothness of the function; the class of strongly lower semi-continuous
functions includes indicator functions of closed sets, step functions, ceiling
functions; but also not-everywhere differentiable continuous and discontinuous
(if there are not isolated discontinuity points) functions. Strong upper semi-continuity can be accordingly defined.

\begin{example}
	The indicator function $f(\theta)=\ind\{\theta > 0\}$
	is strongly lower semi-continuous; the indicator function
	$f(\theta) = \ind\{\theta \geq 0\}$ is not
	semi-continuous at $\theta=0$.

	Intuitively, we can note that the epigraph of $f$, namely the part of the space above the graph of $f$, is not a closed set, which in
	fact precludes lower semi-continuity. See \cref{app:lsc_epigraphs} for more details.

\end{example}

\subsection{Epi-convergence and convergence in minimisation} \label{subsec:epi-cv-cv-minimisation}

As discussed in the introduction of this section, epi-convergence is the right
notion to 
formulate and establish convergence of recursions of the type of
\eqref{eq:non-homogeneous-gradient-descent}. We therefore start with some
definitions. A more  classical abstract definition in terms of set convergence of function epigraphs can be provided; see for instance \cite[Chapter 7]{RockWets98}.

\begin{definition}[Epi-convergence] \label{def-epilimits-1}
	A sequence of functions $\crl{f_n:\R^d\to \R, \ n\in\mathbb{N}}$
	epi-converges to a function $f:\R^d\to \R$ if, for each $\theta\in \R^d$,
	\begin{enumerate}
		\item $\liminf_n f_n(\theta_n)\geq f(\theta)$ for any sequence $\theta_n\to \theta$
		\item $\lim_n f_n(\theta_n)=f(\theta)$ for some sequence $\theta_n\to \theta$.
	\end{enumerate}
	Thus, we say that $f$ is the epi-limit of $\crl{f_n, \ n\in \mathbb{N}}$.
\end{definition}

We can immediately note that if a function $f:\R^d\to \R$ is the epi-limit of some sequence $\crl{f_n, \ n\in \mathbb{N}}$, then $f$ is necessarily lower semi-continuous.



\begin{example} \label{exa:epi-limits}
	The three examples below aim to illustrate how epi-convergence differs from
	standard functional limits.  Consider $\crl{f_n\colon \R \rightarrow \R, \
			n\in \mathbb{N}}$ where
	\begin{enumerate}
		\item $f_n(\theta)=-\exp(-n \theta^2)$, then the sequence  epi-converges to
		      $f(\theta) = -\ind\{\theta=0\}$. In fact, it converges pointwise to the
		      same limit. However, epi-convergence generally differs from e.g.
		      pointwise convergence, as the following example shows.

		\item  $f_n(\theta)=\exp(-n \theta^2)$, then the sequence epi-converges to
		      $f(\theta) = 0$, as in particular $\lim_n f_n(n^{-\alpha})=0$ for
		      $\alpha\in(0,1/2)$; we see here how epi-convergence differs from e.g.
		      pointwise convergence since $\lim f_n(0)=1$ here.

		\item $f_n(\theta)=(-1)^n \exp(-n \theta^2)$, then $f_n$ does not
		      epi-converge.
	\end{enumerate}
\end{example}

The following theorem highlights two main consequences of epi-convergence,
describing what type of results one can expect about convergence of minima and
infima. Below, we say that a function $g:\R^d\to \R$ is eventually
level-bounded if for each $a\in \R^d$ the sequence of level-sets
$\crl{\mathrm{lev}_{\leq a}f_n, \ n\in \mathbb{N}}$, defined in Appendix \ref{app:lsc_epigraphs}, is eventually bounded. For
instance, a sequence of eventually lower-bounded functions is eventually
level-bounded.

\begin{thm} [\cite{RockWets98}, Theorem 7.33]
	\label{thm:argmins-inclusion}
	Let
	\begin{enumerate}
		\item $f:\R^d\to \R$ be a lower-bounded, lower semi-continuous function,

		\item $\crl{f_n:\R^d\to \R, \ n\in \mathbb{N}}$ be a sequence of lower semi-continuous functions such that
		      \begin{enumerate}
			      \item $\crl{f_n, \ n\in \mathbb{N}}$ epi-converges to $f$,
			      \item $\crl{f_n, \ n\in \mathbb{N}}$ is eventually level-bounded.
		      \end{enumerate}
	\end{enumerate}

	Then,
	\begin{enumerate}
		\item $\underset{\theta\in \R^d}{\inf}f_n(\theta)\to \underset{\theta\in \R^d}{\inf}f(\theta)$,
		\item $\mathrm{Limsup}_n  \argmin f_n\subset \argmin f$.
	\end{enumerate}
\end{thm}

\citet[Chapter 5]{RockWets98} argue that looking for the inclusion (point 2
above), is preferable over the stronger result $\operatorname{Lim}_n\crl{\argmin f_n}=\argmin
	f$.  Again, refer to \Cref{fig:examples}, where $f_n$ epi-converges
to $f$ (and in fact converges uniformly), but the latter does not hold.


We now turn to results characterising local minima $\theta_*$ of $f$ as
accumulation points of sequences $\crl{\theta_n \in \R^d, n\in\mathbb{N}}$ in
the situation where an epi-convergent and differentiable approximating sequence
$\crl{f_n:\R^d\to \R, \ n\in \mathbb{N}}$ exists. In particular the existence
of $\theta_n \rightarrow \theta_*$ is established, for which $\lim_n \|\nabla
	f_n(\theta_n)\|=0$.  As a consequence for any algorithm producing a sequence
$\crl{\theta_n\in\R^d,n\in\mathbb{N}}$ from which a convergent sequence
$\theta_{n_k} \rightarrow \theta_* \in \R^d$ can be extracted, then, if
$\lim_{n_k} \|\nabla f_{n_k}(\theta_{n_k})\|\neq 0$ we reject $\theta_*$ as a
local minimum.

We begin with two Lemmata that allow us to prove the key result of this section. The following Lemma describes a key differentiability property of the convolution of two functions, one of which is of class $C^1(\R^d)$.

\begin{lemma}[\cite{RockWets98}, Theorem 9.67]
	\label{lemma: differentiability of averages}
	Let $f,g:\R^d\to \R$ be locally integrable and assume that $g\in C^1(\R^d)$. Then the function $\theta\mapsto \int f(u)g(\theta-u)\mathrm{d}u$ belongs to $C^1(\R^d)$ with
	$$\nabla \int f(u)g(\theta-u)\mathrm{d}u=\int f(u)\nabla g(\theta-u)\mathrm{d}u, \ \ \theta\in \R^d.$$

\end{lemma}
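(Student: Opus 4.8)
The plan is to prove the result coordinate by coordinate: it suffices to show that for each $i \in \{1,\dots,d\}$ the partial derivative of $F(\theta) := \int f(u) g(\theta - u)\, \mathrm{d}u$ exists, equals $G_i(\theta) := \int f(u)\, \partial_i g(\theta - u)\, \mathrm{d}u$, and is continuous in $\theta$. Continuity of all the partials then yields $F \in C^1(\R^d)$ with $\nabla F(\theta) = \int f(u)\, \nabla g(\theta - u)\, \mathrm{d}u$ via the standard criterion that a function whose partial derivatives exist and are continuous is continuously differentiable.

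First I would fix $\theta \in \R^d$ and a coordinate direction $e_i$, and rewrite the increment of $F$ using the fundamental theorem of calculus applied to the $C^1$ function $g$: for $t \neq 0$,
\[
g(\theta + t e_i - u) - g(\theta - u) = \int_0^t \partial_i g(\theta + s e_i - u)\, \mathrm{d}s \, ,
\]
so that
\[
F(\theta + t e_i) - F(\theta) = \int f(u) \left( \int_0^t \partial_i g(\theta + s e_i - u)\, \mathrm{d}s \right) \mathrm{d}u \, .
\]
The key manoeuvre is to exchange the order of integration by Fubini's theorem, which produces
\[
F(\theta + t e_i) - F(\theta) = \int_0^t G_i(\theta + s e_i)\, \mathrm{d}s \, .
\]
Dividing by $t$ and letting $t \to 0$, the fundamental theorem of calculus then gives $\partial_i F(\theta) = G_i(\theta)$, provided $s \mapsto G_i(\theta + s e_i)$ is continuous near $s = 0$.

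Both the application of Fubini and the passage to the limit hinge on a single analytic ingredient, which I expect to be the main obstacle: producing, on a bounded neighbourhood $\{|t| \le \delta\}$, a function of $u$ that is integrable and dominates $|f(u)|\, |\partial_i g(\theta + s e_i - u)|$ uniformly over $|s| \le \delta$. This is where the hypotheses must be used carefully. Local integrability of $f$ controls the contribution on bounded sets, while continuity of $\nabla g$ together with the decay/support structure implicit in the mollifier setting is what lets one bound $\sup_{|s| \le \delta} |\partial_i g(\theta + s e_i - u)|$ by an integrable weight. Once such a dominating function is available, Fubini applies on the bounded strip $[0,t] \times \R^d$, and dominated convergence justifies the limit.

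Finally, to upgrade from mere differentiability to membership in $C^1$, I would establish continuity of $G_i = f * \partial_i g$ directly: writing $G_i(\theta') - G_i(\theta) = \int f(u)\, [\partial_i g(\theta' - u) - \partial_i g(\theta - u)]\, \mathrm{d}u$, the integrand tends to $0$ pointwise as $\theta' \to \theta$ by continuity of $\partial_i g$, and the same dominating function lets dominated convergence conclude that $G_i(\theta') \to G_i(\theta)$. Collecting the continuous partials $\{G_i\}_{i=1}^d$ gives $F \in C^1(\R^d)$ with the claimed gradient. I expect the routine parts (the fundamental theorem of calculus, Fubini on a bounded strip, and the two dominated-convergence passages) to be essentially mechanical, so I would isolate the uniform integrable domination of the translated derivative as a single preliminary estimate and treat everything else as a consequence of it.
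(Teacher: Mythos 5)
The paper does not prove this lemma: it is quoted directly from \cite{RockWets98} (Theorem 9.67), so there is no internal proof to compare against. Your plan is the standard differentiation-under-the-integral-sign argument (fundamental theorem of calculus applied to the $C^1$ function $g$, Fubini on the strip $[0,t]\times\R^d$, then dominated convergence both for the limit of the difference quotient and for the continuity of the partials), and that is the right skeleton; it is essentially how the cited result is proved.

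The one genuine gap is the step you yourself flag and defer: the uniform integrable domination of $|f(u)|\,|\partial_i g(\theta+se_i-u)|$ over $|s|\le\delta$. This cannot be extracted from the hypotheses as literally stated. With $f$ and $g$ merely locally integrable and $g\in C^1(\R^d)$, the convolution $\int f(u)g(\theta-u)\,\mathrm{d}u$ need not even be finite (take $f(u)=e^{\|u\|^2}$ and $g$ a standard Gaussian density), so no argument can close this step without an extra assumption; the lemma as stated is false in that generality. In Rockafellar--Wets the missing ingredient is that the smoothing kernel has bounded support: then $u\mapsto g(\theta+se_i-u)$ and its gradient vanish outside a fixed compact set $K$ for all $|s|\le\delta$, and the dominating function is simply $|f|\,\mathbf{1}_K\cdot\sup_{K}\|\nabla g\|$, integrable precisely because $f$ is locally integrable. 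In the application made in this paper (the proof of Theorem~\ref{thm:ermoliev-charact-local-min}), the kernel is a Gaussian with unbounded support and the other factor is $\|z+\theta-\theta_*\|^2$, so the domination must instead come from matching the quadratic growth of that factor against the Gaussian decay of $\psi_n$ and $\nabla\psi_n$ on a $\delta$-neighbourhood of $\theta$. Your proof becomes complete once you commit to one of these two regimes and exhibit the dominating function explicitly; as written, the ``single preliminary estimate'' you postpone is exactly where the stated hypotheses are insufficient and an additional support or decay condition must be invoked.
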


The lemma below describes an important property of convergent sequences that attain $\epsilon$-optimal points of epi-convergent functions.

\begin{lemma}[\cite{attouch1984variational}, Theorem 1.10]
	\label{lemma: Ermoliev 3.5}
	Let $D\subseteq \R^d$ and
	\begin{enumerate}
		\item $\crl{f_n:D \to \R, \ n\in \mathbb{N}}$ be epi-convergent to $f:D\to \R$
		\item $\crl{\theta_{n}\in \R^d, \ n\in \mathbb{N}}$ be such that $\theta_{n}\to \theta_{*}\in D$ and for $\crl{\epsilon_n\in \R_+,n \in \mathbb{N}}$ with $\epsilon_n\downarrow 0$
		      $$f_n(\theta_{n})\leq \underset{\theta\in D}{\inf} f_n(\theta)+\epsilon_n, \ n\in \mathbb{N}\, .$$
	\end{enumerate}
	Then, $\theta_{*}\in \underset{\theta\in D}{\argmin} \  f$.
\end{lemma}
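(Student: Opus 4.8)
The plan is to show directly that $f(\theta_*)\le f(\theta)$ for every $\theta\in D$; since $\theta_*\in D$ by hypothesis, this is exactly the assertion $\theta_*\in\argmin_{\theta\in D} f$. The two halves of the epi-convergence definition (Def.~\ref{def-epilimits-1}) enter at two different points: the $\liminf$ inequality is applied at the limit point $\theta_*$ along the given sequence, while the recovery-sequence property is applied at the arbitrary competitor $\theta$.

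First I would fix an arbitrary $\theta\in D$ and invoke the second (recovery) condition of epi-convergence to produce a sequence $\eta_n\to\theta$ with $\eta_n\in D$ and $\lim_n f_n(\eta_n)=f(\theta)$. Because each $\eta_n$ is feasible for the constrained infimum, we have $\inf_{\theta'\in D} f_n(\theta')\le f_n(\eta_n)$, and combining this with the $\epsilon_n$-optimality hypothesis yields the chain
\begin{equation*}
f_n(\theta_n)\le \inf_{\theta'\in D} f_n(\theta')+\epsilon_n\le f_n(\eta_n)+\epsilon_n .
\end{equation*}
Taking $\liminf_n$ on both sides and using $\epsilon_n\downarrow 0$ together with $f_n(\eta_n)\to f(\theta)$ gives $\liminf_n f_n(\theta_n)\le f(\theta)$. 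Next I would apply the first (liminf) condition of epi-convergence to the sequence $\theta_n\to\theta_*$, obtaining $f(\theta_*)\le\liminf_n f_n(\theta_n)$. Chaining the two bounds produces $f(\theta_*)\le f(\theta)$, and since $\theta\in D$ was arbitrary while $\theta_*\in D$, we conclude $f(\theta_*)=\inf_{\theta\in D} f(\theta)$.

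I would emphasize the one genuinely delicate point, which is where the argument could go wrong if done carelessly. Epi-convergence controls $f_n$ only through the $\liminf$ along sequences and the existence of a recovery sequence; it says nothing about the pointwise values $f_n(\theta)$ at a fixed $\theta$, and indeed Example~\ref{exa:epi-limits}(2) exhibits $f_n(\theta)\not\to f(\theta)$. Hence one cannot shortcut the proof by writing $f_n(\theta_n)\le f_n(\theta)+\epsilon_n$ for the fixed competitor $\theta$ and passing to the limit, since $\limsup_n f_n(\theta)$ may exceed $f(\theta)$. The recovery sequence $\eta_n$ is therefore indispensable: it replaces the uncontrolled quantity $f_n(\theta)$ by the controlled quantity $f_n(\eta_n)\to f(\theta)$. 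The accompanying technical care is to ensure that $\eta_n$ may be taken inside $D$, so that the constrained infimum $\inf_{\theta'\in D} f_n$ dominates $f_n(\eta_n)$; under the convention that epi-convergence ``for all $\theta\in D$'' supplies a recovery sequence valued in $D$ (automatic when $D=\R^d$), this holds and the argument closes.
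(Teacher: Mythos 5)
Your argument is correct, and it is the standard proof of this classical result: the paper itself states the lemma without proof, simply citing \cite{attouch1984variational}, so there is no in-paper argument to compare against. Your two-step structure --- recovery sequence at the arbitrary competitor $\theta$ to get $\liminf_n f_n(\theta_n)\le f(\theta)$, then the liminf inequality along $\theta_n\to\theta_*$ to get $f(\theta_*)\le \liminf_n f_n(\theta_n)$ --- is exactly how Attouch's Theorem~1.10 is proved, and your warning that one cannot substitute the uncontrolled pointwise value $f_n(\theta)$ (cf.\ Example~\ref{exa:epi-limits}) is the right thing to emphasize. The one caveat you raise is also genuinely relevant to this paper: Definition~\ref{def-epilimits-1} defines epi-convergence on all of $\R^d$, whereas the lemma constrains the infimum to $D$, so the recovery sequence must be taken (eventually) inside $D$ for the inequality $\inf_{\theta'\in D}f_n(\theta')\le f_n(\eta_n)$ to hold. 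In the paper's only application (the proof of Theorem~\ref{thm:ermoliev-charact-local-min}, where $D=V$ is a compact neighbourhood of $\theta_*$) this is the reading intended, and your explicit flagging of the convention closes the gap cleanly.
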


Finally, the following justifies an optimisation procedure for finding local minima of $l$ by tracking minimisers of differentiable functions $\crl{l_n, \ n\in \mathbb{N}}$ that epi-converge to $l$. The proof of the result is a generalisation of the proof of \cite[Theorem 4.7]{Ermoliev1995}. While in the latter paper the result is stated for a strongly lower semi-continuous $f$ and auxiliary mollifiers with bounded support (implying epi-convergence in their setting), in our statement epi-convergence is put as an assumption of the theorem and the other assumptions are relaxed: $f$ is allowed to be just lower semi-continuous and the auxiliary mollifiers to be Gaussian. This extends its validity and widens the class of methods whose convergence can be proven in terms of the Theorem below.

\begin{thm} \label{thm:ermoliev-charact-local-min} Let
	\begin{enumerate}
		\item  $f:\R^d\to \R$ be locally integrable, lower bounded and lower semi-continuous,
		\item $\crl{f_n:\R^d\to\R, \ n\in \mathbb{N}}$ be a sequence of
		      differentiable functions epi-convergent to $f$.
	\end{enumerate}
	Then  for any $\theta_* \in \mathrm{loc-}\argmin f$ there exists
	$\crl{\theta_{k} \in \R^d, \  k\in \mathbb{N}}$ such that $\theta_k\to
		\theta_*$ and $\lim_k ||\nabla f_{k}(\theta_{k})||=0$.
\end{thm}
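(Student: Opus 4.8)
The plan is to reduce the statement to the elementary fact that an \emph{interior} minimiser of a differentiable function has vanishing gradient, after (i) localising around $\theta_*$ and (ii) adding a strictly convex penalty that turns $\theta_*$ into the \emph{unique} minimiser of the penalised limit on a small closed ball. The penalty is the crux: Lemma~\ref{lemma: Ermoliev 3.5} only guarantees that exact (or $\epsilon$-optimal) minimisers of an epi-convergent sequence accumulate somewhere inside $\argmin f$, so to force the accumulation point to be the \emph{prescribed} $\theta_*$ we must arrange that the relevant $\argmin$ is the singleton $\crl{\theta_*}$.

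First I would localise. Since $\theta_*$ is a local minimiser of $f$, choose $\rho>0$ with $f(\theta_*)\le f(\theta)$ for all $\theta$ in the closed ball $B:=\mathbb{B}(\theta_*,\rho)$, and set $p(\theta):=\|\theta-\theta_*\|^2$, $\tilde f:=f+p$ and $\tilde f_k:=f_k+p$. Adding the continuous function $p$ preserves epi-convergence, so $\crl{\tilde f_k,\,k\in\mathbb{N}}$ epi-converges to $\tilde f$, while each $\tilde f_k$ stays differentiable with $\nabla\tilde f_k=\nabla f_k+2(\cdot-\theta_*)$. On $B$ we then have $\tilde f(\theta)=f(\theta)+\|\theta-\theta_*\|^2>f(\theta_*)=\tilde f(\theta_*)$ for every $\theta\in B\setminus\crl{\theta_*}$, so $\theta_*$ is the unique minimiser of $\tilde f$ over $B$.

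Next I would track exact minimisers. Each $\tilde f_k$ is continuous and $B$ is compact, so there exists $\theta_k\in\argmin_{\theta\in B}\tilde f_k(\theta)$; these are exact minimisers, i.e. the hypothesis of Lemma~\ref{lemma: Ermoliev 3.5} holds with $D=B$ and $\epsilon_k\equiv 0$. By compactness every subsequence of $\crl{\theta_k}$ admits a convergent sub-subsequence, and since epi-convergence is inherited by subsequences, Lemma~\ref{lemma: Ermoliev 3.5} forces every such limit to lie in $\argmin_{B}\tilde f=\crl{\theta_*}$; hence $\theta_k\to\theta_*$. Finally, because $\theta_*$ is interior to $B$ and $\theta_k\to\theta_*$, we have $\theta_k\in\mathrm{int}\,B$ for all large $k$, so $\theta_k$ is an interior minimiser of the differentiable $\tilde f_k$ and the first-order condition gives $\nabla\tilde f_k(\theta_k)=0$. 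Therefore $\nabla f_k(\theta_k)=-2(\theta_k-\theta_*)$ and $\|\nabla f_k(\theta_k)\|=2\|\theta_k-\theta_*\|\to 0$, which is exactly the required sequence.

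I expect the main obstacle to be the two verifications underpinning the localisation–penalisation trick: (i) that adding the quadratic penalty preserves epi-convergence and that epi-convergence passes to subsequences, so that Lemma~\ref{lemma: Ermoliev 3.5} may legitimately be invoked on convergent sub-subsequences; and (ii) that this lemma applies with $D=B$ a \emph{compact} set rather than all of $\R^d$, with $\theta_*$ remaining the unique minimiser of $\tilde f$ restricted to $B$. I note that neither Lemma~\ref{lemma: differentiability of averages} nor local integrability of $f$ is needed for this abstract argument; those enter only when one checks, in the concrete setting of Theorem~\ref{thm: general conv gaussian}, that the smoothed functions $h_n$ are $C^1$ and epi-convergent to $l$.
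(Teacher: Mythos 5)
Your proof is correct and rests on the same skeleton as the paper's: add a quadratic penalty centred at $\theta_*$ so that $\theta_*$ becomes the \emph{unique} minimiser of the penalised limit on a compact neighbourhood, take exact minimisers $\theta_k$ of the penalised approximations on that neighbourhood, invoke Lemma~\ref{lemma: Ermoliev 3.5} to get $\theta_k\to\theta_*$, and finish with the first-order condition for an interior minimiser. The one genuine difference is the penalty itself: the paper, following Ermoliev, uses the \emph{mollified} quadratic $\beta_n(\theta,\theta_*)=\int\|z+\theta-\theta_*\|^2\psi_n(z)\,\mathrm{d}z$, which forces it to invoke Lemma~\ref{lemma: differentiability of averages} for differentiability, to argue separately that $\crl{\beta_n,\ n\in\mathbb{N}}$ epi-converges (indeed converges uniformly) to $\theta\mapsto\|\theta-\theta_*\|^2$, and to compute $\nabla\beta_n(\theta_k,\theta_*)=2\bigl(\int z\psi_k(z)\,\mathrm{d}z+\theta_k-\theta_*\bigr)$ before letting it vanish. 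You use $p(\theta)=\|\theta-\theta_*\|^2$ directly, which is already $C^1$, trivially preserves epi-convergence because it is continuous, and gives $\nabla f_k(\theta_k)=-2(\theta_k-\theta_*)$ in one line; this is a genuine simplification (the mollified penalty is an artefact of Ermoliev's setting, where the approximants are themselves built from mollifiers, and buys nothing here). You are also more careful on two points the paper leaves implicit: you argue that \emph{every} limit point of the compactly contained sequence $\crl{\theta_k}$ must equal $\theta_*$, so the whole sequence converges (the paper only extracts one convergent subsequence), and you note explicitly that the stationarity condition $\nabla\tilde f_k(\theta_k)=0$ requires $\theta_k$ to lie in the interior of the ball for $k$ large, which follows from $\theta_k\to\theta_*\in\mathrm{int}\,B$; the paper writes $0=\nabla\phi_n(\theta_n)$ without this justification. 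Your closing observation that local integrability and Lemma~\ref{lemma: differentiability of averages} are not needed for the abstract statement is also accurate.
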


\begin{proof}
	Let $\theta_*$ be a local minimiser of $f$, define $\phi(\theta):=f(\theta)+||\theta-\theta_*||^2$ and let $V$ be a sufficiently small compact set such that $\theta_* \in V$ and $\argmin_{\theta\in V} \ \phi(\theta)=\crl{\theta_*}$ - that is, $\theta_*$ is the unique global minimiser of $\phi$ on $V$. The uniqueness of the minimiser on $V$, ensured by the auxiliary function $\phi$, excludes potential issues arising from working with a locally flat function, and more generally guarantees that a (sub-) sequence converging to $\theta_*$ as per \cref{lemma: Ermoliev 3.5} exists, as we are going to illustrate. Consider the sequence of functions $\crl{\psi_n:\R^d\to \R_+, \ n\in \mathbb{N}}$ defined by $$\psi_n(z):=(2\pi\gamma_n)^{-d/2}\exp\crl{-\frac{||z||^2}{2\gamma_n}}$$ and $\crl{\gamma_n\in \R_+, n\in \mathbb{N}}$ such that $\gamma_n\downarrow 0$.
	From \cite[Remark 3.14]{Ermoliev1995} we have $\psi_n\in C^1(\R^d), \ \ \int\psi_n(z)\mathrm{d}z=1$,  $n\in \mathbb{N}$, and for every $\delta>0$,
	$$\lim_n\int_{\crl{||z||> \delta}}\psi_n(z)\mathrm{d}z=0, \  \text{and} \  \lim_n \int_{||z||>\delta} ||z+\theta-\theta_*||^2\psi_{n}(z)\mathrm{d}z=0$$ uniformly in $\theta\in \R^d$.\\
	Let $\beta_{n}(\theta, \theta_*):=\int ||z+\theta-\theta_*||^2\psi_{n}(z)\mathrm{d}z$, $n\in \mathbb{N}, \  \theta\in V$, and define the auxiliary functions
	$$\phi_n(\theta):=f_n(\theta)+\beta_{n}(\theta, \theta_*), \ n\in \mathbb{N}, \ \theta\in V.$$
	Note that
	\begin{itemize}
		\item the functions $\crl{\beta_{n}, \ n\in \mathbb{N}}$ are of class $C^1(\R^d)$ by \cref{lemma: differentiability of averages},
		\item The sequence $\crl{\beta_n, \ n\in \mathbb{N}}$ epi-converge to $\theta\mapsto ||\theta-\theta_*||^2$ on $V$ due to  \cite[Theorem 3.7 and Remark 3.14]{Ermoliev1995} -- in fact, they converge uniformly,
		\item By assumption, $\crl{f_n, \ n\in \mathbb{N}}$ is a sequence of differentiable functions that epi-converge to $f$; therefore, $\crl{\phi_n, \ n\in \mathbb{N}}$ is a sequence of well-defined differentiable (hence continuous) functions on $V$ that epi-converge to $\phi$ on $V$.
	\end{itemize}
	Let $\crl{\bar{\theta}_n\in V, \ n\in \mathbb{N}}$ be a sequence of minimisers of $\crl{\phi_n, \ n\in \mathbb{N}}$, which exists since for any $n \in \mathbb{N}$, $\phi_n$ is continuous and $V$ a compact set. From the compactness of $V$ there exists a convergent subsequence $\crl{\theta_n \in V, \  n\in \mathbb{N}}$ of $\crl{\bar{\theta}_n\in V, \ n\in \mathbb{N}}$. By \cref{lemma: Ermoliev 3.5}, it holds that $\theta_n\to \theta_*$. We now turn to the second statement. For each $n \in \mathbb{N}$,
	\begin{align}
		0=\nabla \phi_n(\theta_n)=\nabla f_n(\theta_n)+\nabla \beta_{n}(\theta_n, \theta_*).
	\end{align}
	and in the limit,
	\begin{align}
		\label{eq: equality of limit gradients}
		\lim_n\nabla f_n(\theta_n)= -\lim_n \nabla  \beta_{n}(\theta_n, \theta_*)=0.
	\end{align}
	which proves the statement.\\
	In order to show the last equality, by continuity of the function $\theta\mapsto||\theta||^2$ one can use \cref{lemma: differentiability of averages} and write, for every $k$,
	\begin{align*}
		\nabla \beta_k(\theta_k,\theta_*) & = \nabla \int ||z+\theta_k-\theta_*||^2\psi_k(z)\mathrm{d}z \\&\overset{\text{\Cref{lemma: differentiability of averages}}}=\int \nabla ||z+\theta_k-\theta_*||^2\psi_k(z)\mathrm{d}z\\
		                                  & = \int 2(z+\theta_k-\theta_*)\psi_k(z)\mathrm{d}z           \\
		                                  & =2\crl{\int z\psi_k(z)\mathrm{d}z+\theta_k-\theta_*}.
	\end{align*}

	We can now note that, by assumption, $\int z\psi_k(z)\mathrm{d}z=0$ for
	every $k\in \mathbb{N}$; moreover, $\theta_k\to \theta_*$. Therefore, the
	last equality in \eqref{eq: equality of limit gradients} holds.
\end{proof}
\subsection{Convergence of time-inhomogeneous gradient descent} \label{subsec:stab-cv-inhomogeneous}

The following result establishes convergence of time inhomogeneous gradient algorithms to local minima of $l$. This generalizes results such as \cite{gupal_norkin}, not requiring confinement of the sequence $\{\theta_n,n\in\mathbb{N}\}$ for example.

\begin{thm}
	\label{thm:cv-descent-inho}
	For differentiable functions $\{f_n \colon \R^d \rightarrow \R, \ n\in \mathbb{N}\}$ and $\{\gamma_n \in \R_+, \ n\in \mathbb{N}\}$ consider the recursion defined for some $\theta_0 \in \R^d$ and for $n\geq 0$
	\begin{equation} \label{eq:inh-gradient-in-cv-thm}
		\theta_{n+1}=\theta_n-\gamma_n\nabla f_{n}(\theta_n) \,.
	\end{equation}

	Let  $\alpha\in [0,2]$, $\{L_n \in \R_+, \ n\in \mathbb{N}\}$ and $\{\delta_n \in [0,\infty), \ n\in \mathbb{N}\}$ be such that
	\begin{align*}
		\limsup_{n\rightarrow\infty} \gamma_n L_n<1,\quad \lim_{n\rightarrow\infty}(\delta_n/\gamma_n)=\lim_{n\rightarrow\infty}(\delta_n/\gamma_{n+1})=0,\quad\sum_{n=1}^\infty\gamma_n=\infty.
	\end{align*}

	Assume that the following  conditions hold:
	\begin{enumerate}
		\item\label{inf_f} $\inf_{(n,\theta)\in\mathbb{N}\times\R^d} f_n(\theta)>-\infty$.
		\item\label{L-smooth}  for all $\theta,\theta'\in\R^d$ and $n\in\mathbb{N}$,
		      \[f_n(\theta')\leq f_n(\theta)+\langle \nabla f_n(\theta),\theta'-\theta\rangle+L_n\|\theta'-\theta\|^2.\]
		\item\label{time_diff} for all $\theta\in\R^d$ and $n\in\mathbb{N}$,
		      \[f_{n+1}(\theta)-f_{n}(\theta)\leq \delta_n\big[1+ \|\nabla f_{n+1}(\theta)\|^\alpha\big].
		      \]
		\item\label{extra} One of the following conditions holds (with the convention $0\cdot\infty=0$):
		      \begin{enumerate}
			      \item\label{alpha} Condition~\ref{time_diff} holds with $\alpha=0$.
			      \item\label{sup_f}
			            \[
				            \frac{ \delta_{n} \sup_{\theta\in\R^d}\|f_{n+1}(\theta)
				            \|^{\alpha/2}}{\gamma_{n+1}^{\alpha/2}\sum_{m=1}^{n}\gamma_m} \to 0
				            \qquad \mbox{as }n \to \infty.
			            \]
			      \item\label{grad}
			            \begin{align*}
				            \frac{ \delta_{n} \sup_{\theta\in\R^d}\|\nabla f_{n+1}(\theta) \|^\alpha}{\sum_{m=1}^{n}\gamma_m}
				            \to 0\qquad\mbox{as }n\to \infty.
			            \end{align*}
			      \item\label{Lip_f} there exists a constant $\beta\in(0,1]$ and a sequence $\{\tilde{L}_n \in (0,\infty) \colon n\geq 1\}$ such that, for all $\theta,\theta'\in\R^d$ and $n\in\mathbb{N}$, $|f_n(\theta)-f_n(\theta')|\leq \tilde{L}_n\|\theta-\theta'\|^\beta$ and such that
			            \begin{align*}
				            \frac{ \delta_{n} \tilde{L}_{n+1}^\frac{\alpha}{2-\beta}}{\gamma_{n+1}^{\alpha (1-\beta)/(2-\beta)}\sum_{m=1}^{n}\gamma_m}
				            \to 0 \qquad \mbox{as } n \to \infty.
			            \end{align*}
		      \end{enumerate}
	\end{enumerate}
	Then, there exists a subsequence $\crl{\theta_{n_k}, \ k\in \mathbb{N}}$ of $\{\theta_n,n\geq 1\}$ such that $\lim_{k\rightarrow\infty}\|\nabla f_{n_k}(\theta_{n_k})\|=0$.
\end{thm}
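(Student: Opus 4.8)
The plan is to run a one-step descent/Lyapunov argument on $V_n:=f_n(\theta_n)$, telescope it into a single summable inequality, and then exclude $\liminf_n\|\nabla f_n(\theta_n)\|>0$ by contradiction. First I would establish the descent inequality: applying Condition~\ref{L-smooth} to the pair $\round{\theta_n,\theta_{n+1}}$ with $\theta_{n+1}-\theta_n=-\gamma_n\nabla f_n(\theta_n)$ gives $f_n(\theta_{n+1})\leq f_n(\theta_n)-\gamma_n\round{1-\gamma_nL_n}\|\nabla f_n(\theta_n)\|^2$, and since $\limsup_n\gamma_nL_n<1$ there exist $c>0$ and $N_1$ with $1-\gamma_nL_n\geq c$ for $n\geq N_1$, so this is a genuine descent. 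To bridge successive objectives I would use Condition~\ref{time_diff} at $\theta_{n+1}$, namely $f_{n+1}(\theta_{n+1})-f_n(\theta_{n+1})\leq\delta_n\round{1+\|\nabla f_{n+1}(\theta_{n+1})\|^\alpha}$. Writing $g_n:=\|\nabla f_n(\theta_n)\|$ and combining the two displays yields the Lyapunov recursion $V_{n+1}-V_n\leq\delta_n+\delta_n g_{n+1}^\alpha-\gamma_n(1-\gamma_nL_n)g_n^2$. Summing from $1$ to $N$ and bounding $V_{N+1}$ below with Condition~\ref{inf_f} gives the master inequality $\sum_{n=1}^N\gamma_n(1-\gamma_nL_n)g_n^2\leq C_0+\sum_{n=1}^N\delta_n+\sum_{n=1}^N\delta_n g_{n+1}^\alpha$ for a constant $C_0$.

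The core difficulty is the cross-term $\sum_n\delta_n g_{n+1}^\alpha$: the gradient is evaluated at the \emph{next} iterate under the \emph{next} objective, so it is not directly comparable with the current descent term $\gamma_n g_n^2$. My plan is to reindex it as $\sum_{m=2}^{N+1}\delta_{m-1}g_m^\alpha$, so that its generic term now sits at the same index $m$ as a descent term $\gamma_m g_m^2$, and then to absorb the terms with $m\leq N$ into the descent sum. For $\alpha=2$ this needs only $\delta_{m-1}/\gamma_m\to0$, which makes the coefficient $\delta_{m-1}$ a negligible fraction of $\gamma_m(1-\gamma_mL_m)\geq c\gamma_m$; for $\alpha\in[0,2)$ I would use a Young inequality $\delta_{m-1}g_m^\alpha\leq\tfrac c2\gamma_m g_m^2+R_m$ with residual $R_m=O\round{(\delta_{m-1}/\gamma_m)^{2/(2-\alpha)}\gamma_m}=o(\gamma_m)$, which peels off half the descent budget and leaves a residual that is negligible after summation. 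Only the single boundary term $\delta_N g_{N+1}^\alpha$ then survives, leaving $\tfrac c2\sum_{m=N_1}^N\gamma_m g_m^2\leq C_1+\sum_{n=1}^N\delta_n+o\round{S_N}+\delta_N g_{N+1}^\alpha$, with $S_N:=\sum_{m=1}^N\gamma_m$.

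Finally I would argue by contradiction: if no subsequence of gradients vanishes then $g_n\geq\varepsilon$ for some $\varepsilon>0$ and all large $n$, so the left side grows at least like $\tfrac c2\varepsilon^2 S_N\to\infty$. Dividing by $S_N$, the constant contributes $o(1)$; $S_N^{-1}\sum_{n\leq N}\delta_n\to0$ by $\delta_n/\gamma_n\to0$ together with an elementary Stolz--Ces\`aro/Kronecker argument; the residual is $o(1)$ by construction; and it remains only to show $S_N^{-1}\delta_N g_{N+1}^\alpha\to0$. This last point is exactly what Condition~\ref{extra} supplies, its four alternatives corresponding to the four natural ways of bounding the single gradient $g_{N+1}$: trivially when $\alpha=0$ (\ref{alpha}); by the uniform gradient bound (\ref{grad}); and, after one further application of the descent inequality at step $n+1$ in the form $\gamma_{n+1}(1-\gamma_{n+1}L_{n+1})g_{n+1}^2\leq f_{n+1}(\theta_{n+1})-f_{n+1}(\theta_{n+2})$, either by the uniform bound on $|f_{n+1}|$ (\ref{sup_f}, which produces the factor $\gamma_{n+1}^{-\alpha/2}$) or by feeding the H\"older bound of Condition~\ref{Lip_f} into the same inequality (\ref{Lip_f}, which produces the exponents in $\tilde L_{n+1}^{\alpha/(2-\beta)}\gamma_{n+1}^{-\alpha(1-\beta)/(2-\beta)}$). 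In each case the stated hypothesis forces $S_N^{-1}\delta_N g_{N+1}^\alpha\to0$, so the right side tends to $0$ while the left side stays $\geq\tfrac c2\varepsilon^2>0$, the desired contradiction.

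I expect the main obstacle to be precisely the reindex-and-absorb step that converts the look-ahead cross-term $\sum_n\delta_n g_{n+1}^\alpha$ into a single boundary term, together with the verification that the Young residual and the shifted descent terms align (handling the finitely many initial indices where $1-\gamma_nL_n$ or $c\gamma_m-\delta_{m-1}$ may have the wrong sign). Once this reduction is in place, Condition~\ref{extra} is tailored exactly to annihilate the leftover boundary term, and the rest is routine normalisation by $S_N$.
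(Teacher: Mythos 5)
Your proposal is correct and follows essentially the same route as the paper's proof: the same descent inequality from Condition~2, the same use of Condition~3 at $\theta_{n+1}$, the same telescoping and reindexing that isolates the single boundary term $\delta_N\|\nabla f_{N+1}(\theta_{N+1})\|^\alpha$, and the same case analysis of Condition~4 (including re-using the descent inequality at step $N+1$ to handle cases (b) and (d)). The only difference is a minor technical one in the absorption step: you fold the cross-term into the order-$2$ descent sum via Young's inequality before invoking the contradiction hypothesis, whereas the paper first assumes $\|\nabla f_n(\theta_n)\|\geq\epsilon_2$, uses $\gamma_n\|\nabla f_n(\theta_n)\|^2\geq\epsilon_2^{2-\alpha}\gamma_n\|\nabla f_n(\theta_n)\|^\alpha$ to homogenize both terms to order $\alpha$, and then absorbs directly from $\delta_{m-1}/\gamma_m\to0$.
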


\begin{proof}

	Under Condition~\ref{L-smooth} and using \eqref{eq:inh-gradient-in-cv-thm}, for all $n\geq 1$ we have
	\begin{equation}\label{eq:descent}
		\begin{split}
			f_n(\theta_{n+1}) & \leq f_n(\theta_n)+\langle\nabla f_n(\theta_n),\theta_{n+1}-\theta_n\rangle+L_n\|\theta_{n+1}-\theta_n\|^2 \\
			                  & = f_n(\theta_n)-\gamma_n\|\nabla f_n(\theta_n)\|^2\big(1- L_n\gamma_n \big).
		\end{split}
	\end{equation}

	Let $n_1\in\mathbb{N}$ and $\epsilon_1\in(0,1)$ be such that $1-\gamma_n L_n\geq \epsilon_1$ for all $n\geq n_1$. Then, for all $n\geq n_1$ we have, using \eqref{eq:descent} and under
	Condition~\ref{time_diff},
	\begin{equation}\label{eq:main_eq_0}
		\begin{split}
			f_{n+1}(\theta_{n+1}) & \leq f_n(\theta_n)-\gamma_n\|\nabla f_n(\theta_n)\|^2\big(1- L_n\gamma_n \big)+f_{n+1}(\theta_{n+1})-f_n(\theta_{n+1})        \\
			                      & \leq f_n(\theta_n)-\gamma_n \epsilon_1\|\nabla f_n(\theta_n)\|^2+\delta_n\big(1+\|\nabla f_{n+1}(\theta_{n+1})\|^\alpha\big).
		\end{split}
	\end{equation}

	We now prove the result of the theorem by contradiction. To this aim, assume that there exists an $\epsilon_2\in(0,1)$ and an $n_2\in\mathbb{N}$ such that $\|\nabla f_n(\theta_n)\|\geq \epsilon_2$ for all $n\geq n_2$.

	Then,   for all    $n\geq n_3:=\max\{n_1,n_2\}$ we have, by \eqref{eq:main_eq_0} and letting $\epsilon_3=\epsilon_1\epsilon_2^{2-\alpha}$
	\begin{align*}
		f_{n+1}(\theta_{n+1}) & \leq f_{n}(\theta_{n})-\epsilon_3 \gamma_{n}\|\nabla f_{n}(\theta_{n})\|^\alpha+\delta_{n}\big(1+\|\nabla f_{n+1}(\theta_{n+1})\|^\alpha\big).
	\end{align*}
	and thus, for all $n\geq n_3$, we have
	\begin{equation}\label{eq:main_eq}
		\begin{split}
			f_{n+1}(\theta_{n+1}) \leq
			     & f_{n_3}(\theta_{n_3})-\epsilon_3\sum_{m=n_3}^{n}\gamma_m\|\nabla f_m(\theta_m)\|^\alpha                            \\
			     & +\bigg(\sum_{m=n_3}^{n} \delta_{m}  \|\nabla f_{m+1}(\theta_{m+1})\|^\alpha\bigg)+\sum_{m=n_3}^{n} \delta_m        \\
			\leq & f_{n_3}(\theta_{n_3})-\sum_{m=n_3+1}^{n}\|\nabla f_m(\theta_m)\|^\alpha\big(\epsilon_3\gamma_m - \delta_{m-1}\big) \\
			     & + \delta_{n} \|\nabla f_{n+1}(\theta_{n+1})\|^\alpha+\sum_{m=n_3}^{n} \delta_m\,.
		\end{split}
	\end{equation}

	To proceed further assume without loss of generality that $n_3$ is sufficiently large so that, for some $\epsilon_4\in(0,1)$, we have  $\epsilon_3\gamma_n - \delta_{n-1}\geq \epsilon_4\gamma_n$ for all $n\geq n_3$. Then, using \eqref{eq:main_eq}, for all $n\geq n_3$ we have
	\begin{equation}\label{eq:main_eq2}
		\begin{split}
			f_{n+1}(\theta_{n+1}) & \leq f_{n_3}(\theta_{n_3})-\epsilon_4\epsilon_2^\alpha\Big(\sum_{m=n_3+1}^{n}\gamma_m\Big)+ \delta_{n} \|\nabla f_{n+1}(\theta_{n+1})\|^\alpha+\sum_{m=n_3}^{n} \delta_m                                                                      \\
			                      & =f_{n_3}(\theta_{n_3})-\sum_{m=n_3+1}^{n}\gamma_m\bigg(\epsilon_4\epsilon_2^\alpha-\frac{  \delta_{n} \|\nabla f_{n+1}(\theta_{n+1})\|^\alpha}{\sum_{m=n_3+1}^{n}\gamma_m}-\frac{\sum_{m=n_3}^{n} \delta_m}{\sum_{m=n_3+1}^{n}\gamma_m}\bigg)
		\end{split}
	\end{equation}
	where, under the assumptions of the theorem,
	\begin{align*}
		\lim_{n\rightarrow\infty}\frac{\sum_{m=n_3}^{n} \delta_m}{\sum_{m=n_3+1}^{n}\gamma_m}=0,\quad \lim_{n\rightarrow\infty}\sum_{m=n_3+1}^{n}\gamma_m=\infty.
	\end{align*}
	Therefore, if
	\begin{align}\label{eq:toShow}
		\lim_{n\rightarrow\infty}\frac{  \delta_{n} \|\nabla f_{n+1}(\theta_{n+1})\|^\alpha}{\sum_{m=n_3+1}^{n}\gamma_m}=0
	\end{align}
	then, by \eqref{eq:main_eq2}, we have
	$\lim_{n\rightarrow\infty}f_{n+1}(\theta_{n+1})=-\infty$ which contradicts
	Condition~\ref{inf_f}. Hence, to complete the proof it remains to show that \eqref{eq:toShow} holds under the assumption of the theorem.

	Remark first that \eqref{eq:toShow} trivially holds under Condition~\ref{alpha}
	and under Condition~\ref{grad}. Next, remark that, by \eqref{eq:descent}, for
	all $n\geq n_3$ we have
	\begin{align}\label{eq:b}
		f_n(\theta_{n+1})\leq f_n(\theta_n)-\gamma_n\epsilon_1\|\nabla f_n(\theta_n)\|^2\Leftrightarrow \|\nabla f_n(\theta_n)\|^\alpha \leq \bigg(\frac{ f_n(\theta_n)-f_n(\theta_{n+1}) }{ \gamma_n\epsilon_1 }\bigg)^{\frac{\alpha}{2}}
	\end{align}
	from which we readily obtain that \eqref{eq:toShow} holds under Condition~\ref{sup_f}.

	Finally, by \eqref{eq:b}, for all $n\geq n_3$ we have, under
	Condition~\ref{Lip_f} and using~\eqref{eq:inh-gradient-in-cv-thm},
	\begin{align*}
		\|\nabla f_n(\theta_n)\|^\alpha & \leq \bigg(\frac{\tilde{L}_n\| \theta_n-\theta_{n+1}\|^\beta) }{ \gamma_n\epsilon_1 }\bigg)^{\frac{\alpha}{2}}= \bigg(\frac{\tilde{L}_n \gamma_n^\beta\|\nabla f_n(\theta_n)\|^\beta }{ \gamma_n\epsilon_1 }\bigg)^{\frac{\alpha}{2}}                                               \\
		                                & \Leftrightarrow                                                                                                                                                                                                                                                                     \\
		                                & \|\nabla f_n(\theta_n)\|^{\alpha(1-\beta/2)}\leq \bigg(\frac{\tilde{L}_n  }{ \gamma^{1-\beta}_n\epsilon_1 }\bigg)^{\frac{\alpha}{2}} \Leftrightarrow\|\nabla f_n(\theta_n)\|^{\alpha}\leq \bigg(\frac{\tilde{L}_n  }{ \gamma^{1-\beta}_n\epsilon_1 }\bigg)^{\frac{\alpha}{2-\beta}}
	\end{align*}
	and thus \eqref{eq:toShow} follows. The proof of the theorem is complete.
\end{proof}

\section{Laplace functionals}
\label{sec:Laplace_functionals}

We study epi-convergence of sequences of Laplace functionals based on certain kernels possessing a concentration property (called mollifiers below), of which the Gaussian density is a particular case. We also provide a new descent lemma which is based on the definition and construction of Laplace functionals.




\subsection{Epi-convergence of Laplace functionals} \label{subsec:cv-laplace-functionals}

We begin with the definition of
mollifiers.

\begin{definition}[mollifiers]
	\label{def: mollifiers}
	Let $$\crl{\psi_n:\R^d\to \R_+, \   \  \psi_n\in C^1(\R^d), \ \
			\int\psi_n(z)\mathrm{d}z=1, \   n\in \mathbb{N} }$$
	be a sequence of functions such that for every $\delta>0$,
	$$\lim_{n\to \infty} \int_{||z||> \delta}\psi_n(z)\mathrm{d}z=0.$$
	We call such $\psi_n$ mollifiers.
\end{definition}

Gaussian mollifiers are defined as $\psi_n(z) = \gamma_n^{-d/2}
	\phi(\gamma_n^{-1/2} z)$, with $\gamma_n \downarrow 0$, and $\phi$ is the
standard Gaussian density.


\cref{thm: epi convergence direct} shows epi-convergence of mollifier-based Laplace functionals, namely of sequences $\crl{f_n; n\geq 0}$ of the form
\begin{align}
	\label{eq: Laplace functionals}
	f_n(\theta)=-\log \int e^{-l(x)}\psi_n(x-\theta)\mathrm{d}x, \quad \theta \in \R^d.
\end{align}
\begin{remark}
	In the Gaussian scenario, mollifiers in (\ref{eq: Laplace functionals}) can be written as $$\psi_n(x-\theta)=\pi_{\theta,\gamma_n}(x)\coloneq \gamma_n^{-d/2}\phi\round{\gamma_n^{-1/2}\round{x-\theta}}\, ,$$
	for $x,\theta\in \R^d$ and $\gamma_n>0$, which links to the notation in the Introduction and other sections. Here, notation $\psi_n$ is used to highlight the fact that sequences of densities with such concentration property could also include kernels which do not belong to exponential families, as for instance the uniform kernels in \cite{gupal_norkin}.
\end{remark}

It is useful for later calculations to note that by a change of variable
$z=x-\theta$ we can equivalently write $$f_n(\theta)=-\log \int
	e^{-f(\theta+z)}\psi_n(z)\mathrm{d}z.$$

The proof of \cref{thm: epi convergence direct} is inspired by ideas of
\citet[Theorems 3.2, 3.7 and Corollary 3.3]{Ermoliev1995} but it extends their scope.

\begin{thm}
	\label{thm: epi convergence direct}
	Let $f:\R^d\to \R$ be a lower-bounded, strongly lower semi-continuous function. Let $\crl{\psi_n:\R^d\to \R_+, \ n\in \mathbb{N}}$ be mollifiers.
	Let $f_n(\theta)= -\log \int e^{-f(\theta+z)}\psi_n(z)\mathrm{d}z$, $\theta\in \R^d$, $n=1,2, \ldots$. Then, the sequence $\crl{f_n:\R^d\to \R, \  n\in \mathbb{N}}$ epi-converges to $f$.
\end{thm}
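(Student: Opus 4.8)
The plan is to verify the two defining conditions of epi-convergence (Def.~\ref{def-epilimits-1}) directly for $\{h_n\}$, working throughout with the integrals $I_n(\theta):=\int e^{-l(\theta+z)}\psi_n(z)\,\mathrm dz$, so that $h_n(\theta)=-\log I_n(\theta)$. Since $t\mapsto-\log t$ is continuous and strictly decreasing, the two conditions become $\limsup_n I_n(\theta_n)\le e^{-l(\theta)}$ for every sequence $\theta_n\to\theta$, and $\liminf_n I_n(\theta_n)\ge e^{-l(\theta)}$ for some sequence $\theta_n\to\theta$. I would first record that, as $l$ is lower bounded, say $l\ge m$, the integrand obeys $0<e^{-l}\le e^{-m}$, so each $I_n(\theta)$ is finite and strictly positive and every $h_n$ is a well-defined real-valued function; the change of variables $z=x-\theta$ noted before the statement is what puts the mollifier in this convenient form.

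For the first condition (the $\liminf$ inequality) I would fix $\theta$, an arbitrary sequence $\theta_n\to\theta$, and $\epsilon>0$. Lower semicontinuity of $l$ at $\theta$ yields a radius $\delta_0>0$ with $l(x)\ge l(\theta)-\epsilon$ on the closed ball $\mathbb{B}(\theta,\delta_0)$. Setting $\delta=\delta_0/2$, for $n$ large enough that $\|\theta_n-\theta\|<\delta$ the triangle inequality gives $\theta_n+z\in\mathbb{B}(\theta,\delta_0)$ whenever $\|z\|\le\delta$, so $e^{-l(\theta_n+z)}\le e^{-l(\theta)+\epsilon}$ there. Splitting $I_n(\theta_n)$ over $\{\|z\|\le\delta\}$ and $\{\|z\|>\delta\}$, bounding the first piece by $e^{-l(\theta)+\epsilon}\int\psi_n\le e^{-l(\theta)+\epsilon}$ and the second by $e^{-m}\int_{\|z\|>\delta}\psi_n$, which vanishes by the mollifier property (Def.~\ref{def: mollifiers}), I get $\limsup_n I_n(\theta_n)\le e^{-l(\theta)+\epsilon}$; letting $\epsilon\downarrow0$ gives $\liminf_n h_n(\theta_n)\ge l(\theta)$. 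This direction uses only lower semicontinuity and the lower bound.

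For the second condition (the recovery sequence) I would exploit strong lower semicontinuity, which supplies points approaching $\theta$ at which $l$ both takes values converging to $l(\theta)$ and is well behaved for averaging: continuity points $w_k\to\theta$ with $l(w_k)\to l(\theta)$, so that for each $k$ there is a ball $\mathbb{B}(w_k,r_k)$ on which $l(x)\le l(w_k)+\epsilon_k$ with $\epsilon_k\downarrow0$. Taking $\theta_n:=w_{k(n)}$ along a slowly increasing schedule $k(n)\uparrow\infty$, I bound $I_n(\theta_n)$ from below by its integral over $\{\|z\|\le r_{k(n)}\}$, where the integrand is at least $e^{-(l(w_{k(n)})+\epsilon_{k(n)})}$ and the mollifier mass $\int_{\|z\|\le r_{k(n)}}\psi_n$ tends to $1$; a diagonal choice of $k(n)$ then delivers $\liminf_n I_n(\theta_n)\ge e^{-l(\theta)}$, that is $\limsup_n h_n(\theta_n)\le l(\theta)$. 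Since the first condition applies to this sequence too, the sandwich $l(\theta)\le\liminf_n h_n(\theta_n)\le\limsup_n h_n(\theta_n)\le l(\theta)$ forces $\lim_n h_n(\theta_n)=l(\theta)$.

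I expect this recovery step to be the crux. The subtlety is that $I_n$ is a full-dimensional Lebesgue average, so a sequence of approaching points that is merely Lebesgue-negligible cannot by itself raise the integral to $e^{-l(\theta)}$; this is exactly the mechanism behind the counterexample of Section~\ref{sec:Convergence:-deterministic-scena}, where $l$ is below $1$ only on the diagonal and the functionals are identically $1$. Strong lower semicontinuity is precisely the hypothesis ruling this out, by guaranteeing approaching points near which $l$ stays close to $l(\theta)$ on a set of positive measure (a neighbourhood of a continuity point). The care in the argument therefore lies in coordinating the three limits—the radii $r_k$, the tolerances $\epsilon_k$, and the mollifier index $n$—through the diagonal schedule $k(n)$.
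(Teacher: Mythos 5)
Your proof is correct and follows essentially the same route as the paper's: the $\liminf$ inequality via a near/far split of $\int e^{-l(\theta_n+z)}\psi_n(z)\,\mathrm{d}z$ using lower semicontinuity and the mollifier tail condition, a recovery sequence built from the continuity points supplied by strong lower semicontinuity (the paper packages this as Lemma~\ref{lemma: second condition of epi convergence} via an inner set-limit argument, where you use an explicit diagonal schedule), and the transfer through the continuous decreasing map $-\log$. The only caveat, which applies equally to the paper's own Lemma~\ref{lemma: second condition of epi convergence}, is that your recovery step reads strong lower semicontinuity as supplying \emph{continuity} points $w_k$, slightly more than the literal wording of Definition~\ref{def: slsc} but exactly what the paper itself uses.
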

\begin{proof}
	Fix $\theta\in \R^d$. We remark that by \cref{prop: epi_closure}, for a
	lower-bounded, integrable function $f:\R^d\to \R$, the epi-closure
	$\mathrm{cl_e} f$ is a lower semi-continuous function and the hypo-closure
	$\mathrm{cl_h} f$ is an upper semi-continuous function. Moreover, it holds
	$\mathrm{cl_e} f\leq f\leq \mathrm{cl_h}f$.  Note also that if a function
	$f$ is lower-bounded and strongly lower semi-continuous, then
	$e^{-f(\cdot)}$ is upper bounded and strongly upper semi-continuous. We break the proof into three steps.
	\begin{itemize}
		\item Define $g:=e^{-f}$ and $g_n(\cdot):=\int e^{-f(\cdot+z)}\psi_n(z)\mathrm{d}z$.  Let $\theta_n\to \theta$. As first step, we show that
		      \begin{equation}
			      \label{eq: closure chain 2}
			      \mathrm{cl_e} g(\theta)\leq \liminf_n g_n(\theta_n)\leq \limsup_n g_n(\theta_n)\leq \mathrm{cl_h} g(\theta)
		      \end{equation}

		      by only using that $f$ is lower-bounded and that $\crl{\psi_n, \ n\in \mathbb{N}}$ is a sequence of mollifiers.\\
		      Fix $\epsilon>0$.     \begin{itemize}
			      \item By upper semi-continuity of $\mathrm{cl_h} e^{-f}$, there exists $\delta=\delta(\epsilon)>0$ such that $$\mathrm{cl_h} e^{-f(\theta+z)}\leq \mathrm{cl_h}e^{-f(\theta)}+\epsilon$$ for all $z\in \R^d$ such that $||z||\leq \delta$.
			      \item For the above $\delta$, by definition of the mollifiers $\crl{\psi_n, \ n\in \mathbb{N}}$, we can choose $n=n(\epsilon,\delta)$ large enough such that, for all $n\geq n(\epsilon,\delta)$, \begin{equation}
				            \label{eq: remainder part 2}
				            0 \leq \int_{||z||> \frac{\delta}{2}}e^{-f(\theta+z)}\psi_n(z)\mathrm{d}z\leq\underset{u\in \R^d}{\sup} e^{-f(u)}\int_{||z||\geq\frac{\delta}{2}} \psi_n(z)\mathrm{d}z\leq   \frac{\epsilon}{2}.
			            \end{equation}
		      \end{itemize}
		      We first show the last inequality in equation \eqref{eq: closure chain 2}.  Let $\delta=\delta(\epsilon)$ as above. For all $n\geq n(\epsilon, \delta)$ large enough such that $||\theta_n-\theta||\leq \frac{\delta}{2}$, we have  $||\theta_n-\theta+z||\leq \delta$ for any $z$ such that $||z||\leq \frac{\delta}{2}$. So we can write
		      \begin{align*}
			      g_n(\theta_n) & =\int e^{-f(\theta_n+z)}\psi_n(z)\mathrm{d}z                                                                                                 \\
			                    & =\int_{||z||\leq \frac{\delta}{2}}e^{-f(\theta_n+z)}\psi_n(z)\mathrm{d}z+\int_{||z||>\frac{\delta}{2}}e^{-f(\theta_n+z)}\psi_n(z)\mathrm{d}z \\
			                    & \leq \int_{||z||\leq \frac{\delta}{2}}\mathrm{cl_h}e^{-f(\theta_n+z)}\psi_n(z)\mathrm{d}z+\frac{\epsilon}{2}                                 \\
			                    & =\int_{||z||\leq \frac{\delta}{2}}\mathrm{cl_h}e^{-f(\theta+\theta_n-\theta+z)}\psi_n(z)\mathrm{d}z+\frac{\epsilon}{2}                       \\
			                    & \leq (\mathrm{cl_h}e^{-f(\theta)}+\epsilon)\int_{||z||\leq \frac{\delta}{2}}\psi_n(z)\mathrm{d}z+\frac{\epsilon}{2}
		      \end{align*}
		      Hence, for $n$ large enough we have
		      \begin{align*}
			      g_n(\theta_n)\leq \mathrm{cl_h}e^{-f(\theta)}+\epsilon+\frac{\epsilon}{2}
		      \end{align*}
		      so for any $\epsilon>0$
		      \begin{align*}
			      \limsup_n g_n(\theta_n)\leq \mathrm{cl_h}g(\theta)+\frac{3}{2}\epsilon.
		      \end{align*}

		      Therefore, by taking $\epsilon\to 0$, we obtain
		      \begin{equation}
			      \label{eq: limsup and hypo-closure}
			      \limsup_n g_n(\theta_n)\leq \mathrm{cl_h}g(\theta).
		      \end{equation}
		      The middle inequality in \eqref{eq: closure chain 2} is obvious, while the first can be proven in a very similar way, using positivity from equation (\ref{eq: remainder part 2}) and the fact that, for $n$ large-enough, $\int_{||z||\leq \frac{\delta}{2}}\psi_n(z)\mathrm{d}z\geq 1-\frac{\epsilon}{2}$.

		\item The next step is to show that $g_n$ hypo-converges to $g$. Here we use strong lower semi-continuity of $f$.

		      \begin{itemize}
			      \item As $g=e^{-f}$ is upper semi-continuous, it holds that $g=\mathrm{cl_h}g$ by \cref{prop: epi_closure}. Hence, by equation (\ref{eq: limsup and hypo-closure}), for any $\theta\in \R^d$ and any sequence $\crl{\theta_n, \ n\in \mathbb{N}}$ with $\theta_n\to \theta$, we have
			            $$\limsup_n g_n(\theta_n)\leq g(\theta).$$
			      \item It remains to show that for any $\theta\in \R^d$, $\lim_n g_n(\theta_n)=g(\theta)$ for at least one sequence such that $\theta_n\to \theta$. Here we use \cref{lemma: second condition of epi convergence}.
		      \end{itemize}
		\item We finally show that $\crl{f_n, \ n\in \mathbb{N}}$ epi-converges to $f$, based on the above results, continuity and monotonicity of $\log$, and on the fact that, by definition, if $\crl{f_n:\R^d\to \R, \ n\in \mathbb{N}}$ is a sequence of functions that hypo-converge to $f:\R^d\to \R$, then $\crl{-f_n, \ n\in \mathbb{N}}$ epi-converge to $-f$.
		      In detail:  by continuity and monotonicity of $\log(\cdot)$ and by hypo-convergence of $g_n$ to $g$, we can first show that $\crl{\log(g_n), \ n\in \mathbb{N}}$ hypo-converges to $\log(g)=-f$. Indeed, the following chain of inequalities holds for any sequence $\crl{\theta_n, \ n\in \mathbb{N}}$ with $\theta_n\to \theta$:
		      \begin{align*}
			      \limsup_n \log\round{g_n(\theta_n)} & =\lim_n \round{\sup_{m\geq n}\log(g_m(\theta_m))}
			      \leq \lim_n \log\round{\sup_{m\geq n}g_m(\theta_m)}                                                                              \\
			                                          & =\log\round{\lim_n\round{\sup_{m\geq n}g_m(\theta_m)}}=\log\round{\limsup_n g_n(\theta_n)} \\
			                                          & \leq \log(e^{-f(\theta)})=-f(\theta)
		      \end{align*}
		      Hence the first condition for hypo-convergence of
		      $\crl{\log(g_n), \ n\in \mathbb{N}}$ to $\log(g)=-f$ holds. For the second condition, we just use hypo-convergence of $\crl{g_n, \ n\in \mathbb{N}}$ to $g$ and continuity of $\log$. Note that the above chain of inequalities would hold for any non-decreasing continuous transformation. Finally, as $f_n=-\log(g_n)$, we conclude that $\crl{f_n, \ n\in \mathbb{N}}$ epi-converge to $f$.

	\end{itemize}

\end{proof}

Compared to the results by Ermoliev, we can work with mollifiers with unbounded support with one less assumption: Ermoliev requires that for any $\delta>0$ $\lim_n\int_{\crl{||z||>\delta}} |f(\theta+z)|\psi_n(z)\mathrm{d}z=0$ uniformly in $\theta\in \R^d$, to control the tail behaviour. Here we can avoid an assumption of this kind as the mollifiers weight the function $x\mapsto e^{-f(x)}$, which is upper-bounded when $x\mapsto f(x)$ is lower-bounded.\\

When the objective is continuous, we obtain stronger convergence results.

\begin{lemma}
	\label{lemma: continuous convergence mollifiers}
	Under the conditions of \cref{thm: epi convergence direct}, if $f$ is also continuous, then the sequence $\crl{f_n, \ n\in \mathbb{N}}$ converges continuously to $f$, that is, $\lim_n f_n(\theta_n)=f(\theta)$ for any sequence $\crl{\theta_n, \ n\in \mathbb{N}}$ such that $\theta_n\to \theta$, for any $\theta\in \R^d$. This also implies that the sequence converges uniformly to $f$ on compact subsets of $\R^d$.
\end{lemma}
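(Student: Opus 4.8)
The plan is to re-use, almost verbatim, the sandwich inequality \eqref{eq: closure chain 2} already established in the proof of Theorem~\ref{thm: epi convergence direct}, observe that continuity of $l$ collapses it to an equality, and then push everything through the continuous map $-\log(\cdot)$. Write $g:=e^{-l}$ and $g_n(\theta):=\int e^{-l(\theta+z)}\psi_n(z)\dd z$ as in that proof. The crucial point is that the first step of Theorem~\ref{thm: epi convergence direct} derived
\[
\mathrm{cl_e}\, g(\theta)\leq \liminf_n g_n(\theta_n)\leq \limsup_n g_n(\theta_n)\leq \mathrm{cl_h}\, g(\theta)
\]
for every $\theta\in\R^d$ and every sequence $\theta_n\to\theta$, using \emph{only} lower-boundedness of $l$ and the mollifier property of $\{\psi_n\}$ (strong lower-semicontinuity was needed only for the subsequent hypo-convergence step). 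Hence this chain is available here without change. Since $l$ is now continuous, $g=e^{-l}$ is continuous, so it is simultaneously lower- and upper-semicontinuous and $\mathrm{cl_e}\, g=g=\mathrm{cl_h}\, g$. The sandwich therefore forces $\lim_n g_n(\theta_n)=g(\theta)$ for every $\theta_n\to\theta$, i.e.\ $\{g_n\}$ converges continuously to $g$.

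Next I would transfer this to $h_n=-\log g_n$. Because $l$ is finite-valued and lower-bounded, $g(\theta)=e^{-l(\theta)}\in(0,\infty)$, and $\log$ is continuous on $(0,\infty)$; composing with the continuous convergence of $g_n$ just obtained yields
\[
h_n(\theta_n)=-\log g_n(\theta_n)\;\longrightarrow\;-\log g(\theta)=l(\theta)
\]
for every sequence $\theta_n\to\theta$. This is precisely continuous convergence of $\{h_n\}$ to $l$, which is the first assertion of the lemma.

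For the final claim I would invoke (and prove by contradiction) the standard fact that continuous convergence to a continuous limit entails uniform convergence on compact sets. Suppose uniform convergence fails on some compact $K$: then there exist $\epsilon>0$, indices $n_k\uparrow\infty$, and points $y_k\in K$ with $|h_{n_k}(y_k)-l(y_k)|\geq \epsilon$. By compactness of $K$ extract a further subsequence $y_{k_j}\to x^\ast\in K$. I would then splice these into a full sequence $\{\vartheta_n\}$ by setting $\vartheta_{n_{k_j}}=y_{k_j}$ and $\vartheta_n=x^\ast$ otherwise, so that $\vartheta_n\to x^\ast$. Continuous convergence gives $h_{n_{k_j}}(y_{k_j})\to l(x^\ast)$, while continuity of $l$ gives $l(y_{k_j})\to l(x^\ast)$; subtracting yields $|h_{n_{k_j}}(y_{k_j})-l(y_{k_j})|\to 0$, contradicting the bound $\geq\epsilon$.

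The bulk of the work is genuinely inherited from Theorem~\ref{thm: epi convergence direct}, so the main thing to get right is the structural observation that the two-sided bound \eqref{eq: closure chain 2} does not rely on strong lower-semicontinuity and is therefore reusable; the only self-contained piece is the contradiction-and-compactness argument in the last paragraph, which is routine once continuous convergence and continuity of the limit are in hand.
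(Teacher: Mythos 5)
Your proposal is correct and follows essentially the same route as the paper: collapse the sandwich bound \eqref{eq: closure chain 2} using $\mathrm{cl_e}\,g=g=\mathrm{cl_h}\,g$ for continuous $g=e^{-l}$, then compose with $-\log$. The only difference is that you prove the passage from continuous convergence to uniform convergence on compacts by hand, where the paper simply cites \cite[Theorem 7.14]{RockWets98}; your compactness-and-contradiction argument is a valid substitute.
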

\begin{proof}
	For a continuous function $f:\R^d\to \R$, $\mathrm{cl_e}f=\mathrm{cl_h}f=f$ by \cref{prop: epi_closure}. Hence the statement about continuous convergence follows by the same steps that lead to equation (\ref{eq: closure chain 2}) in Theorem \ref{thm: epi convergence direct}, combined with continuity of $\log(\cdot)$. The statement about uniform convergence on compact sets follows by \cite[Theorem 7.14]{RockWets98}.
\end{proof}

\begin{lemma}
	\label{lemma: second condition of epi convergence}
	Let $f:\R^d\to \R$ be a lower-bounded, strongly lower semi-continuous function. Let $$\crl{\psi_n:\R^d\to \R_+, \ \int\psi_n(z)\mathrm{d}z=1, \ n\in \mathbb{N}}$$
	be a sequence of functions such that, for any $\delta > 0$,
	$\lim_n\int_{||z||\geq \delta}\psi_n(z)\mathrm{d}z=0.$ Let $g(\theta):=e^{-f(\theta)}$ and $g_n(\theta):=\int e^{-f(\theta+z)}\psi_n(z)\mathrm{d}z$, $\theta\in \R^d$, $n=1,2,\ldots$. Then, for any $\theta\in \R^d$, there is at least one sequence $\crl{\theta_n, \ n\in \mathbb{N}}$ such that
	$$\lim_n g_n(\theta_n)=g(\theta).$$
\end{lemma}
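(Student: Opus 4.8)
The plan is to recognise this lemma as supplying the second (``recovery sequence'') half of the hypo-convergence of $g_n$ to $g$ needed in Theorem~\ref{thm: epi convergence direct}, and to reduce it to a single one-sided estimate. Since $l$ is lower bounded, $g=e^{-l}$ is nonnegative and bounded above by $e^{-\inf l}<\infty$; since $l$ is lower-semicontinuous, $g$ is upper-semicontinuous, so by Proposition~\ref{prop: epi_closure} we have $\mathrm{cl_h}g=g$. The first step of the proof of Theorem~\ref{thm: epi convergence direct} (the chain~\eqref{eq: closure chain 2}, which uses only lower-boundedness of $l$ together with the mollifier property) already gives $\limsup_n g_n(\theta_n)\le \mathrm{cl_h}g(\theta)=g(\theta)$ for \emph{every} sequence $\theta_n\to\theta$. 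Consequently it suffices to produce one sequence $\theta_n\to\theta$ for which $\liminf_n g_n(\theta_n)\ge g(\theta)=e^{-l(\theta)}$; combined with the universal upper bound, this forces $\lim_n g_n(\theta_n)=g(\theta)$.

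To build such a sequence I would use strong lower-semicontinuity at $\theta$, which is the one hypothesis not yet exploited. It furnishes points $u_k\to\theta$ with $l(u_k)\to l(\theta)$, equivalently $g(u_k)=e^{-l(u_k)}\to e^{-l(\theta)}=g(\theta)$. The idea is then a diagonal/shifting construction: for each $k$ fix a radius $\rho_k>0$ and, using the concentration property of the mollifiers ($\int_{\|z\|>\delta}\psi_n(z)\,\mathrm{d}z\to 0$ for every $\delta>0$), select $n_k\uparrow\infty$ so that for all $n\ge n_k$ at least a $(1-\tfrac1k)$ fraction of the $\psi_n$-mass lies in $\mathbb{B}(0,\rho_k)$. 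Setting $\theta_n:=u_{k(n)}$ along the resulting diagonal, with $k(n)\to\infty$ slowly enough that the mollifier width stays small relative to $\rho_{k(n)}$, yields $\theta_n\to\theta$, and centring the average at $u_{k(n)}$ confines the essential part of the integral to $u_{k(n)}+\mathbb{B}(0,\rho_{k(n)})$ up to a vanishing error.

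The hard part, and the only place where \emph{strong} lower-semicontinuity must do genuine work, is the lower bound $g_n(\theta_n)=\int e^{-l(\theta_n+z)}\psi_n(z)\,\mathrm{d}z\ge e^{-l(u_{k(n)})}-o(1)$. Lower-semicontinuity alone controls $l$ only from below, hence $e^{-l}$ only from above, and so cannot by itself guarantee that $e^{-l}$ stays close to $e^{-l(\theta)}$ on a set carrying appreciable $\psi_n$-mass; this is precisely the failure mode flagged by the degenerate Example earlier in the section, where the minimiser sits on a Lebesgue-null set. The crux is therefore to convert strong lower-semicontinuity into the statement that, arbitrarily near $\theta$, one can locate a centre $w$ and radius $\rho>0$ with $l\le l(\theta)+\epsilon$ on a subset of $w+\mathbb{B}(0,\rho)$ of positive mass under the mollifiers, giving $g_n(\theta_n)\ge e^{-l(\theta)-\epsilon}\,(1-o(1))$ and hence $\liminf_n g_n(\theta_n)\ge e^{-l(\theta)-\epsilon}$ for every $\epsilon>0$; letting $\epsilon\downarrow 0$ closes the argument. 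I expect reconciling the topological thinness permitted by strong lower-semicontinuity with the averaging performed by $\psi_n$ to be the main obstacle, and the step most sensitive to the precise hypotheses on $l$.
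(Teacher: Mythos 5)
Your reduction is sound and matches the paper's structure: the first step of the proof of Theorem~\ref{thm: epi convergence direct} already yields $\limsup_n g_n(\theta_n)\le \mathrm{cl_h}g(\theta)=g(\theta)$ for \emph{every} sequence $\theta_n\to\theta$, so it indeed suffices to exhibit one sequence along which $\liminf_n g_n(\theta_n)\ge g(\theta)$. You have also correctly located the crux. But the proposal stops exactly there: the step ``convert strong lower-semicontinuity into the statement that $l\le l(\theta)+\epsilon$ on a set of appreciable mollifier mass near $\theta$'' is never carried out, and it cannot be carried out from Definition~\ref{def: slsc} as literally stated. The paper's own example in Section~\ref{sec:Convergence:-deterministic-scena}, $l(x)=\min\{1,|x_1|\}\mathbf{1}\{x_1=x_2\}+\mathbf{1}\{x_1\ne x_2\}$, satisfies that literal definition at every point (take the approximating sequence along the diagonal), yet $g_n\equiv e^{-1}$ while $g(0)=1$, so no recovery sequence exists there. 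The conclusion of the lemma therefore genuinely requires the Ermoliev form of strong lower-semicontinuity, in which the approximating points $u_k\to\theta$ with $l(u_k)\to l(\theta)$ can be taken to be \emph{continuity points} of $l$; this is exactly what the paper's proof invokes when it writes ``with $\bar g$ continuous at $\theta_k$''. The obstacle you flag is thus real and is not a technicality one can argue around.

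The continuity-point property is the single missing idea, and once it is in hand the rest of your plan goes through essentially as the paper does it. At a continuity point $u_k$ the localisation estimate (split the integral over $\|z\|\le\delta$ and $\|z\|>\delta$, use the upper bound $e^{-\inf l}$ on $e^{-l}$ and $\int_{\|z\|>\delta}\psi_n(z)\,\mathrm{d}z\to 0$) gives $\lim_n g_n(u_k)=g(u_k)$ for each fixed $k$ --- this is the local version of Lemma~\ref{lemma: continuous convergence mollifiers} --- because continuity at $u_k$ supplies precisely the ball on which $|e^{-l(u_k+z)}-e^{-l(u_k)}|\le\epsilon$, i.e.\ the positive-mass set you were looking for. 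A diagonal extraction over $k$ and $n$ (which the paper phrases via the closedness of $\Liminf_n S_n$ for $S_n=\{g_n(u_k)\colon k\in\mathbb{N}\}$ and the fact that $g(\theta)$ lies in the closure of $\{g(u_k)\colon k\in\mathbb{N}\}$) then produces $\theta_n$ with $g_n(\theta_n)\to g(\theta)$, and it can be arranged so that $\theta_n\to\theta$. So: replace ``points $u_k$ with $l(u_k)\to l(\theta)$'' by ``continuity points $u_k$ of $l$ with $l(u_k)\to l(\theta)$'', prove the pointwise statement $g_n(u_k)\to_n g(u_k)$ at such points, and your shifting construction closes. Without that strengthening the proof is incomplete.
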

\begin{proof}
	The proof is given in  \cite[Theorem 3.7]{Ermoliev1995}. Let $\bar{g}:=-g$ and $\bar{g}_n:=-g_n$. We already now that $\bar{g}$ is strongly lower semi-continuous and that
	$\liminf_n\bar{g}_n(\theta_n)\geq \bar{g}(\theta)$ for all sequences $\crl{\theta_n, \ n\in \mathbb{N}}$ such that $\theta_n\to \theta$.

	We show $\lim_n \bar{g}_n(\theta_n)=\bar{g}(\theta)$ for at least one sequence $\theta_n\to \theta$. Here we use \textit{strong} lower semi-continuity and proceed as follows. By strong lower semi-continuity of $\bar{g}$, there exists a sequence $\theta_k\to \theta$ such that $\lim_k \bar{g}(\theta_k)= \bar{g}(\theta)$, with $\bar{g}$ continuous at $\theta_k$.  \Cref{lemma: continuous convergence mollifiers} states that under (local) continuity we have that, for all $k\geq 0$,
	\begin{equation}
		\label{eq: local continuity}
		\lim_n \bar{g}_n(\theta_k)= \bar{g}(\theta_k)
	\end{equation} Now note that set $S:=\crl{\bar{g}(\theta_k); \ k\in \mathbb{N}}$ is such that $S\in \Liminf_n S_n$,  $S_n:=\crl{\bar{g}_n(\theta_k); \ k\in \mathbb{N} }$ where we recall that $S\in \Liminf_n S_n$ consists of all limit points of sequences $\crl{\alpha_n, \ n\in \mathbb{N}}$ with $\alpha_n\in S_n$. By definition, $\Liminf_n S_n$ is closed and, moreover, $\bar{g}(\theta)\in \text{cl}( S)$, where $\text{cl}(\cdot)$ denotes the closure of a set. This means that there exists a sequence $\crl{\alpha_n, \ n\in \mathbb{N}}$ s.t. $\lim_n \alpha_n= \bar{g}(\theta)$ with $\alpha_n\in S_n$. Then let $\theta_n$ be such that $\bar{g}_n(\theta_n)=\alpha_n$, $n=1,2,\ldots$ to obtain the result. Since $\bar{g}=-g$ and $\bar{g}_n=-g_n$, $n=1,2,\ldots$,  the result translates to $\crl{g_n, \ n\in \mathbb{N}}$ and $g$.
\end{proof}



\subsection{Descent lemma for Laplace functionals}
\label{sub:descent_lemma_laplace_exp_fam}

We show that a crucial convexity property from the theory of exponential families allows us to derive a descent lemma for Laplace functionals.
The descent lemma is stated here in full generality, namely for Laplace functionals based on smoothing kernels in the exponential family. This is a result of independent interest, and it also enforces the validity of a descent lemma in the more general scenarios mentioned in Subsection \ref{sub:beyond_gauss}. The result for the Gaussian algorithm is then recovered as a simple by-product. The derivation of the lemma connects the exponential family framework to the idea of \textit{relative smoothness} presented in \cite{bolte2018} and \cite{lu_freund_nesterov_2017}.

\begin{definition}[Bregman divergence]
	Let $u:\R^d\to \R$ be a differentiable function. The Bregman divergence $\mathrm{D}_u:\R^d\times \R^d\to \R$ associated with $u$ is defined as
	$$\mathrm{D}_{u}(x,y)=u(x)-u(y)-\inner{\nabla u(y)}{x-y}, \ x,y\in\R^d.$$
\end{definition}
\begin{properties}\nonumber
	\label{properties: linearity and positivity of Bregman}
	For any $x,y \in \R^d$,
	\begin{enumerate}
		\item For any pair of differentiable functions $u_1,u_2:\R^d \to \R$, $$\mathrm{D}_{\alpha u_1+\beta u_2}(x,y)=\alpha \mathrm{D}_{u_1}(x,y)+\beta \mathrm{D}_{u_2}(x,y).$$
		\item For any differentiable, convex function $u:\R^d\to \R$, we also have that $$\mathrm{D}_u(x,y)\geq 0$$ with $\mathrm{D}_u(x,y)=0$ iff $x=y$.
	\end{enumerate}
\end{properties}

We can now state the general descent lemma.

\begin{thm}[Descent lemma]
	\label{thm: descent lemma general}
	Consider a function $f:\R^d\to \R$ and for $\gamma>0$, $\theta\in \R^d$
	consider an exponential model (see \cref{app: exponential families})
	$$\pi_{\theta,\gamma}(x):=\exp\crl{\gamma^{-1}\sqrd{\inner{\theta}{T(x)}-A(\theta)}}\upsilon_{\gamma}(x)$$
	with sufficient statistic $T:\R^d\to \R^d$, log-partition function $A:\R^d\to \R$ and baseline probability measure $\upsilon_{\gamma}$. Let $$f_{\gamma}(\theta):=-\log\int \exp(-f(x))\pi_{\theta, \gamma}(x)\mathrm{d}x, \ \theta\in \R^d$$  and assume $\int \exp(-f(x))\pi_{\theta,\gamma}(x)\mathrm{d}x<\infty$ for any $\theta\in \R^d$. Then, for any $\theta,\theta'\in \R^d$, it holds that
	\begin{equation}
		f_{\gamma}(\theta')\leq f_{\gamma}(\theta)+\inner{\nabla f_{\gamma}(\theta)}{\theta'-\theta} + \frac{1}{\gamma}\mathrm{D}_A (\theta',\theta),
	\end{equation}
\end{thm}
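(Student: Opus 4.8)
The plan is to exploit a fact already recorded in Section~\ref{sec:introduction}: the Bayes' update $\tilde\pi_{\theta,\gamma}(x)\propto \exp(-l(x))\pi_{\theta,\gamma}(x)$ is again a regular exponential model with the \emph{same} sufficient statistic $T$ and dispersion $\gamma$, but with tilted carrier $\tilde\eta(\gamma,x)=\exp(-l(x))\eta(\gamma,x)\geq 0$ and a new log-partition function $\tilde A$. The finiteness hypothesis $\int\exp(-l(x))\pi_{\theta,\gamma}(x)\mathrm{d}x<\infty$ makes $\tilde A$ well defined, and normalising $\tilde\pi_{\theta,\gamma}$ identifies it: since $\int\exp(-l(x))\pi_{\theta,\gamma}(x)\mathrm{d}x=e^{-h_\gamma(\theta)}$, matching the two exponential forms gives $\tilde A(\theta)=A(\theta)-\gamma\,h_\gamma(\theta)$, equivalently
\[
h_\gamma(\theta)=\gamma^{-1}\big(A(\theta)-\tilde A(\theta)\big),\qquad \nabla h_\gamma(\theta)=\gamma^{-1}\big(\nabla A(\theta)-\nabla \tilde A(\theta)\big).
\]

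The single structural input is then the convexity of $\tilde A$. This is the ``crucial convexity property'' of exponential families recalled in Appendix~\ref{sub: exponential models}; concretely, normalisation yields $\exp(\gamma^{-1}\tilde A(\theta))=\int \exp\big(\gamma^{-1}\inner{\theta}{T(x)}\big)\tilde\eta(\gamma,x)\mathrm{d}x$, a Laplace transform of the non-negative measure $\tilde\eta(\gamma,\cdot)\,\mathrm{d}x$, which is log-convex in $\theta$ by H\"older's inequality; hence $\theta\mapsto\gamma^{-1}\tilde A(\theta)$, and therefore $\tilde A$ itself (as $\gamma>0$), is convex. By the non-negativity of the Bregman divergence of a convex function, $\mathrm{D}_{\tilde A}(\theta',\theta)\geq 0$ for all $\theta,\theta'\in\R^d$.

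The remainder is Bregman bookkeeping. Using $h_\gamma=\gamma^{-1}(A-\tilde A)$ together with the linearity of the Bregman divergence in its generating function,
\[
\mathrm{D}_{h_\gamma}(\theta',\theta)=\gamma^{-1}\big(\mathrm{D}_A(\theta',\theta)-\mathrm{D}_{\tilde A}(\theta',\theta)\big)\leq \gamma^{-1}\mathrm{D}_A(\theta',\theta),
\]
where the inequality is exactly $\mathrm{D}_{\tilde A}\geq 0$. Unfolding $\mathrm{D}_{h_\gamma}(\theta',\theta)=h_\gamma(\theta')-h_\gamma(\theta)-\inner{\nabla h_\gamma(\theta)}{\theta'-\theta}$ and rearranging gives precisely the claimed inequality.

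I expect the genuine work to lie not in the convexity step but in the regularity justifying the objects above: one must verify that $\theta$ sits in the interior of the natural-parameter domain of the tilted model, so that $\tilde A$ is finite, $C^1$ and convex there, and that $h_\gamma$ is correspondingly differentiable with the stated gradient. This reduces to a standard differentiation-under-the-integral-sign argument powered by the finiteness hypothesis (and the differentiability of $A$ assumed throughout for EDMs). A pleasant feature worth flagging is that the inequality never uses convexity of $A$: only $\tilde A$ enters through its convexity, so $\mathrm{D}_A(\theta',\theta)$ on the right-hand side need not itself be non-negative.
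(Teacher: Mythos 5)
Your proposal is correct and follows essentially the same route as the paper's proof: identify the Bayes-tilted density as an exponential model with log-partition function $\tilde{A}(\theta)=A(\theta)-\gamma h_\gamma(\theta)$, invoke its convexity to get $\mathrm{D}_{\tilde{A}}\geq 0$, and conclude by linearity of the Bregman divergence. The only cosmetic difference is that you rederive convexity of $\tilde{A}$ via H\"older's inequality where the paper cites \cite[Proposition 3.1]{wainwrightjordan2008}; the substance is identical.
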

\begin{proof}
	Let $x,\theta\in \R^d$ and $\gamma>0$. Consider the exponential model from the assumption \begin{equation*}
		\pi_{\theta, \gamma}(x)=\exp\crl{\frac{1}{\gamma}\sqrd{\inner{\theta}{T(x)}-A(\theta)}}\upsilon_{\gamma}(x)
	\end{equation*}
	and let $$\tilde{\pi}_{\theta, \gamma}(x)\propto \exp\big(-f(x)\big)\pi_{\theta, \gamma}(x).$$ Note that $$\tilde{\pi}_{\theta, \gamma}(x)=\exp\big(-f(x)\big)\exp\crl{\frac{1}{\gamma}\sqrd{\inner{\theta}{T(x)}-A(\theta)}+f_{\gamma}(\theta)}\upsilon_{\gamma}(x) \ ,$$
	that is, the distribution $\tilde{\pi}_{\theta, \gamma}$ still belongs to the regular (in the sense of  \cref{def: regular})  exponential family, with log-partition function given by $$\Tilde{A}(\theta)=A(\theta)-\gamma f_{\gamma}(\theta).$$ The Bregman divergence $\mathrm{D}_{\tilde{A}}$ is well-defined as both $\theta\mapsto A(\theta)$ and $\theta\mapsto f_{\gamma}(\theta)$ are differentiable functions.
	By \cite[Proposition 3.1]{wainwrightjordan2008}, $\tilde{A}(\theta) \ \text{is convex}$.  From the properties of the Bregman divergence, we note that the convexity of $\tilde{A}$ implies that for all $\theta,\theta'\in \R^d$
	\begin{equation}
		\mathrm{D}_{\tilde{A}}(\theta',\theta)\geq 0.
	\end{equation}
	By the linearity property, we have
	\begin{equation*}
		0\leq \mathrm{D}_{\tilde{A}}(\theta',\theta) = \mathrm{D}_{A-\gamma f_{\gamma}}(\theta',\theta)=\mathrm{D}_{A}(\theta',\theta)-\gamma \mathrm{D}_{f_{\gamma}}(\theta',\theta)
	\end{equation*}
	and using the definition of the Bregman Divergence, one obtains
	\begin{equation*}
		\gamma\crl{f_{\gamma}(\theta')-f_{\gamma}(\theta)-\inner{\nabla f_{\gamma}(\theta)}{\theta'-\theta}}\leq \mathrm{D}_A (\theta',\theta).
	\end{equation*}
	That is,
	\begin{equation*}
		f_{\gamma}(\theta')\leq f_{\gamma}(\theta)+\inner{\nabla f_{\gamma}(\theta)}{\theta'-\theta} + \frac{1}{\gamma}\mathrm{D}_A (\theta',\theta),
	\end{equation*}
	which concludes the proof.
\end{proof}

In the Gaussian case, as a corollary, we recover a standard descent lemma in terms of the Euclidean distance.

\begin{corollary}
	\label{cor: descent lemma gaussian}
	Consider a function $f:\R^d\to \R$. For $\gamma>0$, let $f_{\gamma}(\theta):=-\log\int \exp(-f(x))\phi\round{\frac{x-\theta}{\sqrt{\gamma}}}\mathrm{d}x$, $\theta\in \R^d$ and assume $\int \exp(-f(x))\phi\round{\frac{x-\theta}{\sqrt{\gamma}}}\mathrm{d}x<\infty$ for any $\theta\in \R^d$. Then, for any $\theta,\theta'\in \R^d$, it holds that
	\begin{equation}
		f_{\gamma}(\theta')\leq f_{\gamma}(\theta)+\inner{\nabla f_{\gamma}(\theta)}{\theta'-\theta}+\frac{1}{2\gamma}||\theta'-\theta||^2.
	\end{equation}
\end{corollary}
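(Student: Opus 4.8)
The plan is to obtain Corollary~\ref{cor: descent lemma gaussian} as an immediate specialisation of Theorem~\ref{thm: descent lemma general}, the only real work being to recognise the Gaussian kernel as a regular exponential model and to evaluate the corresponding Bregman divergence. First I would expand the quadratic in the exponent,
\begin{equation*}
\phi\round{\frac{x-\theta}{\sqrt{\gamma}}}=(2\pi)^{-d/2}\exp\round{-\frac{\|x\|^2}{2\gamma}}\exp\crl{\frac{1}{\gamma}\sqrd{\inner{\theta}{x}-\tfrac{1}{2}\|\theta\|^2}},
\end{equation*}
and match this against the form $\pi_{\theta,\gamma}(x)=\eta(\gamma,x)\exp\crl{\gamma^{-1}\sqrd{\inner{\theta}{T(x)}-A(\theta)}}$ required by the theorem. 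This identifies the sufficient statistic $T(x)=x$, the carrier $\eta(\gamma,x)=(2\pi)^{-d/2}\exp(-\|x\|^2/(2\gamma))$, and the log-partition function $A(\theta)=\tfrac{1}{2}\|\theta\|^2$ (up to an additive, $\theta$-independent constant). In particular $A$ is smooth and convex and the natural parameter ranges over all of $\R^d$, so the model is regular in the sense of Definition~\ref{def: regular} and the hypotheses of Theorem~\ref{thm: descent lemma general} are met, the integrability assumption being transcribed verbatim from the corollary.

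The one point requiring care is that the kernel $\phi\round{\frac{x-\theta}{\sqrt{\gamma}}}$ is not normalised in $x$ (it differs from the $\mathcal{N}(\theta,\gamma\mathrm{I}_d)$ density by the factor $\gamma^{-d/2}$), and correspondingly the honest log-partition carries an extra constant $\tfrac{\gamma d}{2}\log\gamma$. I would dispose of both at once by observing that the inequality of Theorem~\ref{thm: descent lemma general} involves $h_\gamma$ and $A$ only through the Bregman expressions $\mathrm{D}_{h_\gamma}(\theta',\theta)$ and $\mathrm{D}_A(\theta',\theta)$, each of which is invariant under the addition of an affine (in particular constant) function to $h_\gamma$ or to $A$. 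Hence neither the multiplicative normalisation of the kernel, which merely shifts $h_\gamma$ by the constant $\tfrac{d}{2}\log\gamma$ and leaves $\nabla h_\gamma$ unchanged, nor the constant in $A$ affects the conclusion, and I may safely work with $A(\theta)=\tfrac{1}{2}\|\theta\|^2$.

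It then remains only to evaluate the Bregman divergence of $A(\theta)=\tfrac{1}{2}\|\theta\|^2$. Since $\nabla A(\theta)=\theta$,
\begin{equation*}
\mathrm{D}_A(\theta',\theta)=\tfrac{1}{2}\|\theta'\|^2-\tfrac{1}{2}\|\theta\|^2-\inner{\theta}{\theta'-\theta}=\tfrac{1}{2}\|\theta'-\theta\|^2.
\end{equation*}
Substituting $\gamma^{-1}\mathrm{D}_A(\theta',\theta)=\tfrac{1}{2\gamma}\|\theta'-\theta\|^2$ into the conclusion $h_\gamma(\theta')\leq h_\gamma(\theta)+\inner{\nabla h_\gamma(\theta)}{\theta'-\theta}+\gamma^{-1}\mathrm{D}_A(\theta',\theta)$ of Theorem~\ref{thm: descent lemma general} then yields exactly the claimed inequality.

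There is no genuine obstacle here: the argument is entirely a matter of reading off the exponential-family parametrisation of the Gaussian and computing a one-line quadratic Bregman divergence. The only place where a careless treatment could go astray is the handling of the normalisation constants discussed above, and these are neutralised by the affine-invariance of the Bregman divergence, so the corollary follows cleanly.
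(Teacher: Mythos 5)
Your proposal is correct and follows essentially the same route as the paper: specialise Theorem~\ref{thm: descent lemma general} to the Gaussian kernel with $T(x)=x$ and $A(\theta)=\tfrac{1}{2}\|\theta\|^2$, then compute $\mathrm{D}_A(\theta',\theta)=\tfrac{1}{2}\|\theta'-\theta\|^2$. Your explicit remark that the missing $\gamma^{-d/2}$ normalisation only shifts $h_\gamma$ by a constant and is absorbed by the affine-invariance of the Bregman divergence is a small point the paper leaves implicit, but it does not change the argument.
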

\begin{proof}
	This follows directly from \cref{thm: descent lemma general} by noting that, for $\pi_{\theta,\gamma}(\cdot):=\gamma^{-d/2}\phi\round{\frac{\cdot-\theta}{\sqrt{\gamma}}}$, one has $A(\theta)=\frac{||\theta||^2}{2}$ and $\mathrm{D}_{A}(\theta,\theta')=\frac{1}{2}||\theta'-\theta||^2$, for any $\theta,\theta'\in \R^d$.
\end{proof}


\section{Proof of \cref{thm: general conv gaussian}}
\label{sec:convergence_of_algorithm}

We are now ready to go back to \cref{alg:main_det_algo} and note here that the Laplace functionals based on Gaussian mollifiers, defined for $(\theta,\gamma)\in \R^d\times \R_+$ as

$$l_{\gamma}(\theta)\coloneq -\log\round{\int e^{-l(x)}\gamma^{-d/2}\phi\round{\gamma^{-1/2}(x-\theta)}\mathrm{d}x}\, ,$$

satisfy the conditions of
\cref{thm:cv-descent-inho} with $f_n=l_{\gamma_n}$ for some sequence $\gamma_n\downarrow 0$; specifically,
Assumption~\ref{L-smooth} follows by \cref{cor: descent lemma gaussian}
and Assumption~\ref{time_diff} by \cref{lemma: time inhom condition},
which we prove below. Hence, convergence of recursion (\ref{eq:non-homogeneous-descent-generic}) holds by \cref{thm:cv-descent-inho}, and we write the specific statement in \cref{thm: general conv gaussian}. By the equivalence, this provides us with a convergence result for \cref{alg:main_det_algo}. The interpretation of the result in \cref{thm: general conv gaussian} can be phrased in terms of epi-convergence of
Laplace functionals (\cref{thm: epi convergence direct}), with the interpretation being provided and
justified by our characterisation of local minima under epi-convergence
(\cref{thm:ermoliev-charact-local-min}).\\

In the proof below we specifically denote $\psi(z):=\gamma^{-d/2}\phi(\gamma^{-1/2}z)$, $z\in \R^d$, $\gamma>0$, the multivariate Gaussian with covariance $\gamma \mathbf{I}_d$, $\gamma>0$, and zero mean. This notation is consistent with the mollifiers' framework presented in subsection \cref{subsec:cv-laplace-functionals}.

\begin{lemma}
	\label{lemma: time inhom condition}
	Let   $\{\gamma_n,\,n\in\mathbb{N}\}$ be a sequence on $(0,\infty)$ such that $\lim_{n\rightarrow\infty}\gamma_n=0$ and such that  $\gamma_{n+1}\leq \gamma_n$ for all $n\geq 1$,  and for all $n\in\mathbb{N}$ let $\psi_n(z)=\gamma_n^{-d/2}\phi(\gamma_n^{
			-1/2}z)$. Assume that there exists a constant $C_l\in(0,\infty)$ such that $|l(\theta')-l(\theta)|\leq C_l+C_l\|\theta-\theta'\|^2$ for all $\theta,\theta'\in\R^d$ and let
	\begin{align*}
		\delta_n=\big((\gamma_n/\gamma_{n+1})^{d/2}-1\big)(\gamma_n+1)+(\gamma_n-\gamma_{n+1})+\gamma_n^2,\quad\forall n\geq 1.
	\end{align*}
	Then, there exists a constant $\bar{C}\in(0,\infty)$ and an $n'\in\mathbb{N}$ such that
	\begin{align*}
		\sup_{\theta\in\R^d} \big(l_{n+1}(\theta)-l_n(\theta )\big)\leq \bar{C}\delta_n,\quad\forall n\geq n'.
	\end{align*}
\end{lemma}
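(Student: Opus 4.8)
The plan is to write the increment as the logarithm of a ratio of two Gaussian-smoothed integrals and reduce it to controlling a single tilted Gaussian moment. Writing $g_n(\theta)=\int e^{-l(\theta+z)}\psi_n(z)\dd z$ so that $h_n=-\log g_n$, the quantity to bound is $h_{n+1}(\theta)-h_n(\theta)=\log\big(g_n(\theta)/g_{n+1}(\theta)\big)$. First I would substitute $z=\sqrt{\gamma_{n+1}}\,u$ in \emph{both} integrals: in $g_{n+1}$ this produces the standard Gaussian density, while in $g_n$ it produces a factor $r^{d/2}$ times $e^{-r\|u\|^2/2}$, where $r:=\gamma_{n+1}/\gamma_n\le 1$ by the monotonicity hypothesis $\gamma_{n+1}\le\gamma_n$. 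This gives the exact identity
\begin{equation*}
\frac{g_n(\theta)}{g_{n+1}(\theta)}=r^{d/2}\,\E_{\mu_\theta}\!\big[e^{(1-r)\|u\|^2/2}\big],\qquad \mu_\theta(u)\propto e^{-l(\theta+\sqrt{\gamma_{n+1}}u)}e^{-\|u\|^2/2},
\end{equation*}
so that $h_{n+1}(\theta)-h_n(\theta)=\tfrac d2\log r+\log\E_{\mu_\theta}\!\big[e^{(1-r)\|u\|^2/2}\big]$.

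Since $r\le 1$, the term $\tfrac d2\log r$ is nonpositive and I would discard it for an upper bound. For the remaining term the crucial point is to \emph{extract} the small factor $1-r=(\gamma_n-\gamma_{n+1})/\gamma_n$ rather than bounding the exponential crudely, which would leave a non-vanishing additive constant of order $C_l$. Using $e^t-1\le t e^t$ for $t\ge 0$ followed by $\log(1+x)\le x$ yields
\begin{equation*}
\sup_{\theta\in\R^d}\big(h_{n+1}(\theta)-h_n(\theta)\big)\le \frac{1-r}{2}\,\sup_{\theta\in\R^d} M_n(\theta),\qquad M_n(\theta):=\E_{\mu_\theta}\!\big[\|u\|^2 e^{(1-r)\|u\|^2/2}\big].
\end{equation*}

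The main obstacle is to show that $\sup_\theta M_n(\theta)$ stays bounded as $n\to\infty$, uniformly in $\theta$; this is exactly where the quadratic-growth hypothesis $|l(\theta')-l(\theta)|\le C_l+C_l\|\theta-\theta'\|^2$ enters. Applied with $\theta'=\theta+\sqrt{\gamma_{n+1}}u$ it gives $e^{-l(\theta+\sqrt{\gamma_{n+1}}u)}\le e^{-l(\theta)}e^{C_l}e^{C_l\gamma_{n+1}\|u\|^2}$ together with the matching lower bound, so that the common factor $e^{-l(\theta)}$ cancels between the numerator of $M_n(\theta)$ and the normalising constant of $\mu_\theta$, leaving a purely Gaussian ratio
\begin{equation*}
M_n(\theta)\le e^{2C_l}\,\frac{\int \|u\|^2 e^{-a_n\|u\|^2}\dd u}{\int e^{-b_n\|u\|^2}\dd u},\qquad a_n=\tfrac r2-C_l\gamma_{n+1},\quad b_n=\tfrac12+C_l\gamma_{n+1}.
\end{equation*}
The delicate check is that these integrals converge only when $a_n>0$, i.e. once $n$ is large enough that $\gamma_n<1/(2C_l)$; this is what fixes the threshold $n'$. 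For such $n$ the ratio equals $\tfrac{d}{2a_n}(b_n/a_n)^{d/2}$, which tends to $d$ as $r\to 1$ and $\gamma_{n+1}\to 0$, so $\sup_\theta M_n(\theta)\le M^*$ for some finite $M^*$ and all $n\ge n'$. It is essential that the $e^{\pm C_l}$ factors now multiply the already-small quantity $1-r$ instead of standing alone.

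Finally I would relate $1-r$ to the leading summand of $\delta_n$ using the generalised Bernoulli inequality $(1-x)^{-d/2}\ge 1+\tfrac d2 x$ with $x=1-r$, which gives $(\gamma_n/\gamma_{n+1})^{d/2}-1\ge \tfrac d2(1-r)$, and hence
\begin{equation*}
1-r\le \frac2d\Big[(\gamma_n/\gamma_{n+1})^{d/2}-1\Big]\le \frac2d\Big[(\gamma_n/\gamma_{n+1})^{d/2}-1\Big](\gamma_n+1)\le \frac2d\,\delta_n,
\end{equation*}
the last inequality because the three terms defining $\delta_n$ are each nonnegative. Combining with the bound of the second paragraph gives $\sup_\theta\big(h_{n+1}(\theta)-h_n(\theta)\big)\le (M^*/d)\,\delta_n$ for all $n\ge n'$, which is the claim with $\bar C=M^*/d$. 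The extra summands $(\gamma_n-\gamma_{n+1})$ and $\gamma_n^2$ in $\delta_n$ are not needed along this route and merely supply slack; I expect that a more hands-on evaluation of the same Gaussian moment, retaining the second-order remainder explicitly, is what produces them.
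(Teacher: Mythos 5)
Your reduction of the increment to a tilted Gaussian moment is a genuinely different route from the paper's. The paper linearises the log-increment via $\log y-\log x\le (y-x)/x$ (on the event $g_{n+1}\le g_n$), bounds $g_n(\theta)-g_{n+1}(\theta)$ using the pointwise ratio $\psi_{n+1}/\psi_n\le c_n:=(\gamma_n/\gamma_{n+1})^{d/2}$ together with the Gaussian integral $\int e^{C_l\|x-\theta\|^2}\psi_m(x-\theta)\,\dd x=(1-2C_l\gamma_m)^{-d/2}$, lower-bounds $g_{n+1}(\theta)$ by $e^{-l(\theta)}$ times a uniform constant, and finally Taylor-expands $(1-2C_l\gamma)^{-d/2}$ --- which is exactly where the summands $(\gamma_n-\gamma_{n+1})$ and $\gamma_n^2$ of $\delta_n$ come from, confirming your closing guess. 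In your version, the change of variables $z=\sqrt{\gamma_{n+1}}u$, the identity $g_n/g_{n+1}=r^{d/2}\,\E_{\mu_\theta}[e^{(1-r)\|u\|^2/2}]$, the discarding of $\tfrac d2\log r\le 0$, the extraction of $1-r$ via $e^t-1\le te^t$ and $\log(1+x)\le x$, the Gaussian sandwich giving $a_n=\tfrac r2-C_l\gamma_{n+1}$ and $b_n=\tfrac12+C_l\gamma_{n+1}$, and the comparison $1-r\le \tfrac2d\big[(\gamma_n/\gamma_{n+1})^{d/2}-1\big]\le\tfrac2d\delta_n$ are all correct.

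The gap is the uniform bound $\sup_\theta M_n(\theta)\le M^*$. You justify it by observing that $\tfrac{d}{2a_n}(b_n/a_n)^{d/2}$ tends to $d$ ``as $r\to1$'', but $r_n=\gamma_{n+1}/\gamma_n\to1$ is not among the hypotheses: the lemma assumes only $\gamma_{n+1}\le\gamma_n$ and $\gamma_n\to0$, so $r_n$ may tend to $0$ (e.g.\ $\gamma_{n+1}=\gamma_n^2$). Since $a_n=\tfrac r2(1-2C_l\gamma_n)$, your bound on $M_n(\theta)$ then blows up like $r^{-1-d/2}$, so $\tfrac{1-r}2\sup_\theta M_n(\theta)$ is of order $r^{-1-d/2}$ while $\delta_n$ is only of order $r^{-d/2}$; the chain delivers $h_{n+1}-h_n\le \bar C r^{-1}\delta_n$, which is not the claim. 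The paper's argument is immune to this because $c_n-1=r^{-d/2}-1$ enters its estimate \emph{linearly} and is matched exactly by the first summand of $\delta_n$, with no inverse power of $r$ appearing. Your proof is therefore complete under the additional hypothesis $\inf_n\gamma_{n+1}/\gamma_n>0$ (which holds for the schedule $\gamma_n=n^{-\beta}$ actually used in Theorem 1.1), and it can be patched in general by a case split: when $r\le\tfrac12$ one has $\delta_n\ge 2^{d/2}-1>0$, so the ``crude'' bound you rejected, $h_{n+1}(\theta)-h_n(\theta)\le 2C_l+\tfrac d2\log\big((1+2C_l\gamma_{n+1})/(1-2C_l\gamma_n)\big)=O(1)$, already gives $\bar C\delta_n$; when $r\ge\tfrac12$ your argument applies verbatim with $a_n\ge\tfrac18$ for $n$ large. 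As written, though, the uniformity of $M^*$ is an unproven step.
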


\begin{remark}
	If $\gamma_n=n^{-\beta}$ for all $n\geq 1$ and some $\beta\in (0,1)$. Then,    $\sum_{n\geq 1}\gamma_n=\infty$ and, since we have $(\gamma_n/\gamma_{n+1})^{d/2}-1\approx n^{-1}$ and $\gamma_n-\gamma_{n+1}=n^{-\beta-1}$, it follows that $\delta_n=\smallo(\gamma_n)$   and thus $\delta_n/\gamma_{n}\rightarrow 0$.
\end{remark}

\begin{proof}

	Let $n\geq 1$ and $\theta\in\R^d$. If $g_n(\theta)\leq g_{n+1}(\theta)$ we have $l_{n+1}(\theta)-l_n(\theta )\leq 0$ and thus below we assume that $g_{n+1}(\theta)\leq g_n(\theta)$. Then, using the fact that for any real numbers  $0<x<y$ we have $ \log(y)-\log(x)\leq (y-x)/x$, it follows that
	\begin{align}\label{eq:b1}
		l_{n+1}(\theta)- l_n(\theta) & \leq \frac{g_n(\theta)-g_{n+1}(\theta)}{g_{n+1}(\theta)}.
	\end{align}

	To proceed further let $c_n= (\gamma_n/\gamma_{n+1})^{d/2}\geq 1$ and note that
	\begin{align*}
		0\leq \frac{\psi_{n+1}(x-\theta)}{\psi_n(x-\theta)}\leq c_n,\quad\forall x\in\R^d.
	\end{align*}

	In addition, let $n'\in\mathbb{N}$ be such that $\gamma_m\leq 1/(4C_l)$ for all $m\geq n'$, with $C_l$ as in the statement of the lemma. Then, assuming that $n\geq n'$, we have
	\begin{equation}\label{eq:b2}
		\begin{split}
			g_n(\theta)-g_{n+1}(\theta) & =\int e^{-l(x)}\big(\psi_n(x-\theta)-\psi_{n+1}(x-\theta)\big)\mathrm{d} x                                                                  \\
			                            & =e^{-l(\theta)}\int e^{-(l(x)-l(\theta))}\big(\psi_n(x-\theta)-\psi_{n+1}(x-\theta)\big)\mathrm{d} x                                        \\
			                            & =e^{-l(\theta)}\int e^{-(l(x)-l(\theta))}\Big(1-\frac{\psi_{n+1}(x-\theta)}{\psi_n(x-\theta)}\Big)\psi_n(x-\theta)\mathrm{d} x              \\
			                            & \leq  e^{-l(\theta)}\int e^{-(l(x)-l(\theta))}\Big(c_n-\frac{\psi_{n+1}(x-\theta)}{\psi_n(x-\theta)}\Big)\psi_n(x-\theta)\mathrm{d} x       \\
			                            & \leq e^{-l(\theta)+C_l}\int e^{C_l\|x-\theta\|^2}\Big(c_n-\frac{\psi_{n+1}(x-\theta)}{\psi_n(x-\theta)}\Big)\psi_n(x-\theta) \mathrm{d} x   \\
			                            & = e^{-l(\theta)+C_l}\Big(c_n\int e^{C_l\|x-\theta\|^2}\psi_n(x-\theta)\mathrm{d} x-\int e^{C_l\|x-\theta\|^2}\psi_{n+1}(x-\theta)\mathrm{d} x\Big) \\
			                            & =e^{-l(\theta)+C_l}\Big( c_n(1-2C_l\gamma_n)^{-d/2}-(1-2C_l\gamma_{n+1})^{-d/2}\Big).
		\end{split}
	\end{equation}

	Using Taylor's theorem, there exists a constant $C<\infty$ such that
	\begin{align*}
		c_n(1-2C_l\gamma_n)^{-d/2}-(1-2C_l\gamma_{n+1})^{-d/2} & \leq c_n\big(1+dC_l\gamma_n+C\gamma_n^2 \big)-\big(1+dC_l\gamma_{n+1}-C \gamma_{n+1}^2 \big) \\
		                                                       & =(c_n-1)\big(dC_l\gamma_n+1\big)+dC_l(\gamma_n-\gamma_{n+1})+C(\gamma_n^2+\gamma_{n+1}^2)    \\
		                                                       & \leq \max\{1,dC_l, 2C\}\Big((c_n-1)(\gamma_n+1)+(\gamma_n-\gamma_{n+1})+  \gamma_n^2\Big)    \\
		                                                       & =\max\{1,dC_l, 2C\}\delta_n
	\end{align*}
	which, together with \eqref{eq:b1}-\eqref{eq:b2} and letting $C'=e^{C_l}\max\{1,dC_l, 2C\}$,  shows that
	\begin{align}\label{eq:b3}
		l_{n+1}(\theta)- l_n(\theta) & \leq \delta_n \frac{ C' e^{-l(\theta)}}{g_{n+1}(\theta)}.
	\end{align}

	On the other hand,
	\begin{equation}\label{eq:b4}
		\begin{split}
			g_{n+1}(\theta) & = \int \exp(-l(\theta+\gamma_{n+1}^{\frac{1}{2}} z))\phi(z)\mathrm{d} z                                           \\
			                & =e^{-l(\theta)} \int \exp\Big(-\big(l(\theta+\gamma_{n+1}^{\frac{1}{2}} z)-l(\theta)\big)\Big)\phi(z)\mathrm{d} z \\
			                & \geq e^{-l(\theta)-C_l}\int \exp(-C_l\gamma_{n+1} z^2)\phi(z)\mathrm{d} z                                         \\
			                & =e^{-l(\theta)}\Big(e^{-C_l}(1+2C_l\gamma_{n+1})^{-1/2}\Big)                                               \\
			                & \geq  e^{-l(\theta)} e^{-C_l}(1+2C_l\gamma_{1})^{-1/2}
		\end{split}
	\end{equation}
	where the last inequality uses the fact that the sequence $\{\gamma_n,\,n\in\mathbb{N}\}$ is assumed to be non-increasing.

	By combining \eqref{eq:b3} and \eqref{eq:b4}, we obtain

	\begin{align*}
		l_{n+1}(\theta)- l_n(\theta) & \leq  \bar{C}\delta_n,\quad\bar{C}=C' e^{C_l}(1+2C_l\gamma_{1})^{1/2}
	\end{align*}
	and the proof of the lemma is complete.

\end{proof}


\section{Experiments\label{sec:experiments}}

In this section we provide implementation details and evaluate our
methodology on a statistical problem arising in machine learning.

\subsection{Implementation} \label{subsec:implementation}

\Cref{alg:main_det_algo} relies on theoretical distributions $\pi_{n}$ and
$\tilde{\pi}_{n}$ which are typically intractable. We may approximate $\pi_{n}$ with a
Monte Carlo sample of size $N$, i.e.~$X_n^i\sim \pi_n$ for
$i=1,\dots,N$.  Then, since $\tilde{\pi}_{n+1}(x) \propto \pi_n(x)
	\exp\{-l(x)\}$, we may approximate $\tilde{\pi}_{n+1}$ through importance
sampling, that is, assign weight $w_n^i \gets \exp\{-l(X_n^i)\}$ to particle
$X_n^i$;  see \cref{alg:MC_version_det_algo} in
\cref{app:additional-sim-results} for more details.

To reduce the variability of the output, we may also use randomised quasi-Monte
Carlo  to generate the $X_n^i$, see \cref{alg:QMC_version} in
\cref{app:additional-sim-results}.  We use the latter
algorithm below. We observe that it works well even for low values of
$N$; we set $N=128$ throughout.  We define the output of the algorithm at
iteration $n$ to be $\arg\min_{i}l(X_{n}^{i})$, i.e., the particle with the
smallest observed value for $l(x)$.

Since minimising $l$ is equivalent to minimising $l_{\lambda}:=\lambda\times l$
for any scalar $\lambda>0$, one may consider different strategies to scale $l$
automatically to improve speed of convergence. We found the following
approaches to work well in practice: set the scale $\lambda$ so that the
variance of the log-weights equals one, either for the first $k$ iterations,
or for all iterations.  The results reported below correspond to the latter
strategy. We note that, currently, the theoretical framework developed in this
manuscript corresponds to the algorithm used in
\cref{app:additional-sim-results} and do not yet cover the algorithm used
below.

\subsection{AUC scoring and classification}

We illustrate our methodology on a staple machine learning scoring and
classification task. Given training data $\{(Z_{i},Y_{i})\in
	\mathbb{R}^{p}\times\{-1,1\}\colon i=1,\ldots,n_\mathrm{data}\}$, assumed to
arise from a probability distribution $\mathbb{P}$,  we wish to construct a
score function $s\colon \mathbb{R}^{p} \rightarrow \{-1,1\}$, such that for two
independent realisations $(Z,Y)$ and $(Z',Y')$ the theoretical quantity
\[
	\mathbb{P}\big(\left[ s(Z)-s(Z')\right] (Y-Y')<0\big)
\]
is as small as possible. This quantity
is often called the area under curve (AUC) risk function; one of the
motivations for this criterion is that it is less sensitive to class imbalance
than other more standard classification criteria.

Assuming further a particular parametric form for $s(z)$, e.g. $s(z)=s_\theta(z)=\theta^{\top}z$ for
$\theta \in \mathbb{R}^p$,
\citet{ustats} proposed to estimate $x$ through empirical risk
minimisation, i.e.
\[
	\hat{\theta}=\arg\min_{\theta\in\mathbb{R}^{p}}u(x)\,,
\]
where $u(\theta)$ is the following U-statistic:
\begin{equation}
	u(\theta) = \frac{1}{n_\mathrm{data}(n_\mathrm{data}-1)}
	\sum_{i,j=1}^{n_\mathrm{data}}\mathbf{1}\left\{\big[ s_{\theta}(Z_{i})-s_\theta(Z_{j})\big] (Y_{i}-Y_{j})<0\right\}.\label{eq:def_auc}
\end{equation}

This function is challenging to minimise directly, for two reasons:
(a) it is piecewise constant and therefore discontinuous; (b) it is
invariant by affine transformations for the linear model, i.e. $u(\lambda \theta)=u(\theta)$
for any scalar $\lambda$.
As a result, several alternative approaches have been proposed to perform
AUC scoring; e.g. one may replace it with a convex approximation \citep[Sect. 7]{ustats}
or use a PAC-Bayesian approach as in \citet{pac_auc}.

Regarding point (b), we use the inverse $\psi^{-1}$ of the stereographic projection, which
transforms a vector $x\in \R^{p-1}$ into a point $\theta = \psi^{-1}(x)$ on the
unit hyper-sphere in dimension $p$, $\mathbb{S}^p = \{\theta \in R^p,
	\|\theta\|=1\}$. That is, 
if $\theta=\psi^{-1}(x)$, then
$\theta_i = 2 x_i / \sum_{j=1}^d x_j^2$, for $i=1,\dots, p-1$,
$\theta_p = (\sum_j x_j^2 - 1 ) / (\sum_j x_j^2 + 1)$.
Then we define the objective function to be $l(x) = u\circ\psi^{-1}(x)$; thus $l:\R^d \to \R$ 
with $d=p-1$. It is reasonably straightforward to show that this function is strongly 
lower semi-continuous, as $\psi^{-1}$ is Lipschitz, and $u$ is a sum of indicator functions. 

We consider two classical datasets from the UCI machine learning repository
(\url{https://archive.ics.uci.edu/}): Pima (short for Pima Indians Diabetes),
and Sonar. We pre-process the data so that each predictor is normalised, i.e.
the empirical mean is set to zero and the variance is set to one. We compare
our algorithm to a strategy often used in practice which  relies on the
Nelder-Mead, or simplex, algorithm with random start. This approach is
considered naive in that Nelder-Mead does not require differentiability for
implementation, but is a requirement for correctness. The stepsizes are set to
$\gamma_{n}=0.2/(1+n)^{0.5}$. We run the two algorithms $10$ times.


In \cref{fig:Pima}, left panel, we  report the running estimates for
the Pima Indian dataset ($p=8$). One can see that the estimates converge very quickly 
for this dataset, despite the fact that $\gamma_n$ converges slowly.

\begin{figure}
	\centering{}
	\includegraphics[scale=0.28]{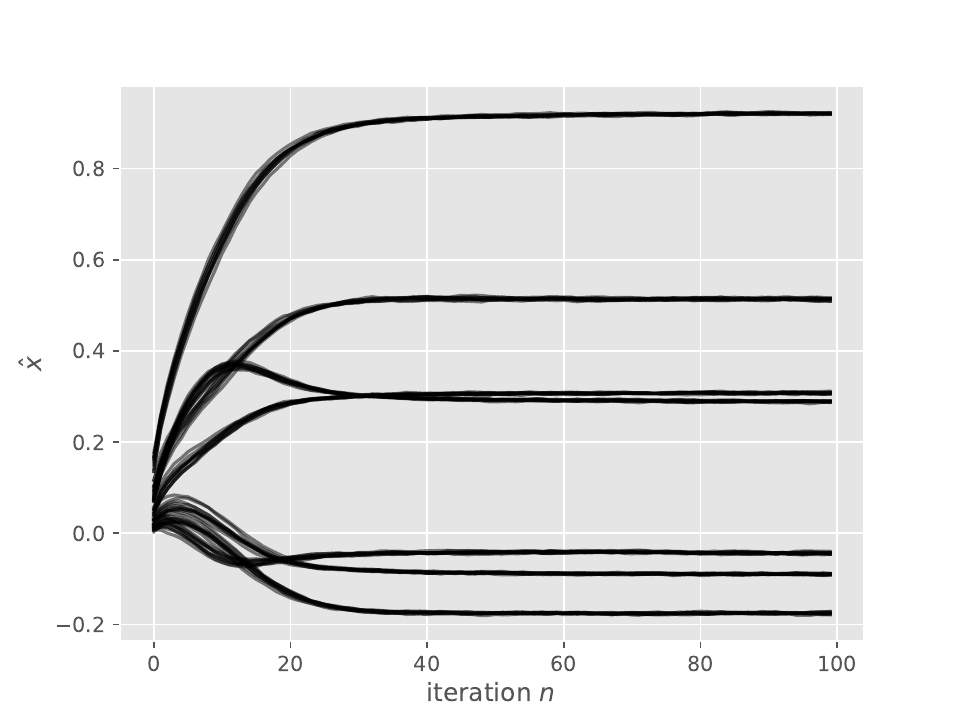}
	\includegraphics[scale=0.28]{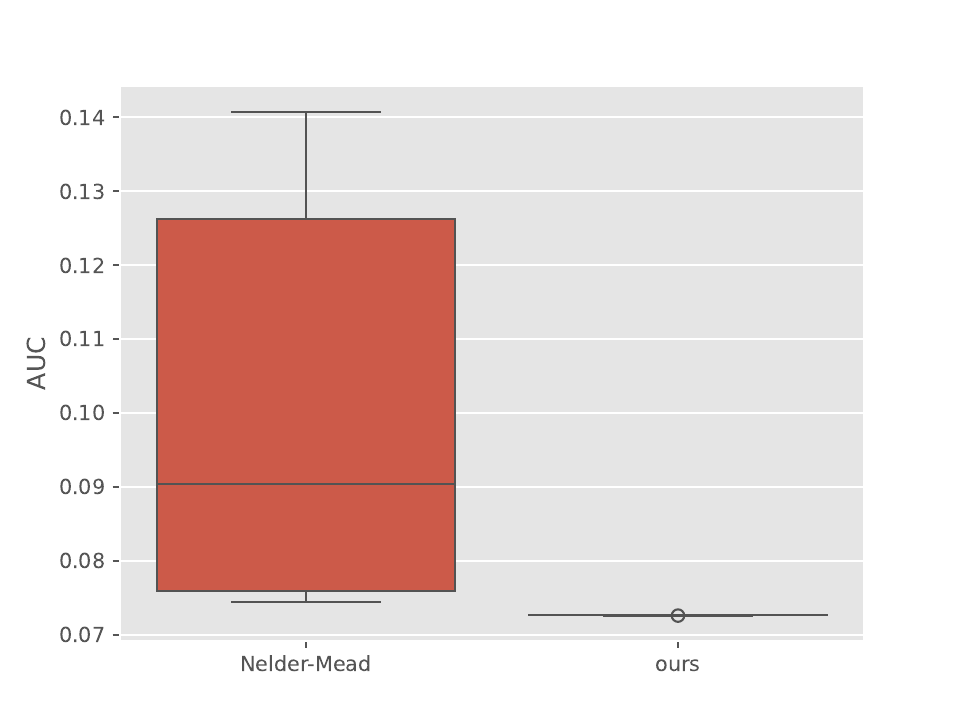}
	\includegraphics[scale=0.28]{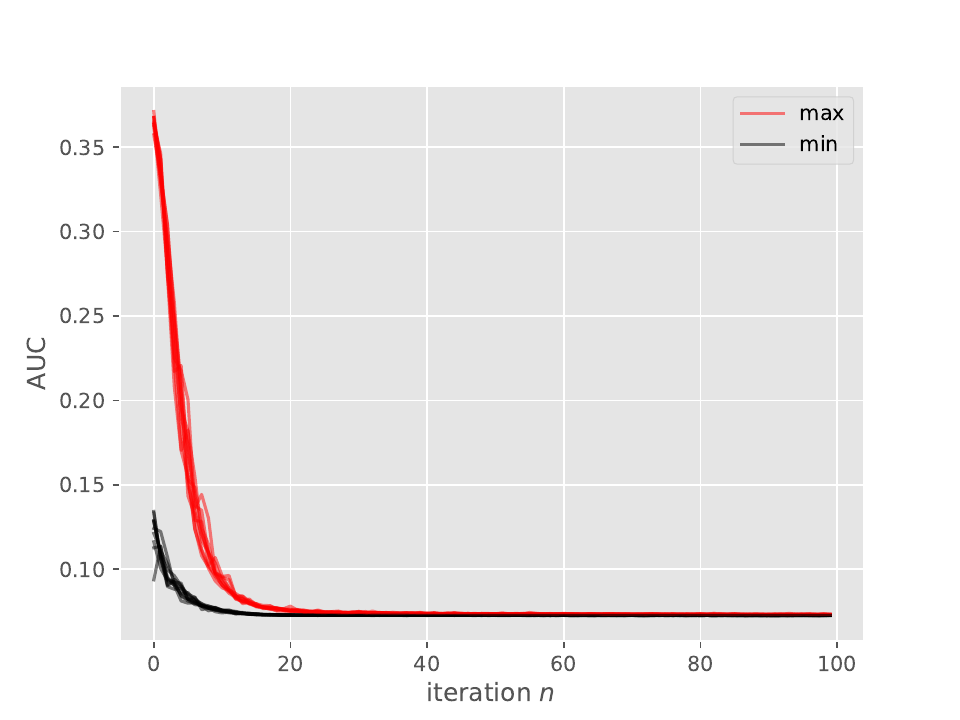}
	\caption{\label{fig:Pima}Pima dataset (10 runs). Left: running mean of the Gaussian
	distribution as a function of iteration $n$ (different lines correspond to different runs).
		Center: box-plots of the estimated minimum AUC for the two considered algorithms.
		Right: min/max of $l$ across iterations $n$ (again, different lines correspond to different runs).
	}
\end{figure}

In \cref{fig:Pima}, centre panel, we compare the variability (over $10$ runs)
of the output of the two considered algorithms.  One can see that our algorithm
provides much lower empirical risk than the naive approach based on
Nelder-Mead. Finally, the right panel reports the smallest and largest values
of $l(X_n^i)$ at iteration $n$ of our algorithm.

We repeat this experiment for the Sonar dataset ($p=60$); see \cref{fig:Sonar}
for the same plots as in \cref{fig:Pima} for this second dataset.  We notice
that a significantly larger number of iterations is required to achieve
convergence in this case (maximum number of iterations was set to $4\times
10^3$). The slower convergence may be due to several factors: first, $p=60$,
i.e. the dimension is higher than in the previous example. Second, the dataset
exhibits near complete separation: the best AUC score we observe is below
$1.5\times 10^{-3}$. Third, the function reaches its minimal value on a very
tiny region (of radius of order $10^{-3}$); notice in particular how the
highest AUC value remains high for a long time in the right panel of
\cref{fig:Sonar}.

\begin{figure}
	\centering{}
	\includegraphics[scale=0.28]{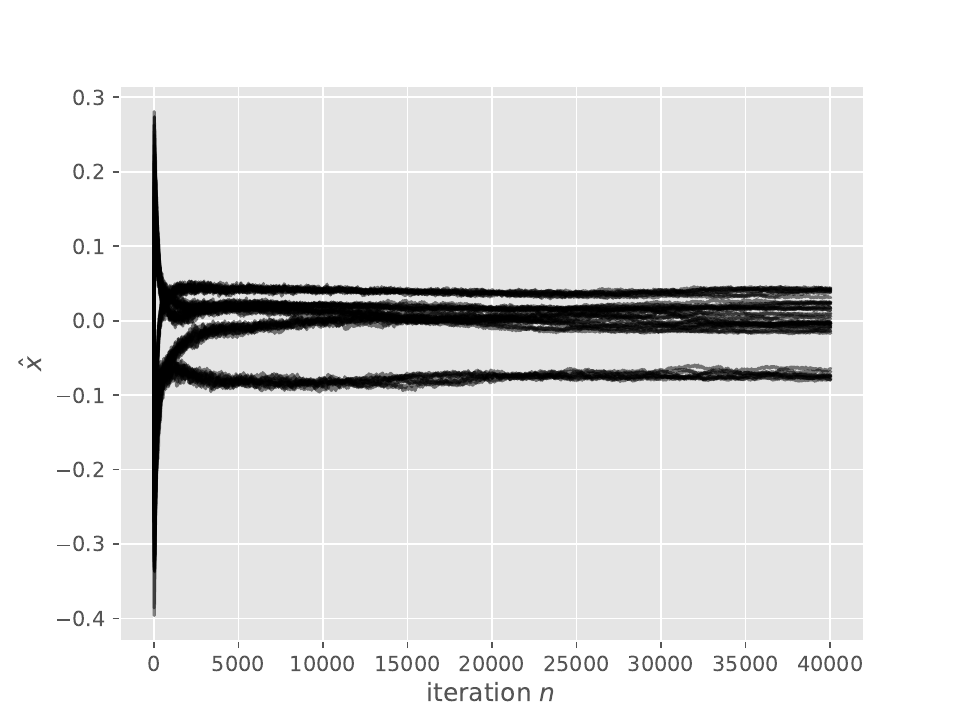}
	\includegraphics[scale=0.28]{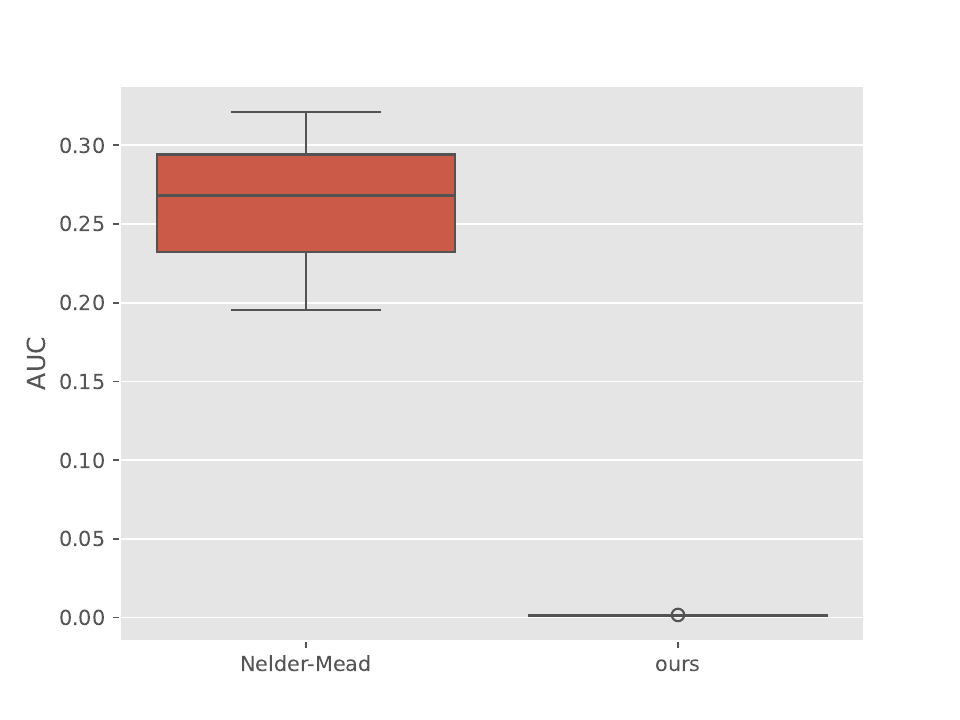}
	\includegraphics[scale=0.28]{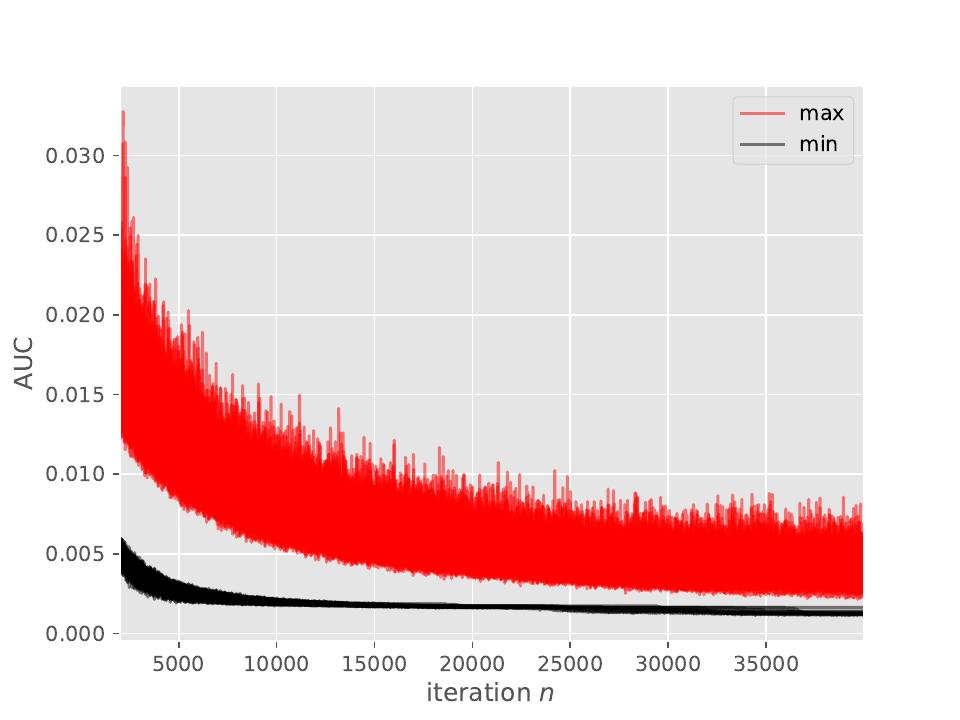}
	\caption{\label{fig:Sonar}
		Same plots as ~\cref{fig:Pima} for the Sonar dataset. For the left plot, only the last
		5 components are shown. 
		For the right plot, the first 400 iterations are not plotted to improve readibility.}
\end{figure}

\section{Discussion\label{discussion}}

We briefly discuss possible extensions and variations we are
currently exploring or have not explored yet.

\subsection{Accounting for noise}
\label{sub:accounting_noise}

We have established the convergence of the \emph{ideal}
\cref{alg:main_det_algo};  what remains to be done is doing the same for the
practical variant where  the distributions $\pi_n$ and $\tilde{\pi}_n$  are
replaced by Monte Carlo approximations.

Another way to account for noise is to consider the scenario where the objective function
can be evaluated only up to some noise. That is, one cannot evaluate $l$ exactly, but one can
evaluate a random estimate $\ell(x, U)$ such that $\E[\ell(x, U)]=l(x)$, where $U$
has a certain distribution. In this case, the ideal Alg. \ref{alg:main_det_algo} can be easily adapted by considering, at each iteration $n\geq 0$, the objective function $\ell(x,U_n)$, where $U_n$ has been sampled (once) at iteration $n$,
from the distribution of $U$. Accordingly, its practical implementations presented in Appendix \ref{app:additional-sim-results} can be adapted by
replacing the instruction for computing the weights with:    $w_n^i \gets
	\exp\{-\ell(X_n^i, U_n)\}$ (thus, the $N$ evaluations at iteration $n$ are
performed using the same random variable $U_n$).

We are currently developing theory for this type of noisy optimisation
scenario. In the stochastic framework the algorithm proceeds as in \cref{alg:main_det_algo}
and the sequence $\{\theta_{n},n\geq1\}$ is now the output of a
time inhomogeneous stochastic gradient algorithm
\[
	\theta_{n+1}=\theta_{n}-\gamma_{n}\nabla_{\theta}\left\{-\log\int e^{-\ell(x,U_{n})}\phi\biggl(\frac{x-\theta_{n}}{\sqrt{\gamma_{n}}}\biggr)\mathrm{d}x\right\},\ U_{n},\ n\geq1\overset{{\mathrm iid}}{\sim}P\,,
\]
here presented in the Gaussian scenario.




\subsection{A Proximal version}
\label{sub:proximal_version}

For completeness, we discuss here a variation on Algorithm \ref{alg:main_det_algo} which can be shown to be equivalent to a time-inhomogeneous proximal minimisation recursion. A natural modification of this algorithm consists of inverting the order of the distributions in the minimisation step of the $\mathrm{KL}$ divergence; see Algorithm~\ref{alg:proximal_modification}.

\begin{algorithm}[H]
	\caption{Proximal version}
	\label{alg:proximal_modification}
	\begin{algorithmic}
		\Require objective function $l$, initial parameter $\theta_0$, stepsizes
		$(\gamma_n)_{n\geq 0}$.
		\State $\pi_0 \gets \pi_{\theta_0, \gamma_0}$

		\While{$n \geq 0$}

		\State $\tilde{\pi}_{n+1}(x)\propto \exp\left\{-l(x)\right\}\pi_{n}(x)$

		\State $\theta_{n+1}\in \argmin_{\theta\in \Theta} \mathcal{\mathrm{KL}}(\pi_{\theta,\gamma_{n}}, \tilde{\pi}_{n+1})$
		\Comment{Swap distributions in KL (compared to \cref{alg:main_det_algo})}

		\State $\pi_{n+1} \gets \pi_{\theta_{n+1},\gamma_{n+1}}$

		\EndWhile

		\Ensure sequence of distributions $\tilde{\pi}_n$ and parameters $\theta_n$.
	\end{algorithmic}
\end{algorithm}
Link to proximal minimisation can be derived via an argument analogous to that in \cref{lemma: laplace principle}, which demonstrates that Algorithm~\ref{alg:proximal_modification} generates a sequence $\crl{\theta_n\in \R^d, \ n\in \mathbb{N}}$ such that
\begin{align*} \theta_{n+1}=\argmin_{\theta}\crl{F_{\gamma_n}(\theta)+\mathrm{KL}(\pi_{\theta, \gamma_n},\pi_{\theta_{n}, \gamma_n})}, \ n\in \mathbb{N},
\end{align*}
where $F_{\gamma}(\theta):=\int l(x)\pi_{\theta,\gamma}(x)\mathrm{d}x$, $\theta\in \R^d$, $\gamma>0$. We see that the algorithm aims to adjust $\pi_{\theta,\gamma}$ to decrease $\theta\mapsto F_\gamma(\theta)$ subject to a proximal penalty, which is reminiscent of proximal Expectation-Maximization (EM) algorithms. When $\pi_{\theta,\gamma}$ is a Gaussian,  $\mathrm{KL}(\pi_{\theta, \gamma_n},\pi_{\theta_{n}, \gamma_n})=(2\gamma_n)^{-1}||\theta-\theta_n||^2$, and interpretation in terms of an inhomogeneous (vanilla) proximal minimisation algorithm should be clear. Note that, however, in the general case of exponential families we have that for any $n\in \mathbb{N}$,
$
	\mathrm{KL}(\pi_{\theta, \gamma_n},\pi_{\theta_{n}, \gamma_n})
$ is the Bregman divergence $\mathrm{D}_A(\theta_{n},\theta)$ (see
\cref{lemma: kl and bregman}), convex in the first variable but not
necessarily in the second. Hence Algorithm~\ref{alg:proximal_modification} above cannot always be interpreted as a proximal algorithm.
Despite its theoretical attractiveness we have not pursued this proximal approach here since its implementation seems to require another minimisation procedure, in contrast with our approach. Finally we remark that Algorithm~\ref{alg:main_det_algo} can be interpreted as a coordinate descent algorithm to optimise the functional
$\Phi_1(\nu,\theta;\gamma)=\int l(x)\nu(x){\rm d}x+{\rm KL}(\nu,\pi_{\theta,\gamma})$, with $\nu$ a probability density, while Algorithm~\ref{alg:proximal_modification} corresponds to $\Phi_2(\nu,\theta;\gamma)=\int l(x)\pi_{\theta,\gamma}(x){\rm d}x+{\rm KL}(\pi_{\theta,\gamma},\nu)$, where we notice that optimising $\nu$ for $\pi_{\theta,\gamma}$ fixed leads to $\nu = \pi_{\theta,\gamma}$, therefore agreeing with the interpretation above.

\begin{appendix}

	\section{Notation} \label{sec:notation}
	List of notation:
	\begin{itemize}
		\item $\R^d$, $d\geq 1$, denotes the real coordinate $d$-space, and  $\mathcal{B}(\R^d)$ its Borel sigma-algebra.
		\item $\R_+$ denotes the set of positive real numbers including zero.
		\item $\emptyset$ denotes the empty set.
		\item $\mathbb{M}^1_+(\R^d)$ is the set of probability measures on the space $\R^d$.
		\item For $\nu, \rho \in \mathbb{M}^1_+(\R^d)$, we write $\nu \ll \rho$ if $\nu$ is absolutely continuous with respect to $\rho$.
		\item For a measure $\pi\in \mathbb{M}^1_+(\R^d)$ that is absolutely continuous with respect to the Lebesgue measure, we again denote by $\pi$ its density. In formulae, we write $\pi(\mathrm{d}x)=\pi(x)\mathrm{d}x$.
		\item We sometimes adopt the linear functional notation for integrals with respect to measures,  writing $\pi(f)$ for $\int f(x)\pi(\mathrm{d}x)$, for measurable functions $f$.
		\item We denote expectation operator by $\E$. When we need to specify the probability measure $\nu$ of integration, we sometimes write $\E_{\nu}f(X)$ for $\nu(f)$.
		\item For $\mu,\nu \in \mathbb{M}^1_+(\R^d)$ with absolutely continuous density (with respect to the Lebesgue measure), we denote the Kullback-Leibler divergence by $\mathrm{KL}\round{\mu,\nu}:=\int \mu(x)\log\frac{\mu(x)}{\nu(x)}\mathrm{d}x$.


		\item Given a point $\theta\in \R^d$ and a sequence $\crl{\theta_n\in \R^d, n\in \mathbb{N}}$, we say that $\underset{n\to \infty}{\lim} \theta_n=\theta$ if for every $\epsilon>0$ there exists number $n_0=n_0(\epsilon)\in \mathbb{N}$ such that, for every $n \geq n_0$, $||\theta_n-\theta||< \epsilon.$ We also write, with the same meaning, $\theta_n\to_n \theta$ or $\theta_n\to \theta$ when there is no ambiguity. As a shortcut, we sometimes denote the limit operation $\underset{n\to \infty}{\lim} $ as $\underset{n}{\lim}$.
		\item Similarly, we write $\underset{n}{\liminf}, \ \underset{n}{\limsup}$ for $\underset{n\to \infty}{\liminf}, \  \underset{n\to \infty}{\limsup}$, respectively.
		\item For a sequence $\crl{\alpha_n\in \R_+, \ n\in \mathbb{N}}$ that decreases to zero, we write $\alpha_n \downarrow 0$.
		\item The closure of a set $A\subset \R^d$ is denoted by $\mathrm{cl}(A)$ and corresponds to the intersection of all closed subsets of $\R^d$ containing $A$.
		\item For two sets $C_1,C_2$ of $\R^d$, we denote their elementwise (Minkowsky) sum by $C_1+C_2:=\crl{x+y; x\in C_1, y\in C_2}$
		\item Let $(\R^d)^{\mathbb{N}}$ denote the space of sequences $\crl{\theta_n\in \R^d, n\in \mathbb{N}}$. Let $\crl{\theta_n}\in (\R^d)^{\mathbb{N}}$. Given $g:(\R^d)^{\mathbb{N}}\to \R$, we write $\underset{\theta_n\to \theta}{\inf}g(\crl{\theta_n})$ for $\inf\crl{g(\crl{\theta_n}); \ \theta_n\in \R^d, \ n\in \mathbb{N}, \ \theta_n\to \theta}$.
		\item $C^1(\R^d)$ denotes the class of functions $f:\R^d\to \R$ with continuous gradient. Gradient operators are denoted by $\nabla$, or $\nabla_{\theta}$ when we need to specify that the variable of differentiation is $\theta$
		\item The $d$-dimensional standard normal density is denoted as $\phi(z):=\frac{1}{(2\pi)^{d/2}}e^{-\frac{||z||^2}{2}}$, $z\in \R^d$.
		\item Given two functions $f,g:\R^d\to \R$, we write $f\geq g$ if $f(\theta)\geq g(\theta)$ $\forall \theta \in \R^d$.
		\item The Dirac Delta measure on zero is denoted as $\delta_0(\mathrm{d}x)$.
		\item For two vectors $\theta,\theta'\in \R^d$, we denote their Euclidean inner product by $\inner{\theta}{\theta'}$.

		\item The notation $$\argmin f$$ stands for $\underset{u\in \R^d}{\argmin} f(u)$ and denotes the set of global minima of a function $f:\R^d\to \R$.
		\item The set of local minima of a function $f$ is denoted by $$\mathrm{loc-}\argmin f.$$
	\end{itemize}

	\section{Exponential family background}

	\subsection{Natural exponential families}
	\label{app: exponential families}

	For $x, \theta\in \R^d$, $T:\R^d\to\R^d$ and a baseline density $\upsilon$
	on $\R^d$ (with respect to some dominating measure, e.g., Lebesgue), we
	consider a (natural) exponential family to be a family of probability
	densities
	\[\pi_{\theta}(x)
		=\exp\crl{\inner{\theta}{T(x)}-A(\theta)}\upsilon(x)\,,
	\]
	where the cumulant (log-partition) function for $T(X)$ is
	\begin{equation}\label{eq:log_partition}
		A(\theta)=\log\int \exp\left\{\langle\theta,T(x)\rangle\right\}
		\upsilon(x)\mathrm{d}x\,,
	\end{equation}
	and the canonical parameter $\theta$ of interest belongs to the set $$\Theta:=\crl{\theta\in \R^d \colon \ A(\theta)<\infty}.$$

	\begin{definition}
		\label{def: minimal}
		Given an exponential family with sufficient statistic $T:\R^d\to \R^d$, we
		say that the family is minimal if the elements of $T$ are linearly
		independent, that is, if there is no nonzero vector $a\in \R^d$ s.t.
		$\sum_{i=1}^d a_{i}T_{i}(x)$ is equal to a constant almost
		everywhere. This implies that there is a unique natural parameter vector
		$\theta$ associated with each distribution.
	\end{definition}

	\begin{definition}
		\label{def: regular}
		An exponential family with log-partition function $\theta\mapsto A(\theta)$ is said to be regular when the domain $\Theta$ is an open set.
	\end{definition}

	Examples of minimal and regular exponential families include Bernoulli, Gaussian, Exponential, Poisson, and Beta distributions.


	\begin{prop}{\cite[Proposition 3.1]{wainwrightjordan2008}.}
		The log-partition function \eqref{eq:log_partition}
		associated with any regular exponential family  with sufficient statistic
		$T:\R^d\to \R^d$ has the following properties:
		\begin{enumerate}
			\item It has derivatives of all orders on its domain $\Theta$. Furthermore,
			      \begin{align*}
				      \nabla A(\theta)   & = \E_{\pi_{\theta}}\left[T(X)\right],           \\
				      \nabla^2 A(\theta) & = \mathrm{var}_{\pi_{\theta}}\left[T(X)\right].
			      \end{align*}
			\item $\theta\mapsto A(\theta)$ is a convex function on $\Theta$, and strictly convex if the representation is minimal.
		\end{enumerate}
	\end{prop}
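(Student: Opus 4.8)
The plan is to work through the partition function $M(\theta) := \int \exp\inner{\theta}{T(x)}\,\upsilon(\mathrm{d}x) = e^{A(\theta)}$ and transfer its properties to $A = \log M$ via the chain rule. The only genuinely technical ingredient is justifying differentiation under the integral sign, which is precisely where regularity (the openness of $\Theta$, Def.~\ref{def: regular}) enters.

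First I would establish the integrability backbone. Fix $\theta\in\Theta$; since $\Theta$ is open there is $\delta>0$ such that $\theta\pm\delta e_i\in\Theta$ for each coordinate direction $e_i$. Using the elementary scalar bound that for every $k\in\mathbb{N}$ there is a constant $C_k$ with $|t|^k e^{st}\leq C_k\big(e^{(s+\delta)t}+e^{(s-\delta)t}\big)$ for all $t\in\R$ (obtained from $e^{\delta|t|}\geq (\delta|t|)^k/k!$), applied in each direction, one deduces that $\|T(x)\|^k\exp\inner{\theta}{T(x)}$ is $\upsilon$-integrable for every $k$. This is the crucial consequence of regularity: all moments of $T(X)$ are finite under $\pi_\theta$, and the integrands produced by formal differentiation are dominated locally uniformly in $\theta$, so dominated convergence applies.

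With this in hand, I would differentiate under the integral to obtain $\nabla M(\theta)=\int T(x)e^{\inner{\theta}{T(x)}}\upsilon(\mathrm{d}x)$ and $\nabla^2 M(\theta)=\int T(x)T(x)^\top e^{\inner{\theta}{T(x)}}\upsilon(\mathrm{d}x)$, and by iterating, derivatives of all orders exist (indeed $A$ is real-analytic on $\Theta$). The chain rule for $A=\log M$ then gives $\nabla A=\nabla M/M=\E_{\pi_\theta}T(X)$ and
\[
\nabla^2 A=\frac{\nabla^2 M}{M}-\frac{\nabla M\,(\nabla M)^\top}{M^2}=\E_{\pi_\theta}\big[T(X)T(X)^\top\big]-\E_{\pi_\theta}\big[T(X)\big]\,\E_{\pi_\theta}\big[T(X)\big]^\top=\mathrm{var}_{\pi_\theta}(T(X)),
\]
which establishes item~1. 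For item~2, convexity is then immediate: $\nabla^2 A(\theta)=\mathrm{var}_{\pi_\theta}(T(X))$ is a covariance matrix, hence positive semidefinite at every $\theta\in\Theta$, and $\Theta$ is convex, so $A$ is convex. For strict convexity under minimality, I would note that for any $a\neq 0$, $a^\top\nabla^2 A(\theta)\,a=\mathrm{var}_{\pi_\theta}\big(\inner{a}{T(X)}\big)$, which vanishes iff $\inner{a}{T(X)}$ is $\pi_\theta$-almost surely constant; since $\pi_\theta$ is equivalent to $\upsilon$, this is exactly the linear-dependence condition excluded by minimality (Def.~\ref{def: minimal}). Hence minimality forces $\nabla^2 A(\theta)\succ 0$ for all $\theta$, implying strict convexity.

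The main obstacle is the first step: rigorously justifying the interchange of differentiation and integration, together with the finiteness of all moments, from openness of $\Theta$ alone. Once that is secured, the gradient and Hessian identities are routine calculus, and the strict-convexity equivalence is a direct reading of the minimality definition through the covariance form of the Hessian.
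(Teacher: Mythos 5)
Your proof is correct and takes essentially the same route as the paper, which cites this result from Wainwright and Jordan and justifies only the convexity part with exactly your argument (the Hessian equals $\mathrm{var}_{\pi_\theta}(T(X))$, hence is positive semidefinite on $\Theta$); your domination argument for differentiating under the integral and the minimality/strict-convexity step are the standard details the paper defers to the citation. The one assertion you leave unjustified is the convexity of the domain $\Theta$ itself, which is needed before positive semidefiniteness of $\nabla^2 A$ on $\Theta$ can be upgraded to convexity of $A$; it follows in one line from H\"older's inequality, $M(\lambda\theta+(1-\lambda)\theta')\leq M(\theta)^{\lambda}M(\theta')^{1-\lambda}$, and deserves a sentence.
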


	The convexity argument comes from the fact that the full Hessian $\nabla^2 A(\theta)$ is the covariance matrix of the random vector $T(X)$, and so is positive semidefinite on the open set $\Theta$, which ensures convexity.\\

	We now report an important dual coupling property of exponential families. Let
	\[\mathcal{M}:=\crl{\mu\in \R^d; \ \exists \ \theta  \ \text{s.t.} \
			\E_{\pi_{\theta}}T(X)=\mu}
	\]
	be the set of so-called moment parameters. We have
	\begin{prop}{\cite[Proposition 3.2]{wainwrightjordan2008}}
		The gradient mapping $\nabla A:\Theta\to \mathcal{M}$ is one-to-one if and only if the exponential representation is minimal.
	\end{prop}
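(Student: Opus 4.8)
The plan is to prove both implications through the link between minimality and strict convexity of $A$, relying on the identities $\nabla A(\theta)=\E_{\pi_\theta}T(X)$ and $\nabla^2 A(\theta)=\mathrm{var}_{\pi_\theta}(T(X))$ supplied by the preceding proposition. Note first that $\nabla A:\Theta\to\mathcal{M}$ is surjective by the very definition of $\mathcal{M}$ as the image of $\nabla A$, so only injectivity is at issue. The key observation is that for any $a\in\R^d$ one has $a^\top\nabla^2 A(\theta)\,a=\mathrm{var}_{\pi_\theta}\big(\inner{a}{T(X)}\big)\geq 0$, with equality precisely when $\inner{a}{T(X)}$ is $\pi_\theta$-almost surely constant. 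Since $\pi_\theta$ admits a strictly positive density with respect to $\upsilon$, the statements ``$\pi_\theta$-a.s. constant'' and ``$\upsilon$-a.e. constant'' coincide, so $\nabla^2 A(\theta)$ is positive definite at every $\theta\in\Theta$ if and only if there is no nonzero $a$ with $\inner{a}{T(\cdot)}$ constant $\upsilon$-a.e., that is, if and only if the representation is minimal.

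For the direction ``minimal $\Rightarrow$ injective'', I would argue as follows. Regularity makes $\Theta$ open, and $\Theta$ is convex since $\{A<\infty\}$ is a convex set by H\"older's inequality. Under minimality the observation above gives $\nabla^2 A\succ 0$ throughout $\Theta$, hence $A$ is strictly convex there. Strict convexity of a differentiable function on a convex set yields strict monotonicity of its gradient: for $\theta_1\neq\theta_2$ in $\Theta$ one has $\inner{\nabla A(\theta_1)-\nabla A(\theta_2)}{\theta_1-\theta_2}>0$. In particular $\nabla A(\theta_1)=\nabla A(\theta_2)$ forces $\theta_1=\theta_2$, so $\nabla A$ is one-to-one.

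For the converse I would prove the contrapositive, ``not minimal $\Rightarrow$ not injective''. If the representation is not minimal, pick a nonzero $a$ and a constant $c$ with $\inner{a}{T(x)}=c$ for $\upsilon$-almost every $x$. Fix any $\theta\in\Theta$; since $\Theta$ is open there is $t\neq 0$ with $\theta+ta\in\Theta$. A direct computation gives $A(\theta+ta)=A(\theta)+tc$, whence the densities satisfy $\pi_{\theta+ta}(x)=\exp\{\inner{\theta}{T(x)}+tc-A(\theta)-tc\}=\pi_\theta(x)$ for $\upsilon$-a.e. $x$, i.e. $\pi_{\theta+ta}=\pi_\theta$. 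The two distributions being identical, their mean parameters coincide: $\nabla A(\theta+ta)=\E_{\pi_{\theta+ta}}T(X)=\E_{\pi_\theta}T(X)=\nabla A(\theta)$, while $\theta+ta\neq\theta$. Hence $\nabla A$ is not injective, which completes the equivalence.

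I do not expect a genuine obstacle here, as the result is classical; the points requiring care are bookkeeping ones. The main one is making the equivalence between minimality (phrased relative to the base measure $\upsilon$) and definiteness of the covariance (computed under $\pi_\theta$) watertight, which rests on the mutual absolute continuity of $\pi_\theta$ and $\upsilon$. Beyond that, one must invoke openness and convexity of $\Theta$ (from regularity) so that the perturbation $\theta+ta$ remains in the domain and so that the ``strict convexity implies injective gradient'' argument applies.
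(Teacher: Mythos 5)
Your proof is correct. The paper itself gives no argument for this statement --- it is quoted verbatim from Wainwright and Jordan (2008, Proposition 3.2) --- so there is nothing internal to compare against; the only related remark in the text is the observation after Proposition 3.1 that $\nabla^2 A(\theta)=\mathrm{var}_{\pi_\theta}(T(X))$ is positive semidefinite, and strictly so under minimality, which is exactly the engine of your argument. Both directions are sound: the equivalence between minimality and $\nabla^2 A\succ 0$ correctly passes through the mutual absolute continuity of $\pi_\theta$ and $\upsilon$, the ``minimal $\Rightarrow$ injective'' step via strict convexity and strict monotonicity of the gradient on the open convex set $\Theta$ is standard, and the contrapositive of the converse, showing $\pi_{\theta+ta}=\pi_\theta$ for a nonzero direction $a$ annihilated by $T$, is exactly the classical identifiability failure.
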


	\begin{thm}{\cite[Theorem 3.3]{wainwrightjordan2008}}
		In a minimal exponential family, the gradient map $\nabla A$ is onto the interior of $\mathcal{M}$, denoted by $\mathcal{M}^o$. Consequently, for each $\mu\in \mathcal{M}^o$, there exists some $\theta=\theta(\mu)\in \Theta$ such that $\E_{\pi_{\theta}}T(X)=\mu$.
	\end{thm}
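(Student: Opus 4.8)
The plan is to prove the two inclusions $\nabla A(\Theta)\subseteq\mathcal{M}^o$ and $\mathcal{M}^o\subseteq\nabla A(\Theta)$ separately, the second being the substantive one. Throughout I would use that, by minimality and the preceding Proposition of \cite{wainwrightjordan2008}, the map $A$ is strictly convex and smooth on the open set $\Theta$, with $\nabla A(\theta)=\E_{\pi_\theta}T(X)$ and $\nabla^2 A(\theta)=\var_{\pi_\theta}(T(X))\succ 0$, and that $\mathcal{M}$, being a set of attainable expectations of $T(X)$, is convex.

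For the first inclusion, since $\nabla^2 A$ is everywhere nonsingular the inverse function theorem makes $\nabla A$ a local diffeomorphism; combined with the injectivity established in the preceding Proposition for minimal families, Brouwer's invariance of domain shows that $\nabla A(\Theta)$ is an open subset of $\R^d$. As every point $\nabla A(\theta)=\E_{\pi_\theta}T(X)$ lies in $\mathcal{M}$ by definition, openness forces $\nabla A(\Theta)\subseteq\mathcal{M}^o$.

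For the reverse inclusion I would fix $\mu\in\mathcal{M}^o$ and maximise the strictly concave map $F(\theta):=\inner{\theta}{\mu}-A(\theta)$ over $\Theta$: any interior stationary point $\theta^\ast$ satisfies $\nabla A(\theta^\ast)=\mu$, which is exactly the claim. To produce a maximiser I would establish coercivity of $F$. The key inequality is that for any distribution $Q=q\,\upsilon$ with mean $\nu:=\E_Q T(X)$, Jensen's inequality applied to $e^{A(\theta)}=\E_Q[\exp(\inner{\theta}{T}-\log q)]$ yields $A(\theta)\geq\inner{\theta}{\nu}-c(Q)$ with the finite constant $c(Q):=\E_Q[\log q]$, whence $F(\theta)\leq\inner{\theta}{\mu-\nu}+c(Q)$. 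Because $\mu\in\mathcal{M}^o$ and $\mathcal{M}$ is convex, there is a closed ball $\bar B(\mu,\epsilon)\subseteq\mathcal{M}$; choosing $\nu=\mu+\epsilon\,\theta/\|\theta\|\in\mathcal{M}$ gives $F(\theta)\leq-\epsilon\|\theta\|+\sup_{\|d\|=1}c(\mu+\epsilon d)$, and compactness of the unit sphere bounds the last term uniformly. Hence $F(\theta)\to-\infty$ as $\|\theta\|\to\infty$, so every maximising sequence is bounded, and strict concavity will make any maximiser unique.

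The main obstacle is to show this bounded maximising sequence converges to a point of the \emph{open} set $\Theta$ rather than to $\partial\Theta$ at finite distance when $\Theta\neq\R^d$: the coercivity bound only controls behaviour at infinity. This is where the essential smoothness (steepness) of the log-partition function of a regular exponential family enters, namely $\|\nabla A(\theta_k)\|\to\infty$ whenever $\theta_k\to\partial\Theta$, which precludes the supremum of $F$ from being approached at the boundary. Granting this property, a maximiser $\theta^\ast\in\Theta$ exists and first-order optimality gives $\nabla A(\theta^\ast)=\mu$, completing $\mathcal{M}^o\subseteq\nabla A(\Theta)$. An alternative to the conjugate-duality route is a connectedness argument: $\nabla A(\Theta)$ is open in the convex, hence connected, set $\mathcal{M}^o$, and the same steepness property shows it is also closed in $\mathcal{M}^o$, so being nonempty it must equal $\mathcal{M}^o$.
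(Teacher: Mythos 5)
The paper does not actually prove this statement: it is imported verbatim as \cite[Theorem~3.3]{wainwrightjordan2008}, so the only benchmark is the proof in that reference, which (in its Appendix B.1) is precisely the conjugate-duality argument you sketch. Your route is therefore essentially the standard one and is sound in outline: positive-definiteness of $\nabla^2 A$ under minimality makes $\nabla A$ an open map into $\mathcal{M}$, giving $\nabla A(\Theta)\subseteq\mathcal{M}^o$; and for the converse you maximise $\theta\mapsto\inner{\theta}{\mu}-A(\theta)$, with the Jensen bound $A(\theta)\geq\inner{\theta}{\nu}-c(Q)$ supplying coercivity and steepness of the regular family ruling out escape to $\partial\Theta$. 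Two points deserve more care than you give them. First, your argument silently uses the Wainwright--Jordan definition of $\mathcal{M}$ as the set of means attainable by \emph{arbitrary} densities $Q=q\,\upsilon$ (this is what makes $\mathcal{M}$ convex by mixing and lets you pick a witness $Q$ for each $\nu$ on the sphere); the paper's stated definition, $\mathcal{M}=\crl{\mu:\exists\theta,\ \E_{\pi_\theta}T(X)=\mu}$, is literally $\nabla A(\Theta)$ and would render the theorem vacuous, so you should say explicitly which definition you are working with. Second, the uniform bound $\sup_{\|d\|=1}c\bigl(Q_{\mu+\epsilon d}\bigr)<\infty$ does not follow from compactness alone: you must exhibit, for each $\nu$ in the sphere, a $Q$ with mean $\nu$ and \emph{finite} $\E_Q[\log q]$, and argue these can be chosen locally boundedly (e.g.\ via finiteness and hence local boundedness of the convex function $\nu\mapsto\sup_\theta\crl{\inner{\theta}{\nu}-A(\theta)}$ on the open convex set $\mathcal{M}^o$, or by an explicit truncation/mollification). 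Both gaps are standard and fillable, so the proposal is a correct reconstruction of the cited proof rather than a genuinely different argument.
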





	We conclude with the following relation for the $\mathrm{KL}$ and Bregman divergence in the case of exponential families. A more complete statement which also includes the so-called dual function of $A$ and the dual parameters is also available. See for instance \cite[Section 4]{nielsen_nock_2010} for more details.
	\begin{lemma}{\cite{nielsen_nock_2010}.}
		\label{lemma: kl and bregman}
		Assume $\pi_{\theta_1}$ and $\pi_{\theta_2}$ belong the same minimal regular exponential family with log-partition function $A$. Suppose $\theta_1$, $\theta_2$ are their natural parameters. We have
		\begin{equation}
			\mathrm{D}_{A}(\theta_2,\theta_1)=\mathrm{KL}(\pi_{\theta_1}, \pi_{\theta_2}).
		\end{equation}
	\end{lemma}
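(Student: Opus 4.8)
The plan is to prove the two equalities in turn by direct computation, drawing only on the exponential-family facts collected in Appendix~\ref{app: exponential families}: the moment map $\nabla A(\theta_i)=\mu_i=\E_{\pi_{\theta_i}}T(X)$, the inverse-gradient relation $\nabla A^*(\mu_i)=(\nabla A)^{-1}(\mu_i)=\theta_i$ valid in a minimal regular family, and the Fenchel equality $A(\theta_i)+A^*(\mu_i)=\inner{\theta_i}{\mu_i}$ that holds for each dually-coupled pair $(\theta_i,\mu_i)$. Both equalities will then be read off as two encodings of a single divergence between $\pi_{\theta_1}$ and $\pi_{\theta_2}$, expressed first in natural and then in mean coordinates.

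First I would establish the middle equality $\mathrm{D}_A(\theta_2,\theta_1)=\mathrm{KL}(\pi_{\theta_1},\pi_{\theta_2})$. Writing the log-ratio of two densities in the family gives $\log\round{\pi_{\theta_1}(x)/\pi_{\theta_2}(x)}=\inner{\theta_1-\theta_2}{T(x)}-A(\theta_1)+A(\theta_2)$, and integrating against $\pi_{\theta_1}$, so that $T(x)$ is replaced by its mean $\mu_1=\nabla A(\theta_1)$, collapses $\mathrm{KL}(\pi_{\theta_1},\pi_{\theta_2})$ to $A(\theta_2)-A(\theta_1)-\inner{\nabla A(\theta_1)}{\theta_2-\theta_1}$. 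By the definition of the Bregman divergence this is exactly $\mathrm{D}_A(\theta_2,\theta_1)$. This is also consistent with the identification $\mathrm{KL}(\pi_{\theta,\gamma_n},\pi_{\theta_n,\gamma_n})=\mathrm{D}_A(\theta_n,\theta)$ used in Section~\ref{sec: convergence of algorithm}, which provides a useful sanity check on the argument order.

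For the right-hand equality I would pass to the mean parametrisation. The plan is to substitute the Fenchel relations $A^*(\mu_i)=\inner{\theta_i}{\mu_i}-A(\theta_i)$ together with $\nabla A^*(\mu_1)=\theta_1$ directly into $\mathrm{D}_{A^*}(\mu_2,\mu_1)=A^*(\mu_2)-A^*(\mu_1)-\inner{\nabla A^*(\mu_1)}{\mu_2-\mu_1}$; the terms $\inner{\theta_1}{\mu_1}$ cancel and what remains is built only from $A(\theta_1)$, $A(\theta_2)$ and inner products of the $\theta_i$ with the $\mu_j$, which one then matches against the closed form already obtained for the KL divergence in the previous step. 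The legitimacy of every substitution is guaranteed by the differentiability and duality properties of minimal regular families recorded in Proposition~B.2 and Theorem~3.3 of \cite{wainwrightjordan2008}, which ensure that $\nabla A$ is a bijection between $\Theta$ and the interior of $\mathcal{M}$ with inverse $\nabla A^*$.

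The calculations themselves are elementary; the one step needing genuine care — and the main place an error can creep in — is the bookkeeping of the argument order when moving between the natural and mean parametrisations. One must track which coupled pair $(\theta_i,\mu_i)$ occupies the base-point slot of each Bregman divergence, i.e. the slot at which the gradient is evaluated, since it is precisely this correspondence under the Legendre transform that fixes the ordering of the arguments of $\mathrm{D}_{A^*}$ and closes the chain of equalities. Getting this mapping right is the whole content of the dual identity; the rest is a short substitution.
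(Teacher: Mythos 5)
Your first equality is handled correctly: integrating the log-ratio against $\pi_{\theta_1}$ does give $\mathrm{KL}(\pi_{\theta_1},\pi_{\theta_2})=A(\theta_2)-A(\theta_1)-\inner{\mu_1}{\theta_2-\theta_1}=\mathrm{D}_A(\theta_2,\theta_1)$. The second step, however, fails the moment the substitution you describe is actually carried out. Using $A^*(\mu_i)=\inner{\theta_i}{\mu_i}-A(\theta_i)$ and $\nabla A^*(\mu_1)=\theta_1$,
\begin{align*}
\mathrm{D}_{A^*}(\mu_2,\mu_1)&=A^*(\mu_2)-A^*(\mu_1)-\inner{\nabla A^*(\mu_1)}{\mu_2-\mu_1}\\
&=\inner{\theta_2-\theta_1}{\mu_2}+A(\theta_1)-A(\theta_2)\\
&=A(\theta_1)-A(\theta_2)-\inner{\nabla A(\theta_2)}{\theta_1-\theta_2}=\mathrm{D}_A(\theta_1,\theta_2)\,,
\end{align*}
which, by your own first computation with the indices exchanged, equals $\mathrm{KL}(\pi_{\theta_2},\pi_{\theta_1})$ --- the \emph{reversed} divergence --- and not $\mathrm{KL}(\pi_{\theta_1},\pi_{\theta_2})$. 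The terms do not ``match against the closed form already obtained'': compared with $A(\theta_2)-A(\theta_1)-\inner{\mu_1}{\theta_2-\theta_1}$, the signs of the $A$-terms are opposite and the inner product involves $\mu_2$ rather than $\mu_1$. If the claimed match did hold, $\mathrm{KL}$ would be symmetric on every minimal regular family, which is false (e.g.\ for exponential distributions, $\mathrm{KL}=\log(\lambda_1/\lambda_2)+\lambda_2/\lambda_1-1$).

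What this computation actually reveals is that under the Legendre transform the Bregman arguments \emph{swap}: $\mathrm{D}_A(\theta_2,\theta_1)=\mathrm{D}_{A^*}(\mu_1,\mu_2)$. The correct chain is therefore $\mathrm{D}_A(\theta_2,\theta_1)=\mathrm{KL}(\pi_{\theta_1},\pi_{\theta_2})=\mathrm{D}_{A^*}(\mu_1,\mu_2)$, which is the identity established in the cited reference; the paper states the lemma by citation, without proof, and the printed version has the arguments of $\mathrm{D}_{A^*}$ transposed (a typo that is invisible in the Gaussian case $A(\theta)=\|\theta\|^2/2$ used in the paper's main algorithm, where both divergences are symmetric, but wrong in general, e.g.\ for the Gamma and Beta families of the appendices). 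You yourself flagged the bookkeeping of the base-point slot as ``the whole content of the dual identity'', but then asserted the match instead of verifying it; completing your own substitution disproves the equality as printed and forces the right-hand side to be $\mathrm{D}_{A^*}(\mu_1,\mu_2)$.
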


	\subsection{Exponential dispersion models}
	\label{sub:edm_details}

	Consider an exponential family, with base measure $\upsilon$, canonical
	statistic $T:\R^d\rightarrow \R$,
	and cumulant (log-partition) function $A$. We may extend this family by
	considering an EDM (exponential dispersion model) family
	as follows \citep{Jorgensen:1987}.
	Consider values $\gamma\in\Gamma\subset\mathbb{R}_{+}$ such that
	$\gamma^{-1} A(\theta)$ is the cumulant function of some ($\gamma$-dependent)
	probability density $\upsilon_{\gamma}$, and define:
	\[
		\pi_{\theta,\gamma}^{\star}(x) \coloneq
		\exp\left\{\langle\theta,T(x)\rangle-\gamma^{-1} A(\theta)\right\}
		\upsilon_{\gamma}(x)\,.
	\]
	If we assume $T(x)=x$, and apply the change of variable $x'= \gamma \times x$,
	we obtain the following EDM distribution associated with $(\theta, \gamma)$:
	\[
		\pi_{\theta,\gamma}(x') \coloneq
		\exp\left\{\frac{1}{\gamma}[\langle\theta,x'\rangle - A(\theta)]\right\}
		\upsilon'_{\gamma}(x')\,,
		\quad \upsilon'_\gamma\coloneq \gamma^{-d} \upsilon_\gamma \,.
	\]
	This family is of interest to us because for $\theta\in\Theta$ all
	moments of this distribution exist:
	\begin{gather*}
		\mu(\theta)  =\mathbb{E}_{\pi_{\theta,\gamma}}\left(X\right)=\nabla_{\theta}A(\theta)\\
		\var\big(X\big)  =\gamma\nabla_{\theta}^{2}A(\theta)
	\end{gather*}
	therefore implying concentration of $\pi_{\theta,\gamma}$
	on $\mu(\theta)$ as $\gamma\downarrow0$.
	It is then possible to use the corresponding EDM within
	\cref{alg:main_det_algo}, as explained in \cref{sub:beyond_gauss}.

	The general case ($T(x)\neq x$) may be worked out along the same lines. In
	practice, however, it is often simpler to re-express the model as a family of
	probability distribution for variable $x'=T(x)$ first. Next section works out
	how to derive an EDM family from Wishart distributions.

	\subsection{EDM families and Wishart distributions}
	\label{sub:wishart-example}
    Let $\Sigma$ be a $d\times d$ symmetric positive matrix, and $\nu > d-1$. A
Wishart distribution with parameter $(\Sigma, \nu)$ is a distribution defined
over the set of positive definite matrices $Y$ of size $d\times d$, with density:
\begin{equation}
  \label{eq:wishart_density}
  \pi_\theta(Y) =
 \frac{\exp\left\{-\frac{1}{2}{\mathrm Tr}\big(\Sigma^{-1}Y\big)
   - \frac{\nu}{2} \log |\Sigma| \right\}
 |Y|^{\frac{\nu-d-1}{2}}}{2^{\frac{\nu
d}{2}}\Gamma_{d}(\nu/2)}
\end{equation}
where the supporting measure is the Lebesgue measure $\mathrm{\upsilon}$ on
$\R^{d(d+1)/2}$, to account for the symmetry of $Y$.
\begin{prop}
\label{prop:rescale_wishart_to_edm}
    
Let $Y$ be distributed as in (\ref{eq:wishart_density}). Let $0<\gamma=\nu^{-1}$ and apply the change of variable
$X=\gamma Y$. Then, with $\theta:=-\frac{1}{2}\Sigma^{-1}$ and $A(\theta):=-\frac{1}{2}\log|-\theta|$ , one obtains that $X$ has density with respect to the Lebesgue measure $\mathrm{\upsilon}$ given by
\begin{equation}
\label{eq:rescaled_wishart}
 \pi_{\theta,\gamma}(X)=\frac{\exp\crl{\gamma^{-1}\sqrd{\inner{\theta}{X}_F-A(\theta)}}|X|^{\frac{\gamma^{-1}-d-1}{2}}}{\gamma^{\frac{d}{2\gamma}}\Gamma_{d}(1/2\gamma)}.
\end{equation}
Moreover, $\nabla_{\theta} A(\theta)=\Sigma$.
\end{prop}

\begin{proof}
    
Applying the change of variable to (\ref{eq:wishart_density}) one clearly obtains the density
\begin{align*}
  \pi_{\theta, \gamma}(X) &= 
 \frac{\exp\left\{-\gamma^{-1}\frac{1}{2}\mathrm {Tr}\big(\Sigma^{-1}X\big)
   - \gamma^{-1}\frac{1}{2} \log |\Sigma| \right\}
 \gamma^{-\frac{d(\nu-d-1)}{2}}|X|^{\frac{\nu-d-1}{2}}}{2^{\frac{\nu d}{2}}\Gamma_{d}(\nu/2)}\times \nu^{d(d+1)/2}\\
 &= \frac{\exp\left\{-\gamma^{-1}\frac{1}{2}\mathrm {Tr}\big(\Sigma^{-1}X\big)
   - \gamma^{-1}\frac{1}{2} \log |\Sigma| \right\}
 \gamma^{-\frac{d(\nu-d-1)}{2}}|X|^{\frac{\nu-d-1}{2}}}{2^{\frac{\nu d}{2}}\Gamma_{d}(\nu/2)}\times \gamma^{-d(d+1)/2}\\
 &= \frac{\exp\crl{-\gamma^{-1}\frac{1}{2}\inner{\Sigma^{-1}}{X}_F-\gamma^{-1}\frac{1}{2}\log|\Sigma|}|X|^{\frac{\nu-d-1}{2}}}{\gamma^{\frac{\nu d}{2}} 2^{\frac{\nu d}{2}}\Gamma_{d}(\nu/2)}\\
 &=\frac{\exp\crl{\gamma^{-1}\inner{-\frac{1}{2}\Sigma^{-1}}{X}_F-\gamma^{-1}\frac{1}{2}\log|\Sigma|}|X|^{\frac{\nu-d-1}{2}}}{(2\gamma)^{\frac{\nu d}{2}}\Gamma_{d}(\nu/2)}\\
 &=\frac{\exp\crl{\gamma^{-1}\sqrd{\inner{-\frac{1}{2}\Sigma^{-1}}{X}_F-\frac{1}{2}\log|\Sigma|}}|X|^{\frac{\nu-d-1}{2}}}{(2\gamma)^{\frac{\nu d}{2}}\Gamma_{d}(\nu/2)}.
\end{align*}
where the term $\nu^{d(d+1)/2}$ on the first row is the Jacobian coming from rescaling of the Lebesgue measure after the change of variable. Now note that
\begin{align*}
   -\log|\Sigma| = \log|\Sigma^{-1}| = \log(2^d)+\log|\frac{1}{2}\Sigma^{-1}|
\end{align*}

and $$e^{-\frac{1}{2\gamma}\log(2^d)}=\round{\frac{1}{2^d}}^{\frac{1}{2\gamma}}=\frac{1}{2^{\frac{d}{2\gamma}}}.$$
Using the above identities one can write
\begin{align*}
\frac{\exp\crl{\gamma^{-1}\sqrd{\inner{-\frac{1}{2}\Sigma^{-1}}{X}_F-\frac{1}{2}\log|\Sigma|}}|X|^{\frac{\nu-d-1}{2}}}{(2\gamma)^{\frac{\nu d}{2}}\Gamma_{d}(\nu/2)}&=\frac{\exp\crl{\gamma^{-1}\sqrd{\inner{-\frac{1}{2}\Sigma^{-1}}{X}_F-\round{-\frac{1}{2}\log|\frac{1}{2}\Sigma^{-1}|}}}|X|^{\frac{\nu-d-1}{2}}}{\frac{1}{2^{d/(2\gamma)}}(2\gamma)^{\frac{\nu d}{2}}\Gamma_{d}(\nu/2)}\\
&=\frac{\exp\crl{\gamma^{-1}\sqrd{\inner{-\frac{1}{2}\Sigma^{-1}}{X}_F-\round{-\frac{1}{2}\log|\frac{1}{2}\Sigma^{-1}|}}}|X|^{\frac{\nu-d-1}{2}}}{\gamma^{\frac{\nu d}{2}}\Gamma_{d}(\nu/2)}
\end{align*}
where the last equality holds as $\nu=\gamma^{-1}$.

Letting $\theta:=-\frac{1}{2}\Sigma^{-1}$, we see that for $A(\theta):=-\frac{1}{2}\log|-\theta|$ one can also write
\begin{align*}
    \frac{\exp\crl{\gamma^{-1}\sqrd{\inner{-\frac{1}{2}\Sigma^{-1}}{X}_F-\round{-\frac{1}{2}\log|\frac{1}{2}\Sigma^{-1}|}}}|X|^{\frac{\nu-d-1}{2}}}{\gamma^{\frac{\nu d}{2}}\Gamma_{d}(\nu/2)}=\frac{\exp\crl{\gamma^{-1}\sqrd{\inner{\theta}{X}_F-A(\theta)}}|X|^{\frac{\nu-d-1}{2}}}{\gamma^{\frac{\nu d}{2}}\Gamma_{d}(\nu/2)}
\end{align*}
This shows that under the rescaling we have
\begin{align*}
    \pi_{\theta,\gamma}(X)=\frac{\exp\crl{\gamma^{-1}\sqrd{\inner{\theta}{X}_F-A(\theta)}}|X|^{\frac{\nu-d-1}{2}}}{\gamma^{\frac{\nu d}{2}}\Gamma_{d}(\nu/2)}=\frac{\exp\crl{\gamma^{-1}\sqrd{\inner{\theta}{X}_F-A(\theta)}}|X|^{\frac{\gamma^{-1}-d-1}{2}}}{\gamma^{\frac{d}{2\gamma}}\Gamma_{d}(1/2\gamma)}.
\end{align*}

For the last statement, use the property that for any positive symmetric matrix $B$ $$\nabla_{B}\log|B|=(B^{-1})^T$$ to note that (using $\theta=-\frac{1}{2}\Sigma^{-1}$ and  symmetry)
\begin{align*}
    \nabla_{\theta} A(\theta)=-\frac{1}{2}\nabla_{\theta}\log|-\theta| 
     =-\frac{1}{2}\nabla_{\theta}\log|\frac{1}{2}\Sigma^{-1}|
    =\frac{1}{2}\nabla_{-\theta}\log |\frac{1}{2}\Sigma^{-1}|
    &=\frac{1}{2}\nabla_{\frac{1}{2}\Sigma^{-1}}\log |\frac{1}{2}\Sigma^{-1}|\\
    &=\frac{1}{2}\sqrd{\round{\frac{1}{2}\Sigma^{-1}}^{-1}}^T\\
    &=\Sigma.
\end{align*}

\end{proof}

Note that $\pi_{\theta,\gamma}(X)$ can be viewed as an EDM family, with parameter $(\theta, \gamma)$.
 Using standard properties of Wishart distributions,  we have
$\E[X]=\Sigma$, $\operatorname{var}[X_{i, j}] = \gamma (\Sigma_{ij}^2 +
\Sigma_{ii}\Sigma{jj})$, and concentration occurs as $\gamma\to 0$.
The important point here is that it is the family of \emph{rescaled}
Wishart distributions (i.e., distributions for $X=\gamma Y=Y/\nu$) 
which may be viewed as an EDM, not the original family.
In this
case, not all values of $\gamma$ are permitted; i.e. since $\gamma=1/\nu$ and
$\nu > d -1$, one has $\gamma < 1/(d-1)$, but that does not pose any practical
problem.

	\section{Lower-semicontinuity and epigraphs}
	\label{app:lsc_epigraphs}

	Lower semi-continuity can be also defined as a property of certain sets. We report this characterisation here.

	\begin{definition}
		Let $f:\R^d\to \R$ be a function. We denote
		\begin{enumerate}
			\item The epigraph set of $f$ by
			      \begin{equation*}
				      \mathrm{epi}(f):=\crl{(\theta,a)\in \R^d\times \R; \ f(\theta)\leq a}.
			      \end{equation*}
			\item
			      The hypograph set of $f$ by
			      \begin{equation*}
				      \mathrm{hypo}(f):=\crl{(\theta,a)\in \R^d\times \R; \ f(\theta)\geq a }.
			      \end{equation*}
			\item The level sets of $f$ by
			      $$\mathrm{lev}_{\leq \alpha}(f):=\crl{\theta\in \R^d; \ f(\theta)\leq \alpha}, \ \  \alpha\in \R.$$
		\end{enumerate}
	\end{definition}

	The following theorem characterises lower semi-continuity in terms of epigraphs and level-sets. A similar result can be stated in terms of upper semi-continuity and hypographs.
	\begin{thm}{\cite[Theorem 1.6]{RockWets98}}
		For a function $f:\R^d\to \R$, the following statements are equivalent
		\begin{itemize}
			\item $f$ is lower semi-continuous on $\R^d$
			\item Its epigraph set $\mathrm{epi}(f)$
			      is closed in $\R^d\times \R$
			\item Level sets $\mathrm{lev}_{\leq a}$
			      are closed in $\R^d$, for each $a\in \R$.
		\end{itemize}
	\end{thm}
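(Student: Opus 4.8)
The plan is to label the three conditions $(1)$, $(2)$, $(3)$ in the order stated and to establish the cycle of implications $(1)\Rightarrow(2)\Rightarrow(3)\Rightarrow(1)$. Throughout I would exploit the sequential reformulation of the functional $\liminf$ from the definition given earlier: for any sequence $\theta_n\to\theta_0$ and any $\epsilon>0$ the points $\theta_n$ eventually lie in $\mathbb{B}(\theta_0,\epsilon)$, so $\liminf_n f(\theta_n)\geq\inf\crl{f(\theta);\ \theta\in\mathbb{B}(\theta_0,\epsilon)}$, and taking the supremum over $\epsilon>0$ gives $\liminf_n f(\theta_n)\geq\liminf_{\theta\to\theta_0}f(\theta)$. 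Consequently the lower-semicontinuity condition $\liminf_{\theta\to\theta_0}f(\theta)\geq f(\theta_0)$ is equivalent to the statement that $\liminf_n f(\theta_n)\geq f(\theta_0)$ for every sequence $\theta_n\to\theta_0$.

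For $(1)\Rightarrow(2)$ I would take an arbitrary convergent sequence $(\theta_n,a_n)\to(\theta,a)$ with $(\theta_n,a_n)\in\mathrm{epi}(f)$, so that $f(\theta_n)\leq a_n$ for all $n$. Lower-semicontinuity at $\theta$ then yields $f(\theta)\leq\liminf_n f(\theta_n)\leq\liminf_n a_n=a$, whence $(\theta,a)\in\mathrm{epi}(f)$. Since $\R^d\times\R$ is metrisable, sequential closedness coincides with closedness, so $\mathrm{epi}(f)$ is closed.

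For $(2)\Rightarrow(3)$ I would fix $a\in\R$ and observe that $\mathrm{lev}_{\leq a}(f)=\crl{\theta\in\R^d;\ (\theta,a)\in\mathrm{epi}(f)}$ is precisely the preimage of $\mathrm{epi}(f)$ under the continuous embedding $\theta\mapsto(\theta,a)$, hence closed as the preimage of a closed set. For $(3)\Rightarrow(1)$ I would argue by contradiction: if $f$ failed to be lsc at some $\theta_0$, the sequential characterisation gives a sequence $\theta_n\to\theta_0$ with $\liminf_n f(\theta_n)<f(\theta_0)$; choosing any $a$ strictly between these two values, infinitely many $\theta_n$ satisfy $f(\theta_n)<a$ and thus lie in the closed set $\mathrm{lev}_{\leq a}(f)$, so its limit $\theta_0$ lies there too, forcing $f(\theta_0)\leq a<f(\theta_0)$, a contradiction.

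I do not expect a genuine analytical obstacle here, since the argument is purely topological. The only points requiring a little care are the reconciliation of the two forms of $\liminf$ flagged in the first paragraph, and the observation in $(2)\Rightarrow(3)$ that one must \emph{slice} the epigraph at a fixed height $a$ rather than project it onto $\R^d$, as it is the slice (a preimage under a continuous map) that preserves closedness.
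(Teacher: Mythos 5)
The paper does not prove this statement at all: it is imported verbatim as \cite[Theorem 1.6]{RockWets98}, so there is no internal proof to compare against. Your cycle $(1)\Rightarrow(2)\Rightarrow(3)\Rightarrow(1)$ is the standard textbook argument and each implication is correct: the epigraph step uses sequential closedness in the metric space $\R^d\times\R$, the level-set step correctly slices (rather than projects) the epigraph via the continuous embedding $\theta\mapsto(\theta,a)$, and the contradiction in $(3)\Rightarrow(1)$ is sound even when $\liminf_n f(\theta_n)=-\infty$, since one only needs some $a<f(\theta_0)$ capturing infinitely many $\theta_n$. The one place where you assert slightly more than you prove is the claimed \emph{equivalence} of the ball-based $\liminf_{\theta\to\theta_0}f(\theta)\geq f(\theta_0)$ with the sequential condition $\liminf_n f(\theta_n)\geq f(\theta_0)$ for all $\theta_n\to\theta_0$: you establish the inequality $\liminf_n f(\theta_n)\geq\liminf_{\theta\to\theta_0}f(\theta)$, which gives one direction, but the converse (needed so that failure of the ball-based condition produces a witnessing sequence for $(3)\Rightarrow(1)$) requires extracting $\theta_n\in\mathbb{B}(\theta_0,1/n)$ with $f(\theta_n)$ within $1/n$ of the ball infimum; this is routine and you flag it as the point needing care, so I regard the proof as complete in substance.
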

	We report two additional examples which constitute a relevant theoretical tool in this manuscript. More details on the quantities below can be found in \citet[Chapter 1, Section D]{RockWets98}.
	\begin{definition}
		\label{def: epi_closure}
		Consider a function $f:\R^d\to \R$. We define the epigraphical closure of $f$ as the function $$\theta\mapsto \mathrm{cl_e} \ f (\theta):= \inf_{\theta_k\to \theta}\liminf_k f(\theta_k)$$
		and the hypographical closure $f$ as
		$$\theta\mapsto \mathrm{cl_h} f(\theta):=\sup_{\theta_k\to \theta}\limsup_k f(\theta_k).$$
	\end{definition}

	\begin{prop}
		\label{prop: epi_closure}
		Consider a function $f:\R^d\to \R$. Then,

		\begin{itemize}
			\item $\mathrm{cl_e} f$ is lower semi-continuous with $\mathrm{cl_e} f(\theta)\leq f(\theta)$
			\item$\mathrm{cl_h} f$ is upper semi-continuous and $\mathrm{cl_h} f(\theta)\geq f(\theta)$.
		\end{itemize}
		Moreover,
		\begin{itemize}
			\item When $f$ is lower semi-continuous, $f(\theta) = \mathrm{cl_e}f(\theta)$
			\item When $f$ is upper semi-continuous, $f(\theta) = \mathrm{cl_h}f(\theta)$.
		\end{itemize}
	\end{prop}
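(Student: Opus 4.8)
The plan is to reduce all four assertions to a single identity relating the epigraphical closure to the ordinary topological closure of the epigraph, and then to invoke the epigraphical characterisation of lower-semicontinuity (Theorem 1.6, cited in the text). Concretely, I would first prove
\[
\mathrm{epi}(\mathrm{cl_e} f)=\mathrm{cl}\big(\mathrm{epi}(f)\big),
\]
the closure being taken in $\R^d\times\R$. Once this is established, every statement about $\mathrm{cl_e} f$ drops out immediately, and the dual statements about $\mathrm{cl_h} f$ follow by applying the $\mathrm{cl_e}$ results to $-f$ and using $\mathrm{cl_h} f=-\mathrm{cl_e}(-f)$, which is a direct consequence of $\limsup_k(-a_k)=-\liminf_k a_k$ together with the definitions of the two closures as suprema/infima over sequences converging to $\theta$.

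To establish the epigraph identity I would argue by double inclusion. For $\mathrm{cl}(\mathrm{epi} f)\subseteq\mathrm{epi}(\mathrm{cl_e} f)$, I take $(\theta,a)$ in the closure, choose $(\theta_k,a_k)\in\mathrm{epi}(f)$ with $(\theta_k,a_k)\to(\theta,a)$, and observe that $\mathrm{cl_e} f(\theta)\leq\liminf_k f(\theta_k)\leq\liminf_k a_k=a$, directly from the definition of $\mathrm{cl_e} f$ as an infimum over sequences $\theta_k\to\theta$. For the reverse inclusion, I suppose $\mathrm{cl_e} f(\theta)\leq a$. In the strict case $\mathrm{cl_e} f(\theta)<a$ there is a sequence $\theta_k\to\theta$ with $\liminf_k f(\theta_k)<a$, hence a subsequence along which $f(\theta_{k_j})<a$, so that $(\theta_{k_j},a)\in\mathrm{epi}(f)$ converges to $(\theta,a)$. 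In the boundary case $\mathrm{cl_e} f(\theta)=a$ I apply the strict case to $(\theta,a+\epsilon)$ for each $\epsilon>0$ and let $\epsilon\downarrow 0$, using that $\mathrm{cl}(\mathrm{epi} f)$ is closed.

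With the identity in hand the four claims follow cleanly. Since $\mathrm{cl}(\mathrm{epi} f)$ is closed, $\mathrm{epi}(\mathrm{cl_e} f)$ is closed, hence $\mathrm{cl_e} f$ is lower-semicontinuous by Theorem 1.6; and the inclusion $\mathrm{epi}(f)\subseteq\mathrm{cl}(\mathrm{epi} f)=\mathrm{epi}(\mathrm{cl_e} f)$ yields $\mathrm{cl_e} f\leq f$, since a larger epigraph corresponds to a pointwise smaller function. If $f$ is already lower-semicontinuous, then $\mathrm{epi}(f)$ is closed, so $\mathrm{cl}(\mathrm{epi} f)=\mathrm{epi}(f)$ and therefore $\mathrm{cl_e} f=f$. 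The three statements for $\mathrm{cl_h} f$ are the mirror images: $\mathrm{cl_h} f$ upper-semicontinuous is equivalent to $\mathrm{cl_e}(-f)$ lower-semicontinuous, $\mathrm{cl_h} f\geq f$ to $\mathrm{cl_e}(-f)\leq -f$, and $f$ upper-semicontinuous forces $\mathrm{cl_e}(-f)=-f$, i.e.\ $\mathrm{cl_h} f=f$.

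The main obstacle I anticipate is the careful treatment of the boundary case $\mathrm{cl_e} f(\theta)=a$ in the reverse epigraph inclusion, together with the bookkeeping of possible infinite values of $\liminf_k f(\theta_k)$; this is precisely where one must lean on closedness of $\mathrm{cl}(\mathrm{epi} f)$ and a limit in the extra real coordinate rather than arguing purely pointwise. Everything else is a routine unwinding of the definitions of $\mathrm{cl_e}$, $\mathrm{cl_h}$, and $\mathrm{epi}$, and the translation between the sequence-based and $\liminf_{\theta'\to\theta}$ forms of the closures used elsewhere in the paper.
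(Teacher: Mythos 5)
Your proof is correct. The paper does not actually prove this proposition; it states it and refers to \cite[Chapter 1]{RockWets98}, so there is no in-text argument to compare against. Your route --- establishing $\mathrm{epi}(\mathrm{cl_e} f)=\mathrm{cl}(\mathrm{epi}(f))$ by double inclusion, invoking the closed-epigraph characterisation of lower-semicontinuity, and deducing the hypographical statements via $\mathrm{cl_h} f=-\mathrm{cl_e}(-f)$ --- is exactly the standard textbook argument behind the cited reference, and your handling of the boundary case $\mathrm{cl_e} f(\theta)=a$ through closedness of $\mathrm{cl}(\mathrm{epi} f)$ is the right fix. The only caveat worth recording is that $\mathrm{cl_e} f$ can take the value $-\infty$ for a general real-valued $f$, so strictly speaking the identity lives in the extended reals; under the lower-boundedness the paper assumes whenever it invokes this proposition (e.g.\ in the proof of Theorem 4.2), this issue disappears, as you note.
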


	\section{Laplace Principle}
	\label{sub: proof of Laplace principle}

	The Laplace Principle is a known result which provides a variational representation of integrals of the form $-\log\int e^{-l(x)}\pi(x)\mathrm{d}x$, where $\pi$ is a probability density and $l$ is an integrable function.
	\begin{lemma}[Laplace Principle]
		\label{lemma: laplace principle}
		Let $\nu, \pi$ be two probability measures with $\nu \ll \pi$. Let $l: \R^d\to \R$ be a locally integrable function such that $\int e^{-l(x)}\pi(x)\mathrm{d}x<\infty$ and set $\Tilde{\pi}\propto e^{-l}\pi$ to be a probability measure. It holds that
		\begin{equation}
			\label{eq: laplace principle}
			\mathrm{KL}(\nu, \Tilde{\pi})=\int l(x)\nu(x)\mathrm{d}x+\mathrm{KL}(\nu,\pi)+\log\int e^{-l(x)}\pi(x)\mathrm{d}x.
		\end{equation}
		A well-known consequence is
		\begin{equation}
			\label{eq: bayes}
			\Tilde{\pi}=\argmin_{\nu \in \mathbb{M}^1_+(\R^d)}\crl{\int l(x)\nu(x)\mathrm{d}x+\mathrm{KL}(\nu,\pi)}.
		\end{equation}
	\end{lemma}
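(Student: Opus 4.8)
The plan is to establish the identity \eqref{eq: laplace principle} by a direct computation with Lebesgue densities, and then to read off \eqref{eq: bayes} from the nonnegativity of the Kullback--Leibler divergence. First I would record the two structural facts that make the manipulation legitimate. Since $Z:=\int e^{-l(x)}\pi(x)\mathrm{d}x\in(0,\infty)$ by assumption, $\tilde{\pi}$ admits the density $\tilde{\pi}(x)=Z^{-1}e^{-l(x)}\pi(x)$, and because $e^{-l(x)}>0$ everywhere, $\pi$ and $\tilde{\pi}$ are mutually absolutely continuous. In particular the hypothesis $\nu\ll\pi$ forces $\nu\ll\tilde{\pi}$, so that $\mathrm{KL}(\nu,\tilde{\pi})$ is meaningful exactly when $\mathrm{KL}(\nu,\pi)$ is.

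The heart of the argument is the pointwise factorisation of the log-density ratio: on the support of $\nu$,
\[
\log\frac{\nu(x)}{\tilde{\pi}(x)}=\log\frac{\nu(x)}{\pi(x)}+l(x)+\log Z.
\]
Integrating this identity against $\nu$, and using that $\nu$ is a probability measure so that $\int \log Z\,\nu(x)\mathrm{d}x=\log Z$, yields
\[
\mathrm{KL}(\nu,\tilde{\pi})=\mathrm{KL}(\nu,\pi)+\int l(x)\nu(x)\mathrm{d}x+\log Z,
\]
which is precisely \eqref{eq: laplace principle}. For the consequence \eqref{eq: bayes} I would rearrange this as
\[
\int l(x)\nu(x)\mathrm{d}x+\mathrm{KL}(\nu,\pi)=\mathrm{KL}(\nu,\tilde{\pi})-\log Z,
\]
and observe that $\log Z$ is independent of $\nu$ while $\mathrm{KL}(\nu,\tilde{\pi})\geq 0$ with equality if and only if $\nu=\tilde{\pi}$. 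Hence the right-hand side, and therefore the objective, is minimised precisely at $\nu=\tilde{\pi}$, which is the claimed variational characterisation.

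The main obstacle is not the algebra but the bookkeeping of integrability needed to split the single integral $\int\nu\log(\nu/\tilde{\pi})$ into three separate terms without producing an $\infty-\infty$ indeterminacy (which can occur if $l$ is not bounded below, so that $\int l\,\mathrm{d}\nu=-\infty$ while $\mathrm{KL}(\nu,\pi)=+\infty$). The clean way to handle this is to restrict attention to those $\nu\ll\pi$ for which $\mathrm{KL}(\nu,\pi)<\infty$ and $\int l\,\mathrm{d}\nu$ is finite: for such $\nu$ all three terms are finite and the additive splitting is valid by linearity of the integral; for every other $\nu\ll\pi$ both sides of \eqref{eq: laplace principle} are $+\infty$, so the identity persists, and $\tilde{\pi}$ — for which the objective takes the finite value $-\log Z$ — remains the minimiser. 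I would also note explicitly that $Z>0$, which holds because $e^{-l}\pi$ is a strictly positive density on the support of $\pi$, guaranteeing that $\log Z$ is finite and the final rearrangement is legitimate.
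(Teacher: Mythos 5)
Your proposal is correct and follows essentially the same route as the paper: a pointwise factorisation of the log-density ratio $\log(\nu/\tilde{\pi})=\log(\nu/\pi)+l+\log Z$, integration against $\nu$, and then nonnegativity of the Kullback--Leibler divergence to deduce the variational characterisation. The only difference is that you add an explicit discussion of the integrability bookkeeping needed to split the integral, which the paper's proof passes over silently; this is a welcome refinement but not a different argument.
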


	\begin{proof}

		By direct calculations, we have
		\begin{align*}
			\mathrm{KL}(\nu,\Tilde{\pi}) & =\int \nu(x)\log\round{\frac{\nu(x)\pi(e^{-l})}{e^{-l(x)}\pi(x)}}\mathrm{d}x=                        \\
			                             & =\int \nu(x)\log\round{\frac{\nu(x)}{e^{-l(x)}\pi(x)}}\mathrm{d}x+\log \pi(e^{-l})                   \\
			                             & =\int \nu(x)\crl{\log(\nu(x))-\log(e^{-l(x)})-\log(\pi(x))}\mathrm{d}x+\log \pi(e^{-l})              \\
			                             & = \int \nu(x)\log\round{\frac{\nu(x)}{\pi(x)}}\mathrm{d}x+\int \nu(x)l(x)\mathrm{d}x+\log\pi(e^{-l})
		\end{align*}
		which leads to equation \eqref{eq: laplace principle}. This implies
		\begin{align*}
			\argmin_{\nu \in \mathbb{M}^1_+(\R^d)}\crl{\int l(x)\nu(x)\mathrm{d}x+\mathrm{KL}(\nu,\pi)}=\argmin_{\nu \in \mathbb{M}^1_+(\R^d)} \mathrm{KL}\round{\nu,\Tilde{\pi}}=\Tilde{\pi}
		\end{align*}
		as $\mathrm{KL}(\nu,\Tilde{\pi})=0$ if and only if $\nu=\Tilde{\pi}$. Hence
		relation \eqref{eq: bayes} holds.


	\end{proof}

	\section{Practical algorithms} \label{app:additional-sim-results}

	\Cref{alg:MC_version_det_algo} describes one of the practical algorithms that
	one may use to approximate the ideal \cref{alg:main_det_algo} in the Gaussian
	case. This algorithm relies on basic Monte Carlo. To use instead randomised
	quasi-Monte Carlo (as we did in our numerical experiments) in order to reduce
	the variability of the output, one may use instead \cref{alg:QMC_version}.
	For an overview of RQMC, see, e.g., the book of \cite{Lemieux_qmc_book}.

	\begin{algorithm}[t]
		\caption{Gradient-Free Algorithm, Monte Carlo variant (Gaussian case)}
		\label{alg:MC_version_det_algo}
		\begin{algorithmic}
			\Require objective function $l$, initial parameter $\theta_0$, stepsizes
			$(\gamma_k)_{k\geq 0}$, Monte Carlo sample size $N$.
			\While{$n \geq 0$}
			\State Sample  $X_n^i\sim N\left(\theta_n, \gamma_n I_d\right)$ for
			$i=1,\dots,N$
			\Comment{Monte Carlo approximation of $\pi_n$}
			\State  $w_{n}^{i} \gets \exp\left\{- l(X_n^i)\right\}$ for $i=1,\dots,N$
			\Comment{weighted sample $(X_n^i, w_{n}^i)_{i=1,\dots,N}$ approximates
				$\tilde{\pi}_{n+1}$}
			\State $\theta_{n+1} \gets \sum_{i=1}^N w_{n}^i X_n^i / \sum_{j=1}^N w_{n}^j$
			\Comment{Maximum likelihood estimation for Gaussian $N(\theta, \gamma_n I)$}
			\EndWhile
			\Ensure sequence of distributions $\tilde{\pi}_n$ and parameters $\theta_n$.
		\end{algorithmic}

	\end{algorithm}

	\begin{algorithm}[t]
		\caption{Gradient-Free Algorithm, RQMC (randomised quasi-Monte Carlo) variant (Gaussian case)}
		\label{alg:QMC_version}

		\begin{algorithmic}
			\Require objective function $l$, initial parameter $\theta_0$, stepsizes
			$(\gamma_k)_{k\geq 0}$, Monte Carlo sample size $N=2^k$.
			\While{$n \geq 0$}
			\State Generate RQMC (e.g., scrambled Sobol) point set $(V_n^1, \dots,
				V_n^N)$ of length $N$, dimension $d$.
			\State $X_n^i\gets \theta_n + \Phi^{-1}(V_n^i)$ for $i=1,\dots,N$
			\Comment{$\Phi^{-1}(v)=\left(\Phi^{-1}(v_1),\dots, \Phi^{-1}(v_d)\right)$}
			\State  $w_{n}^{i} \gets \exp\left\{- l(X_n^i)\right\}$ for $i=1,\dots,N$
			\Comment{weighted sample $(X_n^i, w_{n}^i)_{i=1,\dots,N}$ approximates
				$\tilde{\pi}_{n+1}$}
			\State $\theta_{n+1} \gets \sum_{i=1}^N w_{n}^i X_n^i / \sum_{j=1}^N w_{n}^j$
			\Comment{Maximum likelihood estimate for model $N(\theta, \gamma_n I)$}
			\EndWhile
			\Ensure sequence of distributions $\tilde{\pi}_n$ and parameters $\theta_n$.
		\end{algorithmic}

	\end{algorithm}


	To extend \cref{alg:MC_version_det_algo,alg:QMC_version} to other
	(non-Gaussian) EDM families (see \cref{sub:edm_details}), one may adapt the
	expression  for the estimate $\theta_{n+1}$ as follows:
	
	\begin{equation*}
		\theta_{n+1} \gets (\nabla A)^{-1}\left( \frac{\sum_{i=1}^N w_n^i
				T(X_n^i)}{\sum_{i=1}^N w_n^i}\right).
	\end{equation*}

\end{appendix}

\bibliography{complete}       

@article{dorigo1992ottimizzazione,
  title={Ottimizzazione, apprendimento automatico, ed algoritmi basati su metafora naturale (Optimization, Learning and Natural Algorithms)},
  author={Dorigo, Marco},
  journal={Doctorate in Systems and Information Electronic Engineering, Politecnico di Milano, Italy \url{https://iridia.ulb.ac.be/~mdorigo/Published_papers/}},
  year={1992},
eprint ={https://iridia.ulb.ac.be/~mdorigo/Published_papers/}
}

@article{rubinstein1999cross,
  title={The cross-entropy method for combinatorial and continuous optimization},
  author={Rubinstein, Reuven},
  journal={Methodology and computing in applied probability},
  volume={1},
  number={2},
  pages={127--190},
  year={1999},
  publisher={Springer}
}

@inproceedings{Bonet1996MIMICFO,
  title={MIMIC: Finding Optima by Estimating Probability Densities},
  author={Jeremy S. De Bonet and Charles Lee Isbell and Paul A. Viola},
  booktitle={Neural Information Processing Systems},
  year={1996},
  url={https://api.semanticscholar.org/CorpusID:6126501}
}

@article{zlochin2004model,
  title={Model-based search for combinatorial optimization: A critical survey},
  author={Zlochin, Mark and Birattari, Mauro and Meuleau, Nicolas and Dorigo, Marco},
  journal={Annals of Operations Research},
  volume={131},
  number={1},
  pages={373--395},
  year={2004},
  publisher={Springer}
}

@article{
doi:10.1073/pnas.0603181103,
author = {E. L. Ionides  and C. Bret  and A. A. King },
title = {Inference for nonlinear dynamical systems},
journal = {Proceedings of the National Academy of Sciences},
volume = {103},
number = {49},
pages = {18438-18443},
year = {2006},
doi = {10.1073/pnas.0603181103},
URL = {https://www.pnas.org/doi/abs/10.1073/pnas.0603181103},
eprint = {https://www.pnas.org/doi/pdf/10.1073/pnas.0603181103},
abstract = {Nonlinear stochastic dynamical systems are widely used to model systems across the sciences and engineering. Such models are natural to formulate and can be analyzed mathematically and numerically. However, difficulties associated with inference from time-series data about unknown parameters in these models have been a constraint on their application. We present a new method that makes maximum likelihood estimation feasible for partially-observed nonlinear stochastic dynamical systems (also known as state-space models) where this was not previously the case. The method is based on a sequence of filtering operations which are shown to converge to a maximum likelihood parameter estimate. We make use of recent advances in nonlinear filtering in the implementation of the algorithm. We apply the method to the study of cholera in Bangladesh. We construct confidence intervals, perform residual analysis, and apply other diagnostics. Our analysis, based upon a model capturing the intrinsic nonlinear dynamics of the system, reveals some effects overlooked by previous studies.}}

@article{CETetris,
    title = {Learning {T}etris using the noisy cross-entropy method},
    author = {Szita, Istv{\'a}n and L{\"o}rincz, Andr{\'a}s},
    journal = {Neural computation},
    volume = {18},
    number = {12},
    pages = {2936--2941},
    year = {2006},
    publisher = {MIT Press},
}

@article{spokoiny:2023,
    author = {Spokoiny, Vladimir},
    title = {Dimension Free Nonasymptotic Bounds on the Accuracy of
             High-Dimensional {L}aplace Approximation},
    journal = {SIAM/ASA Journal on Uncertainty Quantification},
    volume = {11},
    number = {3},
    pages = {1044-1068},
    year = {2023},
    doi = {10.1137/22M1495688},
    URL = { https://doi.org/10.1137/22M1495688 },
    eprint = { https://doi.org/10.1137/22M1495688 },
    abstract = { Abstract. This paper aims at revisiting the classical results
                on Laplace approximation in a modern nonasymptotic and
                dimension-free form. Such an extension is motivated by
                applications to high-dimensional statistical and optimization
                problems. The established results provide explicit nonasymptotic
                bounds on the quality of a Gaussian approximation of the
                posterior distribution in total variation distance in terms of
                the so-called effective dimension \(\mathtt{p}\_{G}\) . This
                value is defined as the interplay between the information
                contained in the data and in the prior distribution. In contrast
                to prominent Bernstein‚Äìvon Mises results, the impact of the
                prior is not negligible, and it allows one to keep the effective
                dimension small or moderate even if the true parameter dimension
                is huge. We also address the issue of using a Gaussian
                approximation with inexact parameters, with the focus on
                replacing the maximum a posteriori (MAP) value by the posterior
                mean and designing the algorithm of Bayesian optimization based
                on Laplace iterations. The results are specified to the case of
                nonlinear regression. },
}

@article{Jorgensen:1987,
    abstract = {We study general properties of the class of exponential
                dispersion models, which is the multivariate generalization of
                the error distribution of Nelder and Wedderburn's (1972)
                generalized linear models. Since any given moment generating
                function generates an exponential dispersion model, there exists
                a multitude of exponential dispersion models, and some new
                examples are introduced. General results on convolution and
                asymptotic normality of exponential dispersion models are
                presented. Asymptotic theory is discussed, including a new
                small-dispersion asymptotic framework, which extends the domain
                of application of large-sample theory. Procedures for
                constructing new exponential dispersion models for correlated
                data are introduced, including models for longitudinal data and
                variance components. The results of the paper unify and
                generalize standard results for distributions such as the Poisson
                , the binomial, the negative binomial, the normal, the gamma, and
                the inverse Gaussian distributions.},
    author = {Bent Jorgensen},
    date-added = {2024-04-25 09:32:43 +0100},
    date-modified = {2024-04-25 09:32:43 +0100},
    issn = {00359246},
    journal = {Journal of the Royal Statistical Society. Series B
               (Methodological)},
    number = {2},
    pages = {127--162},
    publisher = {[Royal Statistical Society, Wiley]},
    title = {Exponential Dispersion Models},
    url = {http://www.jstor.org/stable/2345415},
    urldate = {2024-04-25},
    volume = {49},
    year = {1987},
    bdsk-url-1 = {http://www.jstor.org/stable/2345415},
}

@article{ustats,
    author = {St{\'e}phan Cl{\'e}men{\c{c}}on and G{\'a}bor Lugosi and Nicolas
              Vayatis},
    title = {{Ranking and Empirical Minimization of U-statistics}},
    volume = {36},
    journal = {The Annals of Statistics},
    number = {2},
    publisher = {Institute of Mathematical Statistics},
    pages = {844 -- 874},
    keywords = {convex risk minimization, fast rates, Moment inequalities,
                Statistical learning, theory of classification, U-processes, VC
                classes},
    year = {2008},
    doi = {10.1214/009052607000000910},
    URL = {https://doi.org/10.1214/009052607000000910},
}

@inproceedings{pac_auc,
    author = {Ridgway, James and Alquier, Pierre and Chopin, Nicolas and Liang,
              Feng},
    booktitle = {Advances in Neural Information Processing Systems},
    editor = {Z. Ghahramani and M. Welling and C. Cortes and N. Lawrence and
              K.Q. Weinberger},
    pages = {},
    publisher = {Curran Associates, Inc.},
    title = {PAC-{B}ayesian {AUC} classification and scoring},
    url = {
           https://proceedings.neurips.cc/paper_files/paper/2014/file/7fe1f8abaad094e0b5cb1b01d712f708-Paper.pdf
           },
    volume = {27},
    year = {2014},
}

@book{Lemieux_qmc_book,
    AUTHOR = {Lemieux, Christiane},
    TITLE = {Monte {C}arlo and quasi-{M}onte {C}arlo sampling},
    SERIES = {Springer Series in Statistics},
    PUBLISHER = {Springer, New York},
    YEAR = {2009},
    PAGES = {xvi+373},
    ISBN = {978-0-387-78164-8},
    MRCLASS = {65C05 (60-08 60J22)},
    MRNUMBER = {2723077},
    MRREVIEWER = {Michael\ J.\ Evans},
}

@book{rubinstein2004cross,
    title = {The cross-entropy method: a unified approach to combinatorial
             optimization, Monte-Carlo simulation, and machine learning},
    author = {Rubinstein, Reuven Y and Kroese, Dirk P},
    volume = {133},
    year = {2004},
    publisher = {Springer},
}

@article{bolte2018,
    author = {Bolte, J\'{e}r\^{o}me and Sabach, Shoham and Teboulle, Marc and
              Vaisbourd, Yakov},
    title = {First Order Methods Beyond Convexity and {L}ipschitz Gradient
             Continuity with Applications to Quadratic Inverse Problems},
    journal = {SIAM Journal on Optimization},
    volume = {28},
    number = {3},
    pages = {2131-2151},
    year = {2018},
    doi = {10.1137/17M1138558},
    URL = {https://doi.org/10.1137/17M1138558},
    eprint = { https://doi.org/10.1137/17M1138558},
}

@misc{osher2023,
    title = {A {H}amilton-{J}acobi-based Proximal Operator},
    author = {Stanley Osher and Howard Heaton and Samy Wu Fung},
    year = {2023},
    eprint = {2211.12997},
    archivePrefix = {arXiv},
    primaryClass = {math.OC},
}

@article{wainwrightjordan2008,
    author = {Wainwright, Martin and Jordan, Michael},
    year = {2008},
    month = {01},
    pages = {1-305},
    title = {Graphical Models, Exponential Families, and Variational Inference},
    volume = {1},
    journal = {Foundations and Trends in Machine Learning},
    doi = {10.1561/2200000001},
}

@inproceedings{nielsen_nock_2010,
    author = {Nielsen, Frank and Nock, Richard},
    year = {2010},
    month = {10},
    pages = {3621-3624},
    title = {Entropies and cross-entropies of exponential families},
    journal = {Proceedings / ICIP ... International Conference on Image
               Processing},
    doi = {10.1109/ICIP.2010.5652054},
}

@article{Ermoliev1995,
    author = {Ermoliev, Yuri M. and Norkin, Vladimir I. and Wets, Roger J-B.},
    title = {The Minimization of Semicontinuous Functions: Mollifier
             Subgradients},
    journal = {SIAM Journal on Control and Optimization},
    volume = {33},
    number = {1},
    pages = {149-167},
    year = {1995},
    doi = {10.1137/S0363012992238369},
    URL = { https://doi.org/10.1137/S0363012992238369 },
    eprint = { https://doi.org/10.1137/S0363012992238369 },
}

@article{gupal_norkin,
    author = {Gupal, A. and Norkin, Vladimir},
    year = {1977},
    month = {03},
    pages = {220-223},
    title = {Algorithm for the minimization of discontinuous functions},
    volume = {13},
    journal = {Cybernetics and Systems Analysis - CYBERN SYST ANAL-ENGL TR},
    doi = {10.1007/BF01073313},
}

@book{jorgensen1997theory,
    title = {The Theory of Dispersion Models},
    author = {Jorgensen, B.},
    isbn = {9780412997112},
    lccn = {97198569},
    series = {Chapman \& Hall/CRC Monographs on Statistics \& Applied
              Probability},
    url = {https://books.google.it/books?id=0gO7bgs\_eSYC},
    year = {1997},
    publisher = {Taylor \& Francis},
}

@misc{osher2019laplacian,
    title = {Laplacian Smoothing Gradient Descent},
    author = {Stanley Osher and Bao Wang and Penghang Yin and Xiyang Luo and
              Farzin Barekat and Minh Pham and Alex Lin},
    year = {2019},
    eprint = {1806.06317},
    archivePrefix = {arXiv},
    primaryClass = {cs.LG},
}

@book{RockWets98,
    Title = {Variational Analysis},
    Author = {{R. Tyrrell} Rockafellar and Roger J.-B. Wets},
    Publisher = {Springer Verlag},
    Year = {1998},
    Address = {Heidelberg, Berlin, New York},
}

@book{attouch1984variational,
    title = {Variational Convergence for Functions and Operators},
    author = {Attouch, H.},
    isbn = {9780273085836},
    lccn = {84020602},
    series = {Applicable mathematics series},
    url = {https://books.google.it/books?id=oxGoAAAAIAAJ},
    year = {1984},
    publisher = {Pitman Advanced Pub. Program},
}

@misc{tibshirani2024,
    title = {Laplace Meets {M}oreau: Smooth Approximation to Infimal
             Convolutions Using {L}aplace's Method},
    author = {Ryan J. Tibshirani and Samy Wu Fung and Howard Heaton and Stanley
              Osher},
    year = {2024},
    eprint = {2406.02003},
    archivePrefix = {arXiv},
    primaryClass = {math.OC},
    url = {https://arxiv.org/abs/2406.02003},
}

@article{vanderVaart2012,
    author = {B.J.K. Kleijn and A.W. van der Vaart},
    title = {{The Bernstein-Von-Mises theorem under misspecification}},
    volume = {6},
    journal = {Electronic Journal of Statistics},
    number = {none},
    publisher = {Institute of Mathematical Statistics and Bernoulli Society},
    pages = {354 -- 381},
    keywords = {consistency, Credible set, limit distribution, misspecification,
                posterior distribution, rate of convergence},
    year = {2012},
    doi = {10.1214/12-EJS675},
    URL = {https://doi.org/10.1214/12-EJS675},
}

@misc{lu_freund_nesterov_2017,
      title={Relatively-Smooth Convex Optimization by First-Order Methods, and Applications}, 
      author={Haihao Lu and Robert M. Freund and Yurii Nesterov},
      year={2017},
      eprint={1610.05708},
      archivePrefix={arXiv},
      primaryClass={math.OC},
      url={https://arxiv.org/abs/1610.05708}, 
}

@InProceedings{watson23a,
  title = 	 {Inferring Smooth Control: Monte Carlo Posterior Policy Iteration with Gaussian Processes},
  author =       {Watson, Joe and Peters, Jan},
  booktitle = 	 {Proceedings of The 6th Conference on Robot Learning},
  pages = 	 {67--79},
  year = 	 {2023},
  editor = 	 {Liu, Karen and Kulic, Dana and Ichnowski, Jeff},
  volume = 	 {205},
  series = 	 {Proceedings of Machine Learning Research},
  month = 	 {14--18 Dec},
  publisher =    {PMLR},
  pdf = 	 {https://proceedings.mlr.press/v205/watson23a/watson23a.pdf},
  url = 	 {https://proceedings.mlr.press/v205/watson23a.html}
}

@misc{belousov2018,
      title={f-Divergence constrained policy improvement}, 
      author={Boris Belousov and Jan Peters},
      year={2018},
      eprint={1801.00056},
      archivePrefix={arXiv},
      primaryClass={cs.LG},
      url={https://arxiv.org/abs/1801.00056}, 
}

@inproceedings{abdolmaleki_2015,
 author = {Abdolmaleki, Abbas and Lioutikov, Rudolf and Peters, Jan R and Lau, Nuno and Pualo Reis, Luis and Neumann, Gerhard},
 booktitle = {Advances in Neural Information Processing Systems},
 editor = {C. Cortes and N. Lawrence and D. Lee and M. Sugiyama and R. Garnett},
 pages = {},
 publisher = {Curran Associates, Inc.},
 title = {Model-Based Relative Entropy Stochastic Search},
 url = {https://proceedings.neurips.cc/paper_files/paper/2015/file/36ac8e558ac7690b6f44e2cb5ef93322-Paper.pdf},
 volume = {28},
 year = {2015}
}

@book{Deisenroth_et_al_2013,
author = {Deisenroth, Marc and Neumann, Gerhard and Peters, Jan},
publisher = {Foundations and Trends in Robotics},
year = {2013},
month = {08},
pages = {},
title = {A Survey on Policy Search for Robotics},
volume = {2}
}

@ARTICLE{Ionides2011_IF,
  title     = "Iterated filtering",
  author    = "Ionides, Edward L and Bhadra, Anindya and Atchad{\'e}, Yves and
               King, Aaron",
  journal   = "Ann. Stat.",
  publisher = "Institute of Mathematical Statistics",
  volume    =  39,
  number    =  3,
  pages     = "1776--1802",
  month     =  jun,
  year      =  2011,
  language  = "en"
}


\end{document}